\pdfoutput=1

\documentclass[acmsmall,screen,nonacm]{acmart}
\usepackage[disable]{todonotes}

\acmJournal{PACMPL}
\acmVolume{1}
\acmNumber{POPL} %
\acmArticle{1}
\acmYear{2027}
\acmMonth{1}
\acmDOI{} %
\startPage{1}

\setcopyright{none}

\usepackage{booktabs}   %
\usepackage{subcaption} %
\usepackage{import}     %

\usepackage[utf8]{inputenc}
\usepackage[english]{babel}
\usepackage{xparse}
\usepackage{xspace}
\usepackage{thmtools}
\usepackage{xcolor}
\usepackage{mathpartir}
\usepackage[]{microtype}
\usepackage{multirow}
\usepackage{makecell}
\usepackage{marvosym}
\usepackage{wasysym}
\usepackage{pifont}
\usepackage{mathtools}
\usepackage{stmaryrd}
\usepackage{scalerel}
\usepackage[nomath,notext]{stix} %
\usepackage{tensor}
\usepackage{xifthen}
\usepackage{iris}
\usepackage{prob}
\usepackage{pftools}
\usepackage{soul}
\usepackage{tikz}
\usepackage{threeparttable}
\usepackage{enumitem}
\usepackage{csquotes}
\usepackage{heaplang}
\usepackage[nameinlink]{cleveref}
\usepackage{bussproofs}
\usepackage[normalem]{ulem}
\usepackage{xfrac}
\usepackage{wrapfig}
\usetikzlibrary{calc, shapes, arrows, automata, patterns}

\makeatletter
\addto\extrasenglish{%
  \renewcommand*\chapterautorefname{\S\@gobble}
  \renewcommand*\sectionautorefname{\S\@gobble}
  \renewcommand*\subsectionautorefname{\S\@gobble}
  }
\makeatother
\newcommand*{\myeqref}[2][]{%
  \hyperref[{#2}]{#1(\ref*{#2})}%
}

\newcommand*{\lineref}[1]{\hyperref[#1]{line~\ref*{#1}}}
\newcommand*{\linerangeref}[2]{\hyperref[#1]{lines~\ref*{#1}}\hyperref[#2]{-\ref*{#2}}}
\newcommand{\mypageref}[1]{\hyperref[#1]{page~\pageref*{#1}}}

\crefname{section}{\S\!\!}{\S\!\!}
\crefname{subsection}{\S\!\!}{\S\!\!}

\DeclareUnicodeCharacter{2264}{$\le$}
\DeclareUnicodeCharacter{2208}{$\in$}
\DeclareUnicodeCharacter{2203}{$\exists$}
\DeclareUnicodeCharacter{25C1}{$\triangleleft$}
\DeclareUnicodeCharacter{2097}{${}_l$}
\DeclareUnicodeCharacter{2260}{$\neq$}
\DeclareUnicodeCharacter{2205}{$\emptyset$}
\DeclareUnicodeCharacter{222A}{$\cup$}
\DeclareUnicodeCharacter{228E}{$\uplus$}
\DeclareUnicodeCharacter{2200}{$\forall$}
\DeclareUnicodeCharacter{25A1}{$\always$}
\DeclareUnicodeCharacter{03B3}{$\gamma$}

\makeatletter
\providecommand*{\Dashv}{%
  \mathrel{%
    \mathpalette\@Dashv\vDash
  }%
}
\newcommand*{\@Dashv}[2]{%
  \reflectbox{$\m@th#1#2$}%
}
\makeatother
\makeatletter %
\def\arcr{\@arraycr}
\makeatother

\newcommand\ie{\emph{i.e.},\xspace}
\newcommand\eg{\emph{e.g.},\xspace}

\newcommand\wrt{\emph{w.r.t.}\xspace}

\makeatletter
\def\@parfont{\bfseries\itshape}
\makeatother

\newcommand{\mathleftalign}[1]{#1\hspace*{\linewidth-\widthof{\ensuremath{#1}}}}

\DeclareFontEncoding{LS1}{}{}
\DeclareFontSubstitution{LS1}{stix}{m}{n}
\DeclareSymbolFont{arrows2}{LS1}{stixsf}{m}{it}
\DeclareMathSymbol{\toea}{\mathrel}{arrows2}{"C1}

\newcommand{\phytr}{\vv{t}}
\newcommand{\anntr}{\vv{t}\!'}
\newcommand{\proph}[3][]{\ifblank{#1}{\textlog{proph}\!\left(#2,#3\right)}{\textlog{proph}^{#1}\!\left(#2,#3\right)}}
\newcommand{\prophtrace}[4][]{\ifblank{#1}{\textlog{prophTrace}\!\left(#2,#3,#4\right)}{\textlog{prophTrace}_{#1}\!\left(#2,#3,#4\right)}}
\newcommand{\ghostToken}[1]{\textlog{token}\left(#1\right)}

\newcommand{\ownstate}{\textlog S_{\!\authfrag}}
\newcommand{\cfgsafename}{\textlog{cfg-safe}}
\newcommand{\safename}{\textlog{safe}}

\newcommand{\noforkname}{\textlog{seq}}

\newcommand{\linname}{\textlog{lin}}
\newcommand{\trlinname}{\textlog{trlin}}
\newcommand{\atomspecname}{\textlog{atom-spec}}
\newcommand{\tracespecname}{\textlog{trace-spec}}
\newcommand{\mgcspecname}{\textlog{mgc-spec}}
\newcommand{\traceIsname}{\textlog{traceIs}}
\newcommand{\traceInvname}{\textlog{traceInv}}
\def\GFP #1.{\text{gfp}~#1.\spac}
\def\LFP #1.{\text{lfp}~#1.\spac}
\renewcommand{\emptyset}{\varnothing}
\renewcommand{\lbag}{\{\!|}
\renewcommand{\rbag}{|\!\}}
\newcommand{\ltrans}{\raisebox{0.3ex}{\scalebox{0.5}{\ensuremath{\ll}}}}
\newcommand{\rtrans}{\raisebox{0.3ex}{\scalebox{0.5}{\ensuremath{\gg}}}}
\newcommand{\IsP}{\textlog{IsP}}
\newcommand{\PState}{\textlog{PState}}
\newcommand{\MgcInv}{\textlog{MgcInv}}
\newcommand{\AU}{\textlog{AU}}
\newcommand{\lin}[3][]{\ifblank{#1}{\linname(#2,#3)}{\linname_{#1}(#2,#3)}}
\newcommand{\trlin}[2][]{\ifblank{#1}{\trlinname(#2)}{\trlinname_{#1}(#2)}}
\newcommand{\atomspec}[5][]{\ifblank{#1}{\atomspecname(#2,#3,#4,#5)}{\atomspecname_{#1}(#2,#3,#4,#5)}}
\newcommand{\tracespec}[5][]{\ifblank{#1}{\tracespecname(#2,#3,#4,#5)}{\tracespecname_{#1}(#2,#3,#4,#5)}}
\newcommand{\mgcspec}[5][]{\ifblank{#1}{\mgcspecname(#2,#3,#4,#5)}{\mgcspecname_{#1}(#2,#3,#4,#5)}}
\newcommand{\traceIs}[2]{\traceIsname(#1,#2)}
\newcommand{\traceInv}[2]{\traceInvname(#1,#2)}

\newcommand{\heaplang}{\textsf{HeapLang}}

\newcommand{\purestep}[1][]{\xrightarrow{{#1}}_{\textlog{pure}}}
\newcommand{\basestep}[1][]{\xrightarrow{{#1}}_{\textlog{base}}}
\newcommand{\advstep}[2][]{\xrightarrow[\smash{\raisebox{2pt}{$\scriptstyle{#1}$}}]{\smash{\raisebox{-2pt}{$\scriptstyle{#2}$}}}\mathrel{\vphantom{\to}_{\textlog{adv}}}}
\newcommand{\primstep}[1][]{\xrightarrow{{#1}}_{\textlog{prim}}}

\newcommand{\advsteps}[2][]{\xrightarrow[\smash{\raisebox{2pt}{$\scriptstyle{#1}$}}]{\smash{\raisebox{-2pt}{$\scriptstyle{#2}$}}}\mathrel{\vphantom{\to}^{*}_{\textlog{adv}}}}

\newcommand{\mconfigsteps}[2][]{\xrightarrow[\smash{\raisebox{2pt}{$\scriptstyle{#1}$}}]{\smash{\raisebox{-2pt}{$\scriptstyle{#2}$}}}\mathrel{\vphantom{\to}^{*}_{\textlog{mc}}}}

\newcommand{\cfgsafe}[3]{\cfgsafename(#1,#2,#3)}
\newcommand{\safe}[3][]{\ifblank{#1}{\safename(#2,#3)}{\safename_{#1}(#2,#3)}}

\newcommand{\nofork}[2]{\noforkname(#1,#2)}

\newcommand{\Conf}{\mathit{Cfg}}
\newcommand{\bool}{\mathbb{B}}

\newcommand{\prefixof}{\sqsubseteq_{\textlog{pre}}}

\newcommand{\suffixof}{\sqsubseteq_{\textlog{suf}}}

\newcommand{\mathraisebox}[2]{\mathchoice{\raisebox{#1}{$\displaystyle #2$}}{\raisebox{#1}{$#2$}}{\raisebox{#1}{$\scriptstyle #2$}}{\raisebox{#1}{$\scriptscriptstyle #2$}}}

\newcommand{\pure}[1]{\mathraisebox{0.55ex}{\left\ulcorner\kern-0.2em \mathraisebox{-0.55ex}{#1} \kern-0.2em\right\urcorner}}

\definecolor[named]{GrayOutColor}{cmyk}{0,0.0,0.0,0.6}

\definecolor[named]{ACMBlue}{cmyk}{1,0.1,0,0.1}
\definecolor[named]{ACMYellow}{cmyk}{0,0.16,1,0}
\definecolor[named]{ACMOrange}{cmyk}{0,0.42,1,0.01}
\definecolor[named]{ACMRed}{cmyk}{0,0.90,0.86,0}
\definecolor[named]{ACMLightBlue}{cmyk}{0.49,0.01,0,0}
\definecolor[named]{ACMGreen}{cmyk}{0.20,0,1,0.19}
\definecolor[named]{ACMPurple}{cmyk}{0.55,1,0,0.15}
\definecolor[named]{ACMDarkBlue}{cmyk}{1,0.58,0,0.21}

\colorlet{KeywordColor}{ACMDarkBlue}
\colorlet{EffectTagColor}{ACMOrange!60!black}
\colorlet{CommentColor}{black!65!white}
\colorlet{FunctionColor}{KeywordColor}
\colorlet{GhostColor}{ACMRed}
\colorlet{GhostKeywordColor}{GhostColor!75!KeywordColor}

\newcommand{\uq}{\textquotesingle}
\renewcommand{\langkw}[1]{{\text{\textcolor{KeywordColor}{\upshape\textlang{{\bfseries #1}}}}}}

\newcommand{\function}[1]{\text{\textcolor{FunctionColor}{\texttt{#1}}}}
\newcommand{\ghostcode}[1]{\begingroup\colorlet{KeywordColor}{GhostKeywordColor}\color{GhostColor}#1\endgroup}
\newcommand{\Some}{\operatorname{\langkw{some}}}
\newcommand{\None}{\operatorname{\langkw{none}}}

\allowdisplaybreaks

\begin{document}
\newtheorem{remark}[theorem]{Remark}
\renewcommand{\theHtheorem}{\thetheorem}
\renewcommand{\theHlemma}{\thetheorem}
\renewcommand{\theHcorollary}{\thetheorem}
\renewcommand{\theHproposition}{\thetheorem}
\renewcommand{\theHconjecture}{\thetheorem}
\renewcommand{\theHexample}{\thetheorem}
\renewcommand{\theHdefinition}{\thetheorem}
\renewcommand{\theHremark}{\thetheorem}
\let\logatomoldcaption\caption
\renewcommand{\caption}[1]{\logatomoldcaption{#1.}}

\title{Completeness of Logical Atomicity for Linearizability in Concurrent Separation Logic}

\author{Zichen Zhang}
\email{zichenzhang@nyu.edu}
\orcid{0009-0004-1151-6149}
\affiliation{%
  \institution{New York University}
  \city{New York}
  \country{USA}
}
\author{Simon Oddershede Gregersen}
\email{gregersen@cispa.de}
\orcid{0000-0001-6045-5232}
\affiliation{%
  \institution{CISPA Helmholtz Center for Information Security}
  \city{Saarbr{\"u}cken}
  \country{Germany}
}
\author{Joseph Tassarotti}
\email{jt4767@nyu.edu}
\orcid{0000-0001-5692-3347}
\affiliation{%
  \institution{New York University}
  \city{New York}
  \country{USA}
}
\authornote{Also affiliated with Amazon Web Services. This paper does not reflect the views of Amazon Web Services.}

\begin{abstract}

Linearizability is a standard correctness condition for concurrent data structures.
It guarantees that operations behave as if they took effect at some atomic instant between their call and return points.
Despite the central role linearizability plays, %
in the context of concurrent separation logics, prior work has argued for instead using a style of specification that internalizes the atomicity of operations in terms of the logic's reasoning rules, known as \emph{logical atomicity}.
These logically atomic specifications are intended to be easier to compose inside of the logic than linearizability.
Prior work has shown that in the Iris separation logic framework, a certain form of logically atomic specifications implies that a data structure is linearizable, which can be understood as a soundness theorem.
However, the converse remained an open question: for every linearizable data structure, is it always possible to derive a corresponding logically atomic specification?

This paper resolves this question in the affirmative.
We prove a completeness theorem for Iris that derives a logically atomic specification for any linearizable data structure.
As a consequence, we are able to embed a variety of linearizability proof techniques into Iris and use them to derive logically atomic specifications.
We apply this to three linearizability proof methods: aspect-oriented linearizability proofs, forward simulations with commit points, and meta-configuration tracking.
Using these embeddings, we derive logically atomic specifications for the Herlihy-Wing queue and the Baskets Queue.
We furthermore establish a connection between logical atomicity and an encoding of refinement in Iris that has been used in prior logical relations models.
This result allows us to transport logically atomic specifications across refinements, which we apply to the Folly MPMC queue implementation.
All of the results in this paper have been mechanized in the Rocq Prover.

\end{abstract}

\begin{CCSXML}
<ccs2012>
<concept>
<concept_id>10003752.10010124.10010138.10010142</concept_id>
<concept_desc>Theory of computation~Program verification</concept_desc>
<concept_significance>500</concept_significance>
</concept>
</ccs2012>
\end{CCSXML}

\ccsdesc[500]{Theory of computation~Program verification}

\keywords{Iris, Separation Logic, Linearizability, Logical Atomicity, Prophecy Variables, Contextual Refinement}  %

\received{2026-07-09}

\maketitle

\section{Introduction}\label{sec:introduction}

Linearizability is a standard correctness condition for concurrent data structures~\citep{herlihy1990-linearizability,herlihy1987-linearizability,sela2021-linearizability-typo}.
It ensures that concurrent operations behave as if they took effect atomically in a way that is compatible with a sequential specification of the data structure.
Because linearizability is such a widely used correctness condition for concurrent data structures, a vast array of formal methods techniques related to linearizability have been developed.
These include various proof techniques~\citep{adadi1991-hpv,DBLP:journals/entcs/ColvinDG05,bouajjani2017-forward,jayanti2024-metaconfig,DBLP:conf/wdag/FeldmanE0RS18,henzinger2013-aspect,DBLP:conf/podc/OHearnRVYY10,DBLP:journals/pacmpl/FeldmanKE0NRS20}, program logics~\citep{DBLP:phd/ethos/Vafeiadis08, DBLP:conf/pldi/LiangF13,DBLP:conf/popl/TuronW11,hatti2026-lhl}, automated verifiers~\citep{vafeiadis2010-cave,zhu2015-poling,meyer2022-plankton,meyer2023-nekton}, and tools for checking for linearizability violations~\citep{DBLP:conf/pldi/BurckhardtDMT10,DBLP:conf/forte/HornK15a,DBLP:conf/cav/KovalFSTA23,DBLP:conf/cav/EmmiE19,DBLP:conf/cav/CirisciEFM20,DBLP:conf/ppopp/OzkanMN19,DBLP:journals/pacmpl/Han025}.
Alongside this, researchers have developed various adaptations of linearizability, including relaxations for weak memory~\citep{DBLP:conf/esop/BurckhardtGMY12,DBLP:conf/forte/DerrickS17,DBLP:conf/sefm/DohertyD16}, %
more compositional formulations~\citep{DBLP:journals/pacmpl/ValeSC23},
restrictions that guarantee stronger properties~\citep{DBLP:conf/stoc/GolabHW11,DBLP:conf/wdag/DenysyukW15}, and variants for durable or persistent memory~\citep{DBLP:conf/wdag/IzraelevitzMS16,DBLP:conf/podc/AttiyaBH18,aguilera2003-strict-lin}.

\paragraph{Logical Atomicity}
Yet despite the central role linearizability plays, a number of researchers have argued that linearizability, as formally defined, is not so easy to use to verify \emph{clients} of a linearizable data structure.
In particular, while concurrent separation logic (CSL) has been successfully used to verify a range of concurrent programs, and there are even ways to use CSL to \emph{prove} that a data structure is linearizable, once such a specification has been shown, it is difficult to use in pre/post-condition style reasoning about clients.
Thus, if one's goal is to use CSL to modularly verify a large application, linearizability may not be the most helpful specification for internal data structures and libraries used in the application.
As a result, prior work has developed alternative specifications that capture the atomicity of a data structure's operations internally in a program logic in a way that is easier to use when verifying clients.
Various forms of such \emph{logically atomic} specifications have been developed~\citep{da-rocha-pinto2014-tada,svendsen2013-hocap,iCAP,Iris1,nanevski2014-auth-monoid,DBLP:conf/popl/JacobsP11}.
The common idea is that when an operation is logically atomic, a client can reason about that operation using the same reasoning patterns that are supported for the primitive, physically atomic commands of the language.
For example, one of the key properties of physically atomic steps in most variants of CSLs is that they support \emph{invariant} rules that allow a thread to assume that an invariant assertion $P$ holds before the step, and require $P$ to be reestablished after the step.
Logically atomic operations are similarly allowed to access invariants, as long as the invariant can be shown to be preserved after the atomic operation takes effect.
This enables modular and compositional reasoning, in which one logically atomic data structure can be used in the implementation of another.

There were long understood to be informal analogies between these logically atomic specifications and linearizability.
For example, establishing logical atomicity typically requires identifying the atomic point where an operation's effect becomes visible, which is similar to proof techniques for linearizability that involve identifying the \emph{linearization point}.
Data structures whose linearizability is challenging to establish because of future-dependent linearization points similarly require special techniques like prophecy variables~\cite{adadi1991-hpv,jung2019-proph} for establishing logical atomicity.
\citet{birkedal2021-free} made this connection formal in one direction, by showing that if a data structure satisfies a logically atomic specification in Iris, a widely-used CSL framework~\cite{Iris1,IrisGroundUp}, then the data structure is linearizable.

But what about the opposite direction: if a data structure is linearizable, is it always possible to prove a logically atomic specification for it?
We can view this as a kind of completeness property for logical atomicity, and view the previous result of \citeauthor{birkedal2021-free} as a soundness theorem.
No prior work has established a completeness theorem for logical atomicity, but such a result would be useful for several reasons.
First, it would tell us that the logic is not too ``weak'' or ``missing'' any rules or reasoning techniques: if a data structure is really linearizable, then at least in principle, a logically atomic specification can be shown using the logic.
Second, it means that if we can prove, external to the logic, that a data structure is linearizable, then we can automatically reflect that into the logic as a logically atomic specification for the data structure.
This is valuable because, as mentioned above, a range of proof techniques and automated verification methods have been developed for establishing linearizability, and many of these have no existing analogue inside Iris or other program logics.
Instead of having to extend Iris and its formulation of logical atomicity to support these proof techniques, completeness for logical atomicity would allow us to just directly apply these proof techniques to establish linearizability and then obtain a logically atomic specification for free.

This paper proves the first completeness theorem for Iris's logically atomic triples.
For concreteness, we establish our result for \heaplang, the default language included with Iris, but our technique is generally applicable to other languages used with the Iris framework.
The key idea behind our proof is an inductive forward simulation-style argument.
To derive the logically atomic Hoare triple about an operation, we maintain an invariant tracking the trace of call/return operations.
From the assumption that the operation is linearizable, we know that there must exist some linearization point between the call/return operations, and we use these linearization points to know when to apply the ``atomic update'' assertion used in Iris's logically atomic triples.
The main challenge is that formally, when given a partial prefix of execution, the definition of linearizability does not necessarily ensure that the linearization points we extract in this way are compatible with future extensions to the trace.
We solve this by instrumenting the methods of a concurrent data structure with \emph{prophecy variables} that record the arguments of each method call and the eventual return value.
Using these prophecy variables, the proof ``predicts'' the complete trace of calls and returns, so that we can extract a consistent set of linearization points from the linearizability assumption.

As alluded to above, by using this completeness theorem, we are able to prove linearizability of data structures using arbitrary proof methods and then derive logically atomic triples back in Iris.
We apply this concretely with three linearizability proof techniques that have no existing Iris analogue: the aspect-oriented proof~\citep{henzinger2013-aspect}, the forward-simulation with commit points technique developed by \citet{bouajjani2017-forward}, and meta-configuration tracking developed by \citet{jayanti2024-metaconfig}.
What is powerful about each of these methods is that they are capable of proving linearizability of data structures with future-dependent linearization points.
We have applied these methods to prove logically atomic triples of two data structures with such future dependency, the Herlihy-Wing queue~\citep{herlihy1990-linearizability} and the Baskets Queue~\citep{DBLP:conf/opodis/HoffmanSS07}.
While our completeness results show that one could in principle just use prophecy variables directly to verify these data structures (and indeed, prior work has done so for the Herlihy-Wing queue~\cite{jung2019-proph}), we argue that it is useful to be able to reuse off-the-shelf proofs and ideas that have been developed in prior work, instead of being forced to redevelop a proof in terms of prophecy variables.

\paragraph{Refinement}

In addition to linearizability and logical atomicity, there is a third commonly used approach to specifying concurrent data structures: refinement.
Using contextual refinement, one can capture that a data structure behaves in an atomic way by proving that it refines a coarse-grained implementation that wraps a sequential implementation with a global lock or atomic block primitive.
Because the contextual refinement shows that the data structure's observable behaviors are a subset of the coarse-grained version, clients can reason about the data structure \emph{as if} it were atomic.
Many formal verification techniques have been developed for relational reasoning about concurrent data structures and can be used for proving such refinements~\citep{DBLP:conf/cav/HawblitzelPQT15,DBLP:conf/cav/KraglQ18,DBLP:conf/popl/LiangFF12}, including several based on concurrent separation logics and Iris in particular~\citep{frumin2021-reloc,gaher2022simuliris,DBLP:conf/icfp/TuronDB13,DBLP:conf/popl/Krogh-Jespersen17}.

How does this kind of contextual refinement specification relate to linearizability or logical atomicity?
\citet{filipovic2010-ctx-refine} first proved an equivalence between linearizability and contextual refinement, and subsequent works have generalized this to a number of settings, including relaxed memory~\citep{DBLP:conf/vmcai/DongolJRA18,DBLP:journals/pacmpl/SinghL23} and durable variants of linearizability~\citep{DBLP:journals/pacmpl/ValeW0YS24}.

But for logical atomicity, there is no existing formal connection with contextual refinement.
Some prior work has observed that from a logical atomicity specification, it is often easy to establish a contextual refinement to an atomic implementation of a data structure~\citep{frumin2021-reloc}.
However, there has previously been no way to go from a proof of a refinement to a logically atomic triple.

The second main contribution of this paper is a result connecting logical atomicity to refinement.
In particular, if $e_I$ refines $e_A$, and $e_A$ satisfies a logically atomic specification, then our result enables us to derive a logically atomic specification for $e_I$.
Among other things, this is useful because it allows one to establish logical atomicity incrementally through a chain of refinements.
Rather than having to prove $e_I$ is logically atomic all at once, we can introduce a series of intermediary programs $e_1, \dots, e_n$, and prove $e_{I} \precsim e_{1}$, $e_1 \precsim e_2$, $e_2 \precsim e_3$, \dots, $e_n \precsim e_A$, deduce a logically atomic specification for $e_A$ and then transport this transitively to a specification for $e_I$.
This pattern of establishing chains of refinements is common in the literature based on refinement style reasoning, and our results make this accessible for the first time as a proof method for logical atomicity as well.
For example, \citet{DBLP:conf/cpp/VindumFB22} have proved that the MPMC queue from Meta's C++ library Folly \cite{folly} refines a coarse-grained queue with a global lock, but their result is not accessible from a unary separation logic.
With the help of our technique, we are finally able to get a logically atomic triple for the Folly MPMC queue on top of their result.

\begin{figure}
\centering
\begin{tikzpicture}
\node [draw,rounded corners] at (-4.5,0) (refine) {Refinement}; %

\node [draw,rounded corners] at (4.5,0) (la) {Logical Atomicity}; %

\node [draw,rounded corners] at (0,2) (lin) {Linearizability}; %

\draw[transform canvas={xshift=0.15cm},-Stealth] (la) -- node[above,sloped] {\citet{birkedal2021-free}} (lin);
\draw[transform canvas={xshift=-0.15cm},-Stealth] (lin) -- node[left=6pt] {\cref{sec:logatom}} (la);
\draw[Stealth-Stealth] (refine) -- node[above,sloped] {\citet{filipovic2010-ctx-refine}} (lin);
\draw[-Stealth] (refine) -- node[above] {\cref{sec:refine}} (la);
\end{tikzpicture}
\caption{Relation between Linearizability and Its Alternatives}
\label{fig:lin-triangle}
\end{figure}

\paragraph{Contributions}
\Cref{fig:lin-triangle} diagrammatically shows our theoretical results and where they fit with the existing literature.
To summarize, our work makes the following contributions:
\begin{itemize}
\item The first completeness result for logical atomicity showing that any linearizable data structure satisfies a logically atomic triple specification.
\item A proof connecting refinement to logical atomicity, which in particular allows atomic triples to be transported across a refinement.
\item Machine-checked reformulations of various linearizability proof techniques in the context of Iris, and examples applying them to several challenging data structures.
\end{itemize}
Our work is mechanized in the Iris separation logic framework~\cite{IrisGroundUp} and the Rocq Prover.

\section{Preliminaries}\label{sec:prelim}

This section gives preliminary background material needed for our results.
We start with the semantics of the concurrent language we consider.
Next, we give a formal definition of linearizability as used in our proof.
Finally, we describe the basic features of the Iris separation logic that our completeness proof uses.

\subsection{Language}
\begin{figure}
\begin{align*} \val,\valB\in\Val\bnfdef{}&\TT\mid z\mid b\mid\loc\mid\Rec\lvarF(\lvar)= \expr\mid(\val,\valB)\mid\Inl(\val)\mid\Inr(\val)\mid\cdots\tag{\ensuremath{z \in \integer,b\in\bool, \loc \in \Loc}}\\
\expr\in\Expr\bnfdef{}&\val\mid\var\mid\Rec\lvarF(\lvar)= \expr\mid\expr_1(\expr_2)\mid\HLOp_1\expr\mid \expr_1\HLOp_2\expr_2\mid\If \expr then \expr_1 \Else \expr_2
  \mid
  (\expr_1,\expr_2) \\
  \mid{}&
  \Fst(\expr) \mid
  \Snd(\expr)
  \mid{}
  \Inl(\expr) \mid
  \Inr(\expr) \mid
  (\Match \expr with \Inl => \expr_1 | \Inr => \expr_2 end)\\
  \mid{}&\Alloc \expr\mid\deref\expr\mid\expr_1\gets\expr_2\mid\CAS(\expr_0,\expr_1,\expr_2)\mid\Fork\expr\mid\cdots\\
\lctx\in\Lctx\bnfdef{}&\bullet\mid\expr(\lctx)\mid\lctx(\val)\mid\expr\HLOp_2\lctx\mid\lctx\HLOp_2\val\mid\cdots\tag{Standard right-to-left evaluation context}\\
\pstate\in\State\eqdef{}&\Loc\fpfn\Val\hspace{4em}\Conf\eqdef\Expr^*\times\State
\end{align*}
\begin{mathpar}
{\mprset{fraction={\ \ \ }}
\inferhref{BaseBeta}{rule:base-beta}
{((\Rec\lvarF(\lvar)= \expr)(\val),\pstate)\basestep}{(\subst {\subst \expr \lvarF {(\Rec\lvarF(\lvar)= \expr)}} \lvar \val,\pstate,\nil)}}
\and\cdots
\and
\inferhref{BaseAlloc}{rule:base-alloc}
{\loc\notin\dom(\pstate)}{(\Alloc \val,\pstate)\basestep(\loc,\mapinsert{\loc}{\val}{\pstate},\nil)}
\and
\inferhref{BaseFork}{rule:base-fork}
{}{(\Fork\expr,\pstate)\basestep(\TT,\pstate,[\expr])}
\and
\cdots
\end{mathpar}
\caption{Syntax and Semantics for \heaplang{} (without Prophecy Variables)}\label{fig:heaplang}
\end{figure}

Although our completeness proof method generalizes, for concreteness, our results are proven with respect to the default \heaplang{} included in the Iris Rocq development.
This is a mini ML-like language with general references and concurrency primitives.
Its syntax and semantics are shown in \Cref{fig:heaplang}.
Constructions in this figure are standard.
The language has syntactic categories for expressions $\expr$, values $\val$, states $\pstate$, and evaluation contexts $\lctx$.
The operational semantics is first specified by a base step relation $(\expr_1,\pstate_1)\basestep(\expr_2,\pstate_2,\vv*{\expr}f)$, which says that under state $\pstate_1$, expression $\expr_1$ reduces to $\expr_2$, updates the state to $\pstate_2$, and spawns threads $\vv*ef$.
A pure step $\expr_1\purestep\expr_2$ is a special base step that is deterministic and independent of the state.
The base step relation is then closed under evaluation contexts $\lctx$ to get the primitive step relation $\primstep$, which is lifted to a concurrent thread pool step relation $\tpstep$, as expressed in the following rules:
\begin{mathpar}
\inferhref{BasePrim}{rule:base-prim}
{(\expr_1,\pstate_1)\basestep(\expr_2,\pstate_2,\vv*{\expr}f)}
{(K[\expr_1],\pstate_1)\primstep(K[\expr_2],\pstate_2,\vv*{\expr}f)}
\and
\inferhref{PrimTp}{rule:prim-tp}
{(\expr_1,\pstate_1)\primstep(\expr_2,\pstate_2,\vv*{\expr}f)}
{(\vv* \expr a\dplus \expr_1::\vv*\expr b,\pstate_1)\tpstep(\vv* \expr a\dplus \expr_2::\vv*\expr b\dplus\vv*\expr f,\pstate_2)}
\end{mathpar}
In particular, in \ruleref{rule:prim-tp}, the threads of the program are represented as a list of expressions, from which one expression is non-deterministically selected to step.
In the resulting program configuration, that thread is updated along with the state, and the new list $\vv*{\expr}f$ of forked threads (if any) is appended to the end of the thread list.
We denote the reflexive transitive closure of $\tpstep$ as $\tpsteps$.

A standard correctness condition for a \emph{closed} program is safety or partial correctness.
It means executing a program $\expr$ from initial state $\pstate$ never gets stuck, and if $\expr$ terminates with a value $\val$ and state $\pstate'$, then $\phi(\val,\pstate')$ holds.
\begin{definition}[Safety]\label{def:safe}
An expression $\expr$ in state $\pstate$ is safe \wrt $\phi$, denoted as $\safe[\phi]{\expr}{\pstate}$, if
\begin{enumerate}
\item $\All \val, \vv\expr\!',\pstate'.([\expr],\pstate)\tpsteps(\val::\vv\expr\!',\pstate')\Ra \phi(\val,\pstate')$, and
\item $\All \vv\expr\!',\pstate'.([\expr],\pstate)\tpsteps(\vv\expr\!',\pstate')\Ra\All \expr'\in\vv\expr\!'.\expr'\in\Val\lor\red(\expr',\pstate')$,
\end{enumerate}
where $\red(\expr,\pstate)\eqdef\Exists \expr',\pstate',\vv*\expr f.(\expr,\pstate)\primstep(\expr',\pstate',\vv*\expr f)$.
When $\phi$ is $(\Lam \val,\pstate.\TRUE)$, we omit $\phi$ and write $\safe{\expr}{\pstate}$.
\end{definition}

\subsection{Linearizability}
For a \emph{partial} program, \eg a library, in addition to safety, we also want to ensure the interaction between the library and other parts of the program is correct.
Particularly for concurrent data structures, linearizability~\citep{herlihy1987-linearizability, herlihy1990-linearizability, sela2021-linearizability-typo} is a standard correctness condition that relates a concurrent object's behaviors to a sequential specification.
There are multiple equivalent definitions of linearizability.
In this paper, we use a version that is stated in terms of \emph{linearization points}%
, following the definition used by \citet{birkedal2021-free}.
At a high level, this definition requires that the execution history $H$ of a concurrent object can be annotated with marked \emph{linearization points}, which are instantaneous moments at which an operation's effect appears to take place.
By ordering operations according to the order of their linearization points, we obtain a total order on operations.
Then, the observed call/return behaviors of the execution must match those of a sequential execution that runs operations in this total order.

To state this definition formally, we will assume that the interface for a data structure supports two functions, $\function{init}$, which creates a new object, and $\function{op}$, which takes an object $\textit{obj}$ and an argument $x$ and invokes an operation on the object with request $x$.
Restricting to a single $\function{op}$ function does not result in a loss of generality, because multiple operations can be emulated by including a tag as part of the argument that selects the intended operation.

\paragraph{Sequential Behavior}
The sequential specification of an object is given in terms of an abstract type $S$ of object states.
This state is initialized to $s_0 \in S$ upon the creation of an object.
The effects of operations are specified in terms of relations on this state.

\begin{definition}[Sequential Behavior]
A sequential behavior $\mu$ of data structure $(\function{init},\function{op})$ is a triple of $(s_0,\textit{vr},\textit{rr})\colon S\times (\Val\to\mProp)\times (S\times\Val\times S\times \Val\to\mProp)$, such that
\begin{itemize}
\item The initial state of an object is $s_0$;
\item For object $\textit{obj}$ of state $s$ and request $x$, if $\textit{vr}(x)$, then performing $\function{op}(\textit{obj},x)$ returns $y$ and updates the object to state $s'$ such that $\textit{rr}(s,x,s',y)$.
\end{itemize}
\end{definition}

The valid request predicate $\textit{vr}$\/ restricts which arguments to $\function{op}$ are considered well-formed or valid.
Given a valid argument, the behavior of $\function{op}$ is captured by a request-response relation $\textit{rr}$.
This relation may be non-deterministic.
In case there are multiple sequential behaviors in question, we write $\mu.\textit{vr}$ or $\mu.\textit{rr\/}$ to disambiguate.

\begin{example}[Bag]\label{example:bag}
A bag, \ie a multiset, is a container that supports $\function{push}$ and $\function{pop}$ operations.
The $\function{push}$ operation adds a new element to the container, while
$\function{pop}$ removes an arbitrary element from the container and returns this element, or reports that the container is empty.
The sequential behavior of a bag, $\mu_{\textit{bag}}$, is defined as
\begin{align*}
s_0\eqdef{}&\varnothing\\
\textit{vr}(x)\eqdef{}&(\Exists v.x=\Some(v))\lor x=\None\\
\textit{rr}(s,x,s',y)\eqdef{}&(\Exists v.x=\Some(v)\land s'=\lbag v\rbag\uplus s \land {y=()} )\\
&\lor ({x=\None} \land {s=\varnothing} \land {s'=\varnothing} \land {y = \None}) \\
&\lor ({x=\None} \land {\Exists v. v\in s} \land {s'=s\setminus\lbag v\rbag} \land {y = \Some(v)})
\end{align*}
where $\None$ and $\Some(\val)$  are syntactic sugar for $\Inl(\TT)$ and $\Inr(\val)$, respectively.
This definition encodes $\function{push}(b,v)$ as $\function{op}(b,\Some(v))$ and $\function{pop}(b)$ as $\function{op}(b,\None)$.
For the return value of a pop, $\None$ means the bag is empty, and $\Some v$ means element $v$ is removed from the bag.
\end{example}

\paragraph{Linearizability of a Trace}

We first define what it means for a single interaction of a data structure with a client to be linearizable.
An interaction is represented by a call-return trace, and a linearization of such a trace is obtained by adding suitable linearization point events to the trace.

Formally, we define $\Sigma$ to be the type of events that can occur in physical traces, and let $\Sigma_{\textlog{lin}}$ be an extension that includes linearization events.
\begin{align*}
\Sigma&\eqdef\{(\tau,\textlog{Call}(v))\mid\tau\in\textdom{Tag},v\in\Val\}\cup\{(\tau,\textlog{Ret}(v'))\mid\tau\in\textdom{Tag},v'\in\Val\}\\
\Sigma_{\textlog{lin}}&\eqdef\Sigma\cup\{(\tau,\textlog{Lin}(v,v'))\mid\tau\in\textdom{Tag},v,v'\in\Val\}
\end{align*}
Here, $\tau$ is a tag that relates the call, linearization, and return event of an operation.
The parameters of call and return events represent the argument to the operation and the returned value, respectively.
Linearization events take two parameters that match the argument and return value for the call/return that they correspond to.
A \emph{physical trace} $\phytr \in \Sigma^*$ is then a finite sequence of call and return events, while an \emph{annotated trace} $\anntr \in \Sigma_{\textlog{lin}}^*$ is a finite sequence that may also include linearization events.
Both types of traces include the empty sequence $\nil$.
We define various projection operations on traces that return the subtrace consisting of certain selected events.
Given a trace $\vv u$, $\pi_\tau(\vv u)$ is the subtrace of events with tag $\tau$, $\pi_\Sigma({\vv u})$ is the subtrace of physical events, and $\pi_{\textlog{lin}}(\vv u)$ is the subtrace of linearization events.
A trace is well-formed if the call, linearization, and return events of each tag occur in order, and each event only occurs once.
\begin{align}
\textlog{well-formed}(\phytr)&\eqdef \All \tau.\Exists v,v'.\pi_\tau(\phytr)\prefixof[(\tau,\textlog{Call}(v));(\tau,\textlog{Ret}(v'))]\label{eq:well-formed}\\
\textlog{well-formed}_\textlog{lin}(\anntr)&\eqdef \All \tau.\Exists v,v'.\pi_\tau(\anntr)\prefixof[(\tau,\textlog{Call}(v));(\tau,\textlog{Lin}(v,v'));(\tau,\textlog{Ret}(v'))]\label{eq:wll-formed}
\end{align}
where $\vv{x}\prefixof\vv{y}$ means $\vv{x}$ is a prefix of $\vv{y}$.

\begin{definition}[Trace Linearizability]\label{def:trlin}
A trace $\phytr\in \Sigma^*$ is linearizable \wrt a sequential behavior $\mu$, written $\trlin[\mu]{\phytr}$, if the parameter of every call event in $\phytr$ satisfies $\textit{vr}$ and there exists an annotated trace $\anntr$ and state $s'$ such that $\phytr=\pi_\Sigma(\anntr)$, $\textlog{well-formed}_\textlog{lin}(\anntr)$, and $\textlog{sound}_\mu(\pi_{\textlog{lin}}(\anntr),s_0,s')$, where
\begin{align*}
\textlog{sound}_\mu(\nil,s,s')&\eqdef s=s'\\
\textlog{sound}_\mu(\anntr\dplus[(\tau,\textlog{Lin}(x,y))],s,s'')&\eqdef\Exists s'.\textlog{sound}_\mu(\anntr,s,s')\land\textit{vr}(x)\land \textit{rr}(s',x,s'',y)
\end{align*}
\end{definition}
Here, the annotated trace $\smash{\anntr}$ is a linearization of $\phytr$ because $\smash{\anntr}$ consists of the same physical events as $\phytr$, and the linearization event of each operation occurs between its call event and return event.
Moreover, $\textlog{sound}_\mu(\pi_{\textlog{lin}}(\smash{\anntr}),s_0,s')$ says starting from the initial state $s_0$, processing operations according to the linearization events in the order found in $\smash{\anntr}$ will transform the object to state $s'$, \ie the linearization matches the sequential behavior $\mu$.

In \Cref{def:trlin}, a ``\textlog{sound}'' non-empty trace is obtained by appending a new linearization event to the end of another sound trace.
Alternatively, one can also choose to prepend an event to the beginning of another sound trace.

\begin{remark}\label{remark:sound-cons}
$\textlog{sound}_\mu((\tau,\textlog{Lin}(x,y))::\anntr,s,s'')\Lra\Exists s'.\textit{vr}(x)\land\textit{rr}(s,x,s',y)\land\textlog{sound}_\mu(\anntr,s',s'')$
\end{remark}

\paragraph{Linearizability of a Data Structure}
With trace linearizability defined, we now define linearizability of a data structure by requiring that the traces generated by interactions with the data structure are linearizable.
However, we must restrict our attention to \emph{valid} traces, in which all arguments to $\function{op}$ are valid.
In addition, the client must only interact with objects via calls to $\function{op}$.
For example, if an object has internal private state, the client is not allowed to break the data structure's abstractions and directly modify this private state.
Aside from these restrictions, however, a client can send any possible valid sequence of requests to the object and spawn an unbounded number of threads to send requests to the object concurrently.
To represent this set of traces, we define a new reduction relation that models these interaction rules.
This relation is defined over \emph{configurations} that track the operations that are executing.

\begin{definition}[Configuration]
Given an object $\textit{obj}$ of a data structure $(\function{init},\function{op})$,
a configuration $\rho$ is a triple of $(\vv{\pi},\vv{\expr},\pstate)\colon(\textdom{Tag}\times\Expr^?)^*\times\Expr^*\times\State$, where
\begin{itemize}
\item $\vv{\pi}$ is a list representing the executing operations on the object.
Each element of the list is a pair of tag $\tau$ and expression $\expr$.
For operations that have finished execution and returned, the expression component of the pair is a special value $\bot$.
\item $\vv{e}$ is a list of other threads executing in the system that were created by the executing operations in $\vv{\pi}$.
\item $\pstate$ is the state of the program, \eg the heap. The state of the data structure is part of $\pstate$.
\end{itemize}
\end{definition}

Then, a step of interaction between an object and a client is modeled by a \emph{most adversarial step} relation between configurations.

\begin{definition}[Most Adversarial Step]
Given an object \textit{obj}, method $\function{op}$, and predicate $\textit{vr}$, a \emph{most adversarial step} $\rho\mathrel{\smash{\advstep[\textit{obj}.\function{op}/\textit{vr}]{\phytr}}}\rho'$ is defined by the following set of inference rules:

{\arraycolsep=1.4pt%
$$\begin{array}[t]{rlll}
\multicolumn{4}{l}{\textlabel{rule:adv-call}{\textsc{Adv-Call}}}\\
  (\vv{\pi},\vv{\expr},\pstate)
  &\advstep[\textit{obj}.\function{op}/\textit{vr}]{\mathmakebox[3.3em]{[(\tau,\textlog{Call}(v))]}}
  &(\vv{\pi}\dplus[(\tau,\function{op}(\textit{obj},v))],\vv{\expr},\pstate)
  &~\text{if}~ \tau\notin\vv{\pi}.1 \text{ and } \textit{vr}(v) \\
\multicolumn{4}{l}{\textlabel{rule:adv-interm}{\textsc{Adv-Interm}}}\\
  (\vv*\pi1\dplus (\tau,\expr_1) ::\vv*\pi2,\vv{\expr},\pstate_1)
  &\advstep[\textit{obj}.\function{op}/\textit{vr}]{\mathmakebox[3.3em]{\nil}}
  &(\vv*\pi1\dplus(\tau,\expr_2)::\vv*\pi2,\vv{\expr}\dplus\vv*{\expr}{f},\pstate_2)
  &~\text{if}~ (\expr_1,\pstate_1)\primstep(\expr_2,\pstate_2,\vv*{\expr}{f}) \\
\multicolumn{4}{l}{\textlabel{rule:adv-child}{\textsc{Adv-Child}}}\\
  (\vv \pi,\vv*{\expr}{a}\dplus \expr_1::\vv*{\expr}{b},\pstate_1)
  &\advstep[\textit{obj}.\function{op}/\textit{vr}]{\mathmakebox[3.3em]{\nil}}
  &(\vv \pi,\vv*{\expr}{a}\dplus\expr_2::\vv*{\expr}{b}\dplus\vv*{\expr}f,\pstate_2)
  &~\text{if}~ (\expr_1,\pstate_1)\primstep(\expr_2,\pstate_2,\vv*{\expr}{f}) \\
\multicolumn{4}{l}{\textlabel{rule:adv-ret}{\textsc{Adv-Ret}}}\\
  (\vv*\pi1\dplus(\tau,\val)::\vv*\pi2,\vv \expr,\pstate)
  &\advstep[\textit{obj}.\function{op}/\textit{vr}]{\mathmakebox[3.3em]{[(\tau,\textlog{Ret}(v))]}}
  &(\vv*\pi1\dplus(\tau,\bot)::\vv*\pi2,\vv\expr,\pstate)
  &~\text{if}~ \val\in\Val %
\end{array}$$}

\ruleref{rule:adv-call} represents the invocation of a new operation with argument $v$, which must be valid according to the $\textit{vr}$ predicate.
The transition is labeled with a call event with a fresh tag $\tau$, and an expression executing $\function{op}(\textit{obj},v)$ is added to the list $\vv{\pi}$ with the same tag $\tau$.
\ruleref{rule:adv-interm} performs a step of execution of an operation $\expr$ in the list $\vv{\pi}$, adding any forked child threads to the $\vv{\expr}$ component of the configuration.
\ruleref{rule:adv-child} executes a step of reduction of one of these child threads.
Finally, \ruleref{rule:adv-ret} selects an operation that has terminated with a value $\val$ and labels the transition with a return event, replacing the value $\val$ with $\bot$ in the configuration afterwards.

The non-deterministic nature of these rules represents the ability of a client to invoke an arbitrary number of concurrent operations, so long as the arguments are valid, in the sense that they satisfy $\textit{vr}$.
These invocations are interleaved with steps of the operations themselves, as well as any children they produce.
Critically, the non-deterministic steps by the environment do not alter the local state of the data structure.

The relation $\smash{\advsteps{}}$ is the reflexive transitive closure of $\smash{\advstep{}}$, which concatenates the labels on transitions.
\begin{mathpar}
\inferhref{Advs-Nil}{rule:advs-nil}{}
{\raisebox{-1ex}{\ensuremath{\rho\advsteps[\textit{obj}.\function{op}/\textit{vr}]{\nil}\rho}}}\and
\inferhref{Advs-Cons}{rule:advs-cons}{}
{\raisebox{-1ex}{\ensuremath{\rho_0\advstep[\textit{obj}.\function{op}/\textit{vr}]{\vv*t1}\rho_1\land\rho_1\advsteps[\textit{obj}.\function{op}/\textit{vr}]{\vv*t2}\rho_2\implies
 \rho_0\advsteps[\textit{obj}.\function{op}/\textit{vr}]{\vv*t1\dplus\vv*t2}\rho_2}}}
\end{mathpar}
We omit $\textit{obj}$, $\function{op}$, and $\textit{vr}$, when they are clear from the context, and simply write $\rho_1\advsteps{\phytr}\rho_2$.
\end{definition}

Using the most adversarial step relation, we can now define linearizability of a data structure.
\begin{definition}[Linearizability]\label{def:linear}
A data structure $(\function{init},\function{op})$ is linearizable \wrt a sequential behavior $\mu$, denoted as $\lin[\mu]{\function{init}}{\function{op}}$, if
\begin{enumerate}
\item $\safe{\function{init}()}{\varnothing}$, \ie the evaluation of $\function{init}()$ never gets stuck; and
\item if $([\function{init}()],\varnothing)\tpsteps([\textit{obj}],\pstate)$ and $(\nil,\nil,\pstate)\mathrel{\smash{\advsteps[\textit{obj}.\function{op}/\textit{vr}]{\phytr}}}(\vv{\pi}, \vv{e}, \pstate')$, then all expressions in $\vv{\pi}$ and $\smash{\vv{\expr}}$ are not stuck, and $\smash{\trlin[\mu]{\phytr}}$.
\end{enumerate}
\end{definition}
The second part of this definition runs $\function{init}$ under the normal reduction relation $\tpsteps$ until it terminates in some value $\textit{obj}$, which is then used as part of the most adversarial step reduction.
Every trace $\phytr$ generated in this manner must be linearizable.
In addition, we require that execution of $\function{init}$ and $\function{op}$ never gets stuck, in order to rule out implementations that are unsafe and trigger some form of undefined behavior.

\subsection{An Iris-Based Separation Logic}
\begin{figure}
\begin{align*}
P,Q\in\Prop\bnfdef{}&\pure{\varphi}\mid P\land Q\mid P\lor Q\mid\All x. P\mid\Exists x. P
\mid\loc\mapsto\val
\tag{\ensuremath{\varphi} is a meta-level proposition}\\
{}\mid{}& P\ast Q\mid P\wand Q
\mid\later P
\mid\hoare{P}{\expr}{\Ret \val.Q}
\mid\pvs P
\mid\ownGhost{\gamma}{\authfull M}
\mid k\hookrightarrow^\gamma v
\mid\cdots
\end{align*}
\begin{mathpar}
\inferhref{HtConseq}{rule:wp-wand}
{\hoare{P}{\expr}{\Phi}\\P'\proves P\\\All x.(\Phi(x)\proves\Phi'(x))}
{\proves\hoare{P'}{\expr}{\Phi'}}
\mprset{fraction=--\ast}
\and
\inferhref{HtFrame}{rule:wp-frame}
{\hoare{P}{\expr}{\Ret \val. Q}}{\hoare{P\ast R}{\expr}{\Ret\val.Q\ast R}}
\and
\inferhref{HtValue}{rule:wp-value}
{\Phi(\val)}{\hoare{P}{\val}{\Phi}}
\and
\inferhref{HtSeq}{rule:wp-seq}
{\hoare{P}{\expr}{\Ret v. Q}\\\All v.\hoare{Q}{\lctx[\val]}{\Phi}}
{\hoare{P}{\lctx[\expr]}{\Phi}}
\and
{\mprset{defaultfraction}\inferhref{HtPure}{rule:wp-pure}
{\expr_1\purestep\expr_2}
{\hoare{P}{\expr_2}{\Phi}\proves\hoare{\later P}{\expr_1}{\Phi}}}
\and
\inferhref{HtAlloc}{rule:wp-alloc}
{}{\hoare{\TRUE}{\Alloc\val}{\Ret\loc.\loc\mapsto\val}}
\and
\inferhref{HtLoad}{rule:wp-load}
{}{\hoare{\loc\mapsto\val}{\deref\loc}{\Ret\var. \var = \val \ast \loc\mapsto\val}}
\and
\inferhref{HtStore}{rule:wp-store}
{}{\hoare{\loc\mapsto\val}{\loc\gets\valB}{\Ret \var. \var = \TT \ast \loc\mapsto\valB}}
\and
\inferhref{HtFupd}{rule:wp-fupd}
{\hoare{P}{\expr}{\Phi}}{\hoare{\pvs P}{\expr}{\Phi}}
\and
\mprset{defaultfraction}
\inferhref{MapAlloc}{rule:map-alloc}
{}{\proves \Exists \gamma.\pvs\ownGhost{\gamma}{\authfull\varnothing}}
\and
\inferhref{MapLookup}{rule:map-lookup}
{}{\ownGhost{\gamma}{\authfull M}\ast k\hookrightarrow^\gamma v
\proves\pure{M(k)=v}}
\and
\inferhref{MapInsert}{rule:map-insert}
{k\notin\dom(M)}
{\ownGhost{\gamma}{\authfull M}
\proves\pvs\ownGhost{\gamma}{\authfull(\mapinsert{k}{v}{M})}
\ast k\hookrightarrow^\gamma v}
\and
\inferhref{MapUpdate}{rule:map-update}
{}{\ownGhost{\gamma}{\authfull M}\ast k\hookrightarrow^\gamma v
\proves\pvs\ownGhost{\gamma}{\authfull(\mapinsert{k}{w}{M})}
\ast k\hookrightarrow^\gamma w}
\and
\inferhref{MapDelete}{rule:map-delete}
{}{\ownGhost{\gamma}{\authfull M}\ast k\hookrightarrow^\gamma v
\proves\pvs\ownGhost{\gamma}{\authfull(M\setminus\set{k})}}
\and
\inferhref{HigherOrderMapLookup}{rule:higher-order-map-lookup}
{}{\ownGhost{\gamma}{\authfull M}\ast k\hookrightarrow^\gamma P\proves\Exists P'.\pure{M(k)=P'}\ast\later (P\equiv P')}
\end{mathpar}
\caption{A Separation Logic for \heaplang{}}\label{fig:heaplang-sl}
\end{figure}

\Cref{fig:heaplang-sl} shows a subset of rules from the default separation logic for \heaplang{} provided by Iris.%
\footnote{As usual for Iris, the actual logic we work on uses another connective $\wpre{e}{\Phi}$ and the Hoare triple is syntactic sugar for a certain form of $\wprename$.}
This logic can be used to derive judgments
of the shape $P \proves Q$, where $P$ and $Q$ are separation logic assertions.
Separation logic enriches
propositional logic with the separating conjunction $\ast$ and separating implication $\wand$, sometimes called the ``magic wand''.
The separating conjunction $P \ast Q$ expresses ownership of $P$ and $Q$ separately, \ie $P$ and $Q$ hold for disjoint parts of the heap.
In particular, $P \proves P \ast P$ does not always hold.
Iris is an \emph{affine} separation logic that admits the weakening rule $P\ast Q\proves P$.
Assertion $\pure{\varphi}$ embeds a meta-level proposition (\ie a Rocq \texttt{Prop}) into the separation logic.
The points-to assertion $\loc \mapsto \val$ says that location $\loc$ stores the value $\val$.
We omit most of the rules for proving entailments; see Chapter 3 of \citet{iris-lecture-notes} for an expository introduction to separation logic and Iris.

To reason about programs, we use the Hoare triple $\hoare{P}{\expr}{\Ret\val.Q}$.
Here, $\val$ is a binder for the value returned by executing the expression $\expr$.
We omit this binder if $\val$ does not appear free in $Q$.
When writing inference rules, we use two different styles.
When the line dividing the premises from the conclusion is a standard horizontal bar, then this rule should be read as an implication in the meta-logic.
When the horizontal bar is replaced by a $\wand$, we use a rule with premises $P_{1}, \ldots, P_{n}$ and conclusion $Q$ as notation for the separation logic entailment $(P_1 \ast \cdots \ast P_n) \vdash Q$.

Most of the Hoare triple rules in \Cref{fig:heaplang-sl} are standard (ignore the later modality ``$\later$'' for now).
\ruleref{rule:wp-wand} is the rule of consequence from Hoare logic. %
\ruleref{rule:wp-frame} allows for local reasoning about the subset of state an expression accesses.
\ruleref{rule:wp-seq} generalizes the standard sequential composition rule of Hoare logic to an arbitrary evaluation context $K$.
The rules \ruleref{rule:wp-alloc}, \ruleref{rule:wp-load}, \ruleref{rule:wp-store} reason about operations on mutable state by updating the points-to assertion appropriately.
Finally, \ruleref{rule:wp-pure} allows for reasoning about pure reduction steps of the program, which do not require any state assertions.

\paragraph{Ghost State and Updates}
To reason about the physical state of a program, it is often useful to maintain some additional \emph{auxiliary} or \emph{ghost} states to track some abstract aspects of the program~\cite{OwickiG76}.
These ghost states can be updated by the prover using certain rules.

While Iris supports a large variety of ghost state \cite{IrisGroundUp}, this paper mainly uses (higher-order) ghost maps.
A ghost map comes with two types of resources: a full or authoritative resource $\ownGhost{\gamma}{\authfull M}$ carrying the knowledge of the whole map and a fragmentary resource $k\hookrightarrow^\gamma v$ carrying the value $v$ at key $k$.
The fragmentary resource is also known as \emph{ghost points-to}.
Here, $\gamma$ is the name of the ghost map.
Rule \ruleref{rule:map-alloc} allocates an empty ghost map with some fresh name.
In this rule, ``$\pvs$'' is called the update modality, meaning that something is provable by updating the underlying ghost state.
The update modality can be eliminated using \ruleref{rule:wp-fupd}.
There are two lookup rules for ghost maps: if the ghost map stores first-order values like $\mathbb{N}$ or $\Expr$, then when owning both $\ownGhost{\gamma}{\authfull M}$ and $k\hookrightarrow^\gamma v$, one immediately learns that $M(k)$ is equal to $v$ by \ruleref{rule:map-lookup}; but if the ghost map stores higher-order data like $\Prop$, one can only use \ruleref{rule:higher-order-map-lookup}.
This rule still concludes that there must exist some (higher-order) value $P'$ at $M(k)$, but $P'$ is only equivalent to $P$ under a later modality ``$\later$''.
Intuitively, $\later P$ means $P$ does not hold immediately, but will hold after the program executes one step.
This also explains why there is a later modality in the precondition of \ruleref{rule:wp-pure}---the later modality captures the fact that the program has taken one step.

\paragraph{Soundness and Completeness}

The \heaplang{} program logic is sound and complete.
\begin{theorem}[Soundness~\cite{IrisGroundUp} and Completeness~\cite{hostert2026-completeness}]\label{thm:sound-complete}
$$\proves\hoare{\TRUE}{\expr}{\Ret\val.\pure{\varphi(\val)}}\iff\safe[\Lam\val,\_.\varphi(\val)]{\expr}{\varnothing}$$
\end{theorem}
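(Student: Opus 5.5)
Both directions are cited results, so the plan is to reduce each to the standard Iris machinery rather than reprove them from scratch.

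\emph{Soundness ($\Rightarrow$).} Unfold the Hoare triple into the weakest precondition: $\hoare{\TRUE}{\expr}{\Ret\val.\pure{\varphi(\val)}}$ gives $\TRUE\proves\wpre{\expr}{\Ret\val.\pure{\varphi(\val)}}$. Then apply the adequacy theorem of Iris~\cite{IrisGroundUp}, which I will use as a black box, instantiated for \heaplang{}'s state interpretation and the empty initial heap.

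The adequacy argument itself goes as follows. By induction on the length of a thread-pool reduction $([\expr],\varnothing)\tpsteps(\vv\expr\!',\pstate')$, one maintains:
\begin{itemize}
\item the state interpretation for $\pstate'$, and
\item a weakest precondition for every thread in $\vv\expr\!'$, with the main thread's postcondition $\pure{\varphi}$ and the forked threads' postconditions $\TRUE$.
\end{itemize}
Each step peels one later and one update modality. At the end, the wp of a non-value thread yields $\red$, and the wp of a value main thread yields $\varphi(\val)$. Because these are pure meta-level facts, soundness of the step-indexed model lets us extract them from $\later^n\pvs\pure{\cdot}$. This gives both clauses of \Cref{def:safe}.

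\emph{Completeness ($\Leftarrow$).} This direction is~\cite{hostert2026-completeness}. The key construction is a wp that, after allocating ghost state, carries an invariant asserting \enquote{the current configuration is reachable from $([\expr],\varnothing)$}. The plan is to prove $\wpre{\expr}{\Ret\val.\pure{\varphi(\val)}}$ by Löb induction, using the following \emph{semantic safety} predicate:
\begin{quote}
for every configuration reachable from the start, each thread is either a value or reducible, and each successor configuration is again reachable.
\end{quote}
At each step:
\begin{itemize}
\item use $\safe[\Lam\val,\_.\varphi(\val)]{\expr}{\varnothing}$ to discharge the reducibility obligation of wp;
\item for the chosen successor, update the authoritative ghost copy of the physical heap so it agrees with \heaplang{}'s state interpretation;
\item for forked threads, the same invariant (with postcondition $\TRUE$) gives their wp;
\item when the main thread reaches a value, reachability plus clause~1 of safety gives $\varphi(\val)$.
\end{itemize}

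The main obstacle is the completeness direction. Iris's state interpretation $\pstate\mapsto\ownGhost{\gamma}{\authfull\pstate}$ only reveals the heap up to ghost ownership, so the invariant must relate the physical heap to a reachable thread pool without owning points-to resources. I would first try stating the invariant directly as $\Exists\vv\expr\!'.\,\pure{([\expr],\varnothing)\tpsteps(\vv\expr\!',\pstate)}$, together with per-thread tokens recording each thread's current expression.
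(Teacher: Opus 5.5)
At the top level your plan matches the paper. The paper proves nothing here itself: it cites Iris adequacy for $\Rightarrow$ and \citet{hostert2026-completeness} for $\Leftarrow$. Your adequacy sketch for soundness is the standard argument and is fine.

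The gap is in the concrete fix you propose for what you rightly call the main obstacle. An Iris invariant is a fixed proposition, so the $\pstate$ in $\Exists\vv\expr\!'.\pure{([\expr],\varnothing)\tpsteps(\vv\expr\!',\pstate)}$ has to be existentially bound inside it. Then nothing ties it to the state that the weakest precondition hands you through the state interpretation. This causes two failures:
\begin{itemize}
\item You cannot use clause~2 of safety to obtain $\red(\expr,\pstate)$ for the \emph{actual} state.
\item After the step, you cannot re-establish the state interpretation for the successor state. The update of the authoritative heap you mention requires the matching fragments, and in your design nobody owns them.
\end{itemize}
So ``without owning points-to resources'' is exactly backwards.

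The missing idea is that the invariant must own a resource describing the whole physical state that agrees with the state interpretation. This is Hostert et al.'s $\ownstate(\pstate)$, and it is exactly what the paper's appendix takes out of its invariant when it applies their Condition~18. Use an invariant of the form $\Exists\vv\expr\!',\pstate.\pure{([\expr],\varnothing)\tpsteps(\vv\expr\!',\pstate)}\ast\ownstate(\pstate)\ast A(\vv\expr\!')$, where $A$ is the authoritative map of thread expressions matched by your per-thread tokens. Then each step goes through:
\begin{itemize}
\item the contents are timeless, so the $\later$ on opening is harmless;
\item agreement identifies $\pstate$ with the actual state;
\item reachability plus safety yields reducibility;
\item after the chosen step, you update both copies of the state and the stepping thread's token, allocate tokens for forked threads, and close the invariant with the extended reachability fact.
\end{itemize}
One side remark: you can open the invariant around a non-atomic $\expr$ here only because you unfold the definition of $\wprename$, which brackets each primitive step with the masks $\top\to\emptyset\to\top$. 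If you stay at the level of proof rules, this is what Hostert et al.'s \textsc{InvAccMaybe} and Condition~18 provide.
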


Note that \Cref{thm:sound-complete} only deals with closed programs.
It does not necessarily say anything about our ability to prove specifications for individual operations or functions of a library, or whether we can prove specifications that are useful for clients.
In contrast, specifications like linearizability are about a data structure that is intended to be used as part of a larger client application.
As we will see in the next section, our results instead provide a completeness theorem that has to do with reusable specifications about a library's operations.

\section{Additional Details about the Proof of \Cref{thm:logatom}}

This section presents additional technical details about the proof of \Cref{thm:logatom} outlined in \Cref{sec:logatom}.

\subsection{The $\textlog{prophTrace}$ Predicate}
\label{sec:app:typed-proph}
Assertion $\prophtrace[\textit{obj}]{\rho}{\prophid}{\phytr}$ is defined as
\begin{align*}
\prophtrace[\textit{obj}]{\rho}{\prophid}{\phytr}&\eqdef\Exists\vv\val.\proph{\prophid}{\vv\val}\ast\textlog{decValidTr}(\textit{obj},\rho,\phytr,\textit{decTr}(\vv{\val}))\\
\textit{decTr}(\Inl(\tau,x)::\vv\val)&\eqdef(\tau,\textlog{Call}(x))::\textit{decTr}(\vv\val)\\
\textit{decTr}(\Inr(\tau,y)::\vv\val)&\eqdef(\tau,\textlog{Ret}(y))::\textit{decTr}(\vv\val)\\
\textit{decTr}(\_)&\eqdef\nil\\
\textlog{decValidTr}(\textit{obj},\rho,\phytr,\vv*{t}{\textit{raw}})&\eqdef\textlog{ValidTr}(\textit{obj},\rho,\phytr)\\
&\hspace*{-3ex}\land(\phytr=\vv*{t}{\textit{raw}}\lor\Exists n<|\vv*{t}{\textit{raw}}|.\phytr=\vv*{t}{\textit{raw}}[0..n-1]\land\lnot\textlog{ValidTr}(\textit{obj},\rho,\vv*{t}{\textit{raw}}[0..n]))\\
\textlog{ValidTr}(\textit{obj},\rho,\phytr)&\eqdef\Exists \rho'.\rho\advsteps[\textit{obj}]{\phytr}\rho'
\end{align*}
The idea is to decode the untyped trace into a typed trace and discard ill-typed traces.
Specifically, the raw resolution list of values is first decoded into a physical trace by function $\textit{decTr}$, and the trace is then further decoded to a ``valid'' trace by relation $\textlog{decValidTr}$.
Function $\textit{decTr}$ tries decoding the head of the raw resolution list, and if it fails, it returns an arbitrary dummy value, which is $\nil$ in this case.
Because the program will always resolve the prophecy variable to $\Inl$ or $\Inr$, this failure will never happen, but we still need to account for the failure case as we have not proved anything about future resolutions yet.
The idea for $\textlog{decValidTr}$ is similar.
However, because predicate $\textlog{ValidTr}$ is undecidable in general, we cannot write a function to check whether a trace is generated by a valid interaction and discard its invalid part.
This relation consists of two parts: $\textlog{ValidTr}(\textit{obj},\rho,\phytr)$ asserts the result of decoding is a valid trace.
The second part says, if $\vv*t{\textit{raw}}$ is fully valid before decoding, then, the whole raw trace will be kept; otherwise, $\vv*t{\textit{raw}}$ must start to become invalid at some point, \ie there exists an index $n$, such that the first $n$ elements in $\vv*t{\textit{raw}}$ form a valid trace, but appending one more element makes the trace invalid.

$\prophtrace[\textit{obj}]{\rho}{\prophid}{\phytr}$ satisfies the following rules:
\begin{mathparpagebreakable}
\inferhref{HtNewProphTrace}{rule:wp-newproph-trace}
{}{\hoare{\TRUE}{\ghostcode{\NewProph}}{\Ret \prophid.\All \textit{obj},\rho. \Exists \phytr.\prophtrace[\textit{obj}]{\rho}{\prophid}{\phytr}}}
\and
\inferhref{HtResolveTrace-Call}{rule:wp-resolve-trace-call}
{\tau\notin\vv\pi.1\\\textit{vr}(x)}
{\curlybracket{\prophtrace[\textit{obj}]{(\vv\pi,\vv\expr,\pstate)}{\prophid}{\vv*t1}}\\\\
{\ghostcode{\Resolve \prophid to \Inl(\tau,x)}}\\\\
\curlybracket{\Ret \TT.\Exists \vv*t2.\pure{\vv*t1=(\tau,\textlog{Call}(x))::\vv*t2} \ast \prophtrace[\textit{obj}]{(\vv\pi\dplus[(\tau,\function{op}(\textit{obj},x))],\vv\expr,\pstate)}{p}{\vv*t2}}}
\and
\inferhref{HtResolveTrace-Ret}{rule:wp-resolve-trace-ret}
{y\in\Val\\\textlog{NoDup}((\vv*\pi1\dplus(\tau,y)::\vv*\pi2).1)}
{\curlybracket{\prophtrace[\textit{obj}]{(\vv*\pi1\dplus(\tau,y)::\vv*\pi2,\vv\expr,\pstate)}{\prophid}{\vv*t1}}\\\\
{\ghostcode{\Resolve \prophid to \Inr(\tau,y)}}\\\\
\curlybracket{\Ret \TT.\Exists \vv*t2.\pure{\vv*t1=(\tau,\textlog{Ret}(y))::\vv*t2} \ast \prophtrace[\textit{obj}]{(\vv*\pi1\dplus(\tau,\bot)::\vv*\pi2,\vv\expr,\pstate)}{p}{\vv*t2}}}
\and
\inferhref{ResolveTrace-Interm}{rule:wp-resolve-trace-interm}
{(\expr_1,\pstate_1)\primstep(\expr_2,\pstate_2,\vv*\expr f)}
{\mprset{fraction=--\ast}
\inferrule{\prophtrace[\textit{obj}]{(\vv*\pi1\dplus(\tau,\expr_1)::\vv*\pi2,\vv\expr,\pstate_1)}{\prophid}{\vv*t1}}
{\Exists \vv*t2.\pure{\vv*t2\prefixof\vv*t1}\ast \prophtrace[\textit{obj}]{(\vv*\pi1\dplus(\tau,\expr_2)::\vv*\pi2,\vv\expr\dplus\vv*\expr f,\pstate_2)}{\prophid}{\vv*t2}}}
\and
\inferhref{ResolveTrace-Child}{rule:wp-resolve-trace-child}
{(\expr_1,\pstate_1)\primstep(\expr_2,\pstate_2,\vv*\expr f)}
{\mprset{fraction=--\ast}
\inferrule{\prophtrace[\textit{obj}]{(\vv\pi,\vv*{\expr}{a}\dplus\expr_1::\vv*\expr b,\pstate_1)}{\prophid}{\vv*t1}}
{\Exists \vv*t2.\pure{\vv*t2\prefixof\vv*t1}\ast \prophtrace[\textit{obj}]{(\vv\pi,\vv*\expr a\dplus\expr_2::\vv*\expr b\dplus\vv*\expr f,\pstate_2)}{\prophid}{\vv*t2}}}
\end{mathparpagebreakable}

Rule \ruleref{rule:wp-newproph-trace} creates a fresh typed prophecy variable $\prophid$.
This rule allows us to associate $\prophid$ with any object at any time after $\prophid$ was created.
A critical step to prove this rule is to show that for any trace $\vv*{t}{\textit{raw}}$, there exists a trace $\phytr$ such that $\textlog{decValidTr}(\textit{obj},\rho,\phytr,\vv*t{\textit{raw}})$.
To do so, we consider each prefix of $\vv*t{\textit{raw}}$: $\nil,\vv*t{\textit{raw}}[0..0],\vv*t{\textit{raw}}[0..1],\dots,\vv*t{\textit{raw}}$, and find the last prefix still satisfying $\textlog{ValidTr}$.
Notably, this requires using the law of excluded middle because given a program from a Turing-complete language, deciding whether it can generate a trace is equivalent to the halting problem~\cite{rice1953}.

Rules \ruleref{rule:wp-resolve-trace-call} and \ruleref{rule:wp-resolve-trace-ret} handle the emission of physical events.
To emit a $\textlog{Call}$/$\textlog{Ret}$ event, one must show it is valid to perform this event on the current configuration.
These rules then resolve the prophecy variable and meanwhile update the configuration.
In the premise of \ruleref{rule:wp-resolve-trace-ret}, \textlog{NoDup} means no duplication, \ie a list does not contain the same item twice.

Rules \ruleref{rule:wp-resolve-trace-interm} and \ruleref{rule:wp-resolve-trace-child} handle the evaluation of the configuration.
Since there is no event for intermediate steps, these rules only update the configuration but do not resolve the prophecy variable.
Note that performing an intermediate step could truncate the trace:
Relation \textlog{decValidTr} implies that $\vv*t1$ is the \emph{longest} possible trace emitted by evaluating the current configuration, which requires the current configuration to take a specific first step, but if the prim step used by the rules is not the expected first step, then the new configuration would not be able to emit a trace as long as $\vv*t1$.
For example, program ``$\If \langkw{NondetBool}() then \function{op}(\textit{obj},1) \Else ()$'' can emit a trace of $[(\tau,\textlog{Call}(1))]$ under all possible executions, but can emit $\nil$ if the first prim step is $(K[\langkw{NondetBool}],\pstate)\primstep(K[\False],\pstate,\nil)$.
This explains why $\vv*t2$ can only be a prefix of $\vv*t1$.

\subsection{The $I_{\textlog{atom}}$ Invariant}
\label{sec:app:invariant}

\subsubsection{The Exclusive Authoritative Ghost State}
The actual invariant does not use a ghost map to track the abstract sequential state of the object, but rather uses a more restrictive resource $\ownGhost{\gamma_s}{\authfull\exinj(s)}$.
For the definition of $\PState_\mu$, we actually use $\ownGhost{\gamma_s}{\authfrag\exinj(s)}$.
The exclusive authoritative ghost resource satisfies the rules below.
\begin{mathpar}
\inferhref{ExAuthAlloc}{rule:exauth-alloc}
{}{\proves\pvs\Exists \gamma.\ownGhost{\gamma}{\authfull \exinj(x)}\ast\ownGhost{\gamma}{\authfrag \exinj(x)}}
\and
\inferhref{ExAuthAgree}{rule:exauth-agree}
{}{\ownGhost{\gamma}{\authfull \exinj(x)}\ast\ownGhost{\gamma}{\authfrag \exinj(y)}\proves\pure{x=y}}
\and
\inferhref{ExAuthUpdate}{rule:exauth-update}
{}{\ownGhost{\gamma}{\authfull \exinj(x)}\ast\ownGhost{\gamma}{\authfrag \exinj(y)}\proves\pvs\ownGhost{\gamma}{\authfull \exinj(z)}\ast\ownGhost{\gamma}{\authfrag \exinj(z)}}
\end{mathpar}

\subsubsection{Definition and Properties of \textit{pendingOp}}
\label{sec:app:invariant-consistency}
To compute $\textit{pendingOp}(\anntr)$, let's consider a tag $\tau$ of some event that has happened or will happen.
If $(\tau,\textlog{Call}(\_))\in \anntr$, then operation $\tau$ has not been called yet, so it is definitely not pending.
If $(\tau,\textlog{Call}(\_))\notin \anntr$ but $(\tau,\textlog{Lin}(\_,\_))\in \anntr$ or $(\tau,\textlog{Ret}(\_))\in \anntr$, then operation $\tau$ must have been called before the current time, and has not yet returned, so it is definitely pending.
Finally, if no event in $\anntr$ has tag $\tau$, then $\tau$ may have returned before the current time, or $\tau$ may have been called, but will never return.
The latter is possible because we work on partial correctness, so an operation is permitted to loop infinitely.
As a result, in the last case, we cannot determine whether $\tau$ is pending or not, and $\textit{pendingOp}(\anntr)$ will conservatively exclude $\tau$ in its result.
This also explains why the invariant uses $\textit{pendingOp}(\anntr)\subseteq \dom(\Phi s)$ rather than strict equality: because function $\textit{pendingOp}$ is an under-approximation, some pending operations can be missing in $\textit{pendingOp}(\anntr)$.

\begin{remark}\label{remark:pendingop-init}
$\textlog{well-formed}_{\textlog{lin}}(\anntr)\implies\textit{pendingOp}(\anntr)=\varnothing$
\end{remark}
\begin{remark}\label{remark:pendingop-call}
$\textit{pendingOp}(\anntr)\subseteq \set{\tau}\cup\textit{pendingOp}((\tau,\textlog{Call}(x))::\anntr)$
\end{remark}
\begin{remark}\label{remark:pendingop-lin}
$\textit{pendingOp}(\anntr)\subseteq\textit{pendingOp}((\tau,\textlog{Lin}(x,y))::\anntr)$
\end{remark}
\begin{remark}\label{remark:pendingop-ret}
$\textit{pendingOp}(\anntr)\subseteq\textit{pendingOp}((\tau,\textlog{Ret}(y))::\anntr)\setminus\set{\tau}$
\end{remark}

\subsubsection{Properties of Partial Linearizability}
\label{sec:app:invariant-partial-lin}
The following properties of $\textlog{partial-well-formed}_{\textlog{lin}}$ and $\cfgsafename$ (defined in \Cref{sec:logatom-inv}) are used in the proof of \Cref{thm:logatom}.

\begin{remark}\label{remark:partial-wf} $\textlog{partial-well-formed}_{\textlog{lin}}(\anntr)\land \anntr'\suffixof\anntr\implies\textlog{partial-well-formed}_{\textlog{lin}}(\anntr')$
\end{remark}

\begin{remark}\label{remark:cfg-safe-init}
$\lin[\mu]{\function{init}}{\function{op}}\land([\function{init}()],\varnothing)\tpsteps([\textit{obj}],\pstate)\implies\cfgsafe{\nil}{\nil}{\pstate}$.
\end{remark}

\begin{remark}\label{remark:cfg-safe-step}
$\cfgsafename(\rho)\land\smash{\rho\advsteps{\phytr}\rho'}\implies\cfgsafename(\rho')$
\end{remark}

\subsection{Detailed Proof of \Cref{thm:logatom}}
\label{sec:app:logatom-proof}
Part of the proof is stated in terms of the $\wprename$ assertion. The connection between $\wprename$ and Hoare triple is
$$\hoare{P}e{\Ret v. Q}\eqdef\Square\left(\All\Phi. P\wand\later(\All v.Q\wand\Phi(v))\wand\wpre{e}{\Phi}\right)$$
The proof consists of four parts.

\subsubsection{Specification of Wrapped $\function{init}$}
We first prove specification (\ref{eq:init-spec}):
$$\hoare{\TRUE}{\textit{wrapi}(\function{init})()}{\Ret \textit{obj}. \Exists \gamma.\IsP_\mu(\gamma,\textit{obj})\ast\PState_\mu(\gamma,s_0)}$$

Observe that Clause (1) of \Cref{def:linear} can be strengthened to $\safe[\textit{vo}]{\function{init}()}{\varnothing}$ where $\textit{vo}(\val,\pstate)\eqdef ([\function{init}()],\varnothing)\tpsteps ([\val],\pstate)$.
Using Theorem 34 by \citet{hostert2026-completeness}, we obtain a specification for $\function{init}$:
$$\hoare{\TRUE}{\function{init}()}{\Ret \textit{obj}.\Exists \pstate.\ownstate(\pstate)\ast\pure{\textit{vo}(\textit{obj},\pstate)}}$$

Using this specification and \ruleref{rule:wp-newproph-trace}, and picking the initial configuration as $(\nil,\nil,\pstate)$, we obtain
$$\ownstate(\pstate)\ast\pure{([\function{init}()],\varnothing)\tpsteps([\textit{obj}],\pstate)}\ast\prophtrace[\textit{obj}]{(\nil,\nil,\pstate)}{\prophid}{\phytr}$$
and we are obligated to prove
$$\Exists\gamma.\IsP_\mu(\gamma,(\textit{obj},\prophid))\ast\PState_\mu(\gamma,s_0)$$
which is equivalent to
$$\Exists \gamma,\gamma_\pi,\gamma_e,\gamma_\phi,\gamma_s.\ownGhost{\gamma}{\Square(\gamma_\pi,\gamma_e,\gamma_\phi,\gamma_s)}\ast\ownGhost{\gamma_s}{\authfrag \exinj(s_0)}\ast I_{\textlog{atom}_\mu}(\gamma,\textit{obj},\prophid)$$

The ownership of the typed prophecy variable $\prophtrace[\textit{obj}]{(\nil,\nil,\pstate)}{\prophid}{\phytr}$ guarantees $(\nil,\nil,\pstate)\advsteps{\phytr} \rho'$ for some $\rho'$.
Further because $([\function{init}()],\varnothing)\tpsteps([\textit{obj}],\pstate)$, by $\lin[\mu]{\function{init}}{\function{op}}$, we have $\trlin[\mu]{\phytr}$.
Namely, there exists an annotated trace $\anntr$, such that $\phytr=\pi_\Sigma(\anntr)$, $\textlog{well-formed}_{\textlog{lin}}(\anntr)$, and $\Exists s'.\textlog{sound}_\mu(\pi_{\textlog{lin}}(\anntr),s_0,s')$.
By \Cref{remark:cfg-safe-init}, $\cfgsafe{\nil}{\nil}{\pstate}$ holds.

Allocate the initial ghost resources as such:
$$\ownGhost{\gamma_\pi}{\authfull\nil}\ast\ownGhost{\gamma_e}{\authfull\nil}\ast\ownGhost{\gamma_\phi}{\authfull\varnothing}\ast\ownGhost{\gamma_s}{\authfull \exinj(s_0)}\ast\ownGhost{\gamma_s}{\authfrag \exinj(s_0)}\ast\ownGhost{\gamma}{\Square(\gamma_\pi,\gamma_e,\gamma_\phi,\gamma_s)}$$

It is then straightforward to establish $I_{\textlog{atom}}$ for this initial state: the operation lists and receipt map are empty, so (\ref{fig:I-atom-part-au})--(\ref{fig:I-atom-part-dom}) hold vacuously; (\ref{fig:I-atom-part-lin}) follows from $\textlog{well-formed}_{\textlog{lin}}(\anntr)$ and the $\cfgsafename$ fact just established; and the allocated ghost resources together with $\textlog{prophTrace}$ prove (\ref{fig:I-atom-part-own}).

This concludes the proof of the specification for $\textit{wrapi}(\function{init})$.

\subsubsection{The Call of an Operation}\label{sec:logatom-proof-call}
We now turn to the proof of specification (\ref{eq:op-spec}):
$$\pure{\textit{vr}(x)}\ast\IsP_\mu(\gamma,\textit{obj})\vdash\ahoare{\PState_\mu(\gamma,s)}{\textit{wrapo}(\function{op})(\textit{objp},x)}{\Ret y.\Exists s'.\PState_\mu(\gamma,s')\ast\pure{\textit{rr}(s,x,s',y)}}$$

By the definition of logically atomic triples, this is equivalent to proving $\AU_{\cdots}(\Phi)\wand\wpre{\textit{wrapo}(\function{op})(\textit{objp},x)}{\Phi}$ for an arbitrary $\Phi$ under the hypotheses that $\textit{vr}(x)$ and $\knowInv{}{I_{\textlog{atom}_\mu}(\gamma,\textit{obj},p)}$.

We first use \ruleref{rule:wp-newghostid} to allocate a fresh token for this operation and obtain resource $\ghostToken{\tau}$.
We now need to verify $\ghostcode{\Resolve p to \Inl(\tau,x)}$.
This requires opening the invariant to register the new operation.

By repeatedly using \ruleref{rule:token-ne} for each component in (\ref{fig:I-atom-part-token}), we have $\tau\notin\vv\pi.1$.

Using \ruleref{rule:wp-resolve-trace-call}, we update $\prophtrace{\rho}{p}{\pi_\Sigma(\anntr)}$ to $\prophtrace{\rho'}{p}{\vv u}$, where $\rho'=(\vv\pi\dplus[(\tau,\function{op}(\textit{obj},x))],\vv\expr,\pstate)$ and $\pi_\Sigma(\anntr)=(\tau,\textlog{Call}(x))::\vv u$.
Split $\anntr$ into three parts: $\anntr=\phytr_{\textlog{lin}}\dplus (\tau,\textlog{Call}(x))::\anntr'$, such that $\vv u=\pi_\Sigma(\anntr')$.
Because $(\tau,\textlog{Call}(x))$ is the first physical event in $\anntr$, $\phytr_{\textlog{lin}}$ only consists of $\textlog{Lin}$ events.

Update ghost state in the invariant:
\begin{align*}
\ownGhost{\gamma_\pi}{\authfull\textit{l2m}(\textit{activeOp}(\vv\pi))}&\vsWI
\ownGhost{\gamma_\pi}{\authfull\textit{l2m}(\textit{activeOp}(\vv\pi\dplus[(\tau,\function{op}(\textit{obj},x))]))}\ast\tau\hookrightarrow^{\gamma_\pi} \function{op}(\textit{obj},x)\\
\ownGhost{\gamma_\phi}{\authfull \Phi s}&\vsWI
\ownGhost{\gamma_\phi}{\authfull (\mapinsert{\tau}{\Phi}{\Phi s})}\ast\tau\hookrightarrow^{\gamma_\phi}\Phi
\end{align*}
Note that $\ownGhost{\gamma_e}{\authfull\vv{\expr}}$ and $\ownGhost{\gamma_s}{\authfull \exinj(s)}$ stay unchanged.
The former is unchanged because we are only registering a new operation, which should not affect existing child threads.
For the latter, we will update the abstract state of the object later when linearizing operations in $\vv*t{\textlog{lin}}$.

Let's try to close the invariant and see what is left:
Most of (\ref{fig:I-atom-part-own}) can be proved by the new states, except for $\ownGhost{\gamma_s}{\authfull \exinj(s)}$.
Part (\ref{fig:I-atom-part-token}) holds by inserting the new token $\ghostToken{\tau}$.
Part (\ref{fig:I-atom-part-dom}) holds by \Cref{remark:pendingop-call}.
Part (\ref{fig:I-atom-part-lin}) is proved by \Cref{remark:partial-wf,remark:cfg-safe-step}.
The only part left is (\ref{fig:I-atom-part-au}), which requires
linearizing operations in $\vv*t{\textlog{lin}}$ using \Cref{lem:helping} below and inserting the $\AU_{\cdots}(\Phi)$ in the context into the big separation conjunction.

\begin{lemma}[Helping]\label{lem:helping}
Let $\vv*{t}{\textlog{lin}}$ be a trace of linearization events, $\anntr'$ be an annotated trace, and $\Phi s$ be a finite map from tags to $\Prop$'s, such that $\textlog{partial-well-formed}_{\textlog{lin}}(\vv*{t}{\textlog{lin}}\dplus\anntr')$, $\textlog{sound}_\mu(\pi_{\textlog{lin}}(\vv*{t}{\textlog{lin}}\dplus\anntr'),s,s_{\textit{end}})$, and $\vv*{t}{\textlog{lin}}.1\subseteq \dom(\Phi s)$, then
$$\mprset{fraction=--\ast}
\inferrule{
\raisebox{2ex}{\ensuremath{\ownGhost{\gamma}{\Square(\gamma_\pi,\gamma_e,\gamma_\phi,\gamma_s)}}}\\
\raisebox{2ex}{\ensuremath{\ownGhost{\gamma_s}{\authfull\exinj(s)}}}\\
\raisebox{2ex}{\ensuremath{\Sep_{\tau\gets \Phi\in\Phi s} \textit{interpret}(\vv*{t}{\textlog{lin}}\dplus\anntr',\tau,\Phi)}}}
{\pvs\Exists s'.\pure{\textlog{sound}_\mu(\pi_{\textlog{lin}}(\anntr'),s',s_{\textit{end}})}\ast
\ownGhost{\gamma_s}{\authfull \exinj(s')}\ast
\Sep_{\tau\gets \Phi\in\Phi s} \textit{interpret}(\anntr',\tau,\Phi)}$$
\end{lemma}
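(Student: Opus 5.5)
The proof goes by induction on the length of $\vv*{t}{\textlog{lin}}$, generalizing over $s$ (and keeping $\anntr'$, $s_{\textit{end}}$ and $\Phi s$ fixed).

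\emph{Base case.} Here $\vv*{t}{\textlog{lin}}=\nil$ and we take $s'=s$. The soundness premise is then exactly $\textlog{sound}_\mu(\pi_{\textlog{lin}}(\anntr'),s,s_{\textit{end}})$. The authoritative resource and the big separating conjunction are returned unchanged.

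\emph{Inductive case.} Write $\vv*{t}{\textlog{lin}}=(\tau_0,\textlog{Lin}(x,y))::\vv*{t}{\textlog{rest}}$. By \Cref{remark:sound-cons}, the soundness hypothesis gives $s_1$ with $\textit{vr}(x)$, $\textit{rr}(s,x,s_1,y)$ and $\textlog{sound}_\mu(\pi_{\textlog{lin}}(\vv*{t}{\textlog{rest}}\dplus\anntr'),s_1,s_{\textit{end}})$. Since $\tau_0\in\dom(\Phi s)$, we split the big separating conjunction into the $\tau_0$ component and the rest. The single linearization step consists of two facts.

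\textbf{(a)} For $\tau_0$, $\textit{interpret}(\vv*{t}{\textlog{lin}}\dplus\anntr',\tau_0,\Phi_0)$ should be, by its definition, the ``pending, not yet linearized but linearizes in the future'' case. This holds because the trace contains $\textlog{Lin}$ for $\tau_0$ but no $\textlog{Call}$, which partial well-formedness confirms. So it contains the atomic update $\AU(\Phi_0)$. We open the AU, obtaining $\PState_\mu(\gamma,s)$, i.e.\ $\ownGhost{\gamma_s}{\authfrag\exinj(s)}$; the persistent $\ownGhost{\gamma}{\Square(\ldots)}$ fixes the ghost name $\gamma_s$. We then use \ruleref{rule:exauth-agree} and \ruleref{rule:exauth-update} to move both halves to $s_1$, and commit the AU with $\textit{rr}(s,x,s_1,y)$. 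This yields $\Phi_0(y)$, which is exactly $\textit{interpret}(\vv*{t}{\textlog{rest}}\dplus\anntr',\tau_0,\Phi_0)$: the case ``linearized with result $y$, not yet returned''.

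\textbf{(b)} For every other $\tau\neq\tau_0$, the $\textit{interpret}$ clause depends only on the events with tag $\tau$, i.e.\ on $\pi_\tau$. Dropping an event tagged $\tau_0$ leaves $\pi_\tau$ unchanged, so $\textit{interpret}(\vv*{t}{\textlog{lin}}\dplus\anntr',\tau,\Phi)$ equals $\textit{interpret}(\vv*{t}{\textlog{rest}}\dplus\anntr',\tau,\Phi)$. I would isolate this as a small lemma, with the caveat that if $\textit{interpret}$ inspects more than $\pi_\tau$, the equation needs re-checking.

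We then apply the induction hypothesis at state $s_1$. Its hypotheses hold: partial well-formedness is preserved for the suffix by \Cref{remark:partial-wf}; the soundness fact was obtained above; and the tags of $\vv*{t}{\textlog{rest}}$ are contained in $\dom(\Phi s)$. The result is exactly the goal.

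\emph{Main obstacle.} The delicate step is (a). Opening and committing the AU requires the right mask and invariant context, so the update modality $\pvs$ must carry a mask excluding the invariant. We also need to know the AU is stored in $\textit{interpret}$ in a form that is directly committable, not merely under a later. I expect that to be handled by the definition of $\textit{interpret}$. Failing that, the AU's laterable/atomic-commit rule can be applied, since the $\AU$ predicate is used inside an invariant and is defined via a greatest fixpoint whose unfolding is immediate.
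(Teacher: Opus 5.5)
Your proposal is correct and follows the paper's proof essentially step for step. Like the paper, you induct on $\vv*{t}{\textlog{lin}}$ (generalizing $s$), use \Cref{remark:sound-cons} to peel off the first $\textlog{Lin}$ event, fire its $\AU$ by updating $\ownGhost{\gamma_s}{\authfrag\exinj(s)}$ to the new state with the exclusive-auth rules to obtain $\Phi_0(y)$, and then apply the induction hypothesis. You also spell out details the paper leaves implicit, namely \ruleref{rule:exauth-agree}, the persistent name tuple fixing $\gamma_s$, and the fact that $\textit{interpret}$ for $\tau\neq\tau_0$ depends only on the head of $\pi_\tau$. The one point you pass over is the case where $\pi_{\tau_0}$ of the remaining trace is empty: there $\textit{interpret}$ is trivially true and $\Phi_0(y)$ is simply dropped, and otherwise partial well-formedness ensures the remaining $\textlog{Ret}$ carries the same $y$.
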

\begin{proof}
Induction on list $\vv*{t}{\textlog{lin}}$.
The base case is trivial: if $\vv*{t}{\textlog{lin}}=\nil$, then we don't need to update anything.

For the inductive case: suppose $\vv*{t}{\textlog{lin}}=(\tau,\textlog{Lin}(x,y))::\vv*{t}{\textlog{lin}}'$.
To linearize the first event $(\tau,\textlog{Lin}(x,y))$, we need to prove the goal below by firing the $\AU$
$$\mprset{fraction=--\ast}
\inferrule{\ownGhost{\gamma_s}{\authfull \exinj(s)}\\\AU_{\PState_\mu(\gamma,s),\Ret y.\Exists s'.\PState_\mu(\gamma,s')\ast\pure{\textit{rr}(s,x,s',y)}}(\Phi)}
{\pvs\Exists s'.\pure{\textlog{sound}_\mu(\pi_{\textlog{lin}}(\vv*t{\textlog{lin}}'\dplus\anntr'),s',s_{\textit{end}})}\ast
\ownGhost{\gamma_s}{\authfull\exinj(s')}\ast\Phi(y)}$$

To fire the $\AU$, we first obtain $\ownGhost{\gamma_s}{\authfrag\exinj(s)}$ from the precondition of the $\AU$.
Because $\textlog{sound}_\mu((\tau,\textlog{Lin}(x,y))::\pi_{\textlog{lin}}(\vv*t{\textlog{lin}}'\dplus\anntr'),s,s_{\textit{end}})$, by \Cref{remark:sound-cons}, there exists a new state $s'$ and a response $y$ such that $\textit{vr}(x)$, $\textit{rr}(s,x,s',y)$, and $\textlog{sound}_\mu(\pi_{\textlog{lin}}(\vv*t{\textlog{lin}}'\dplus\anntr'),s',s_{\textit{end}})$.

Update the ghost state: $\ownGhost{\gamma_s}{\authfull \exinj(s)}\ast\ownGhost{\gamma_s}{\authfrag \exinj(s)}\vsWI \ownGhost{\gamma_s}{\authfull\exinj( s')}\ast\ownGhost{\gamma_s}{\authfrag \exinj(s')}$.

Next, we supply $\ownGhost{\gamma_s}{\authfrag \exinj(s')}$ and $\pure{\textit{rr}(s,x,s',y)}$ in exchange for the receipt $\Phi(y)$.
This completes the goal for the first event.

Using the induction hypothesis, we linearize remaining events in $\vv*{t}{\textlog{lin}}'$, and conclude the proof.
\end{proof}

After applying \Cref{lem:helping} and inserting the $\AU$ for the new operation, we can conclude (\ref{fig:I-atom-part-au}) and use the new state $\ownGhost{\gamma_s}{\authfull\exinj(s')}$ to complete (\ref{fig:I-atom-part-own}).

After closing the invariant, we still own $\tau\hookrightarrow^{\gamma_\pi} \function{op}(\textit{obj},x)$ and $\tau\hookrightarrow^{\gamma_\phi}\Phi$.

\subsubsection{Intermediate Steps of the Operation}\label{sec:logatom-proof-interm}
The next step is to verify $\function{op}(\textit{obj},x)$.
Using an argument in the flavor of Theorem 17 by \citet{hostert2026-completeness}, we reduce the goal to
$$\All \expr.\tau\hookrightarrow^{\gamma_\pi} \expr\wand
\tau\hookrightarrow^{\gamma_\phi}\Phi
\wand\wpre{e}{\Ret v.\tau\hookrightarrow^{\gamma_\pi} v\ast \tau\hookrightarrow^{\gamma_\phi}\Phi}$$
and use L\"ob induction to prove this goal.
To make progress, we do case analysis on whether $e\in\Val$.
The postcondition is trivially true if $e$ is a value, and for the non-value case, we apply a stronger invariant access rule~\citep[Lemma 15]{hostert2026-completeness}:

$$\mprset{fraction=--\ast}\inferhref{InvAccMaybe}{rule:inv-acc-maybe}
{\pvs[\mask_1][\mask_2]\Big(\left(\Exists K, e'.\pure{e=K[e']\land\atomic(e')}\ast\wpre{e'}[\mask_2]{\Ret v.\pvs[\mask_2][\mask_1]\wpre{K[\val]}[\mask_1]{\Phi}}\right)\\\\
\lor \pvs[\mask_2][\mask_1]\wpre{\expr}[\mask_1]{\Phi}\Big)}
{\wpre{e}[\mask_1]{\Phi}}$$
While the standard \ruleref{rule:inv-acc} rule only allows opening an invariant when the expression is atomic, this rule allows us to open an invariant for any expression and decide whether the expression is atomic \emph{afterwards}.
We have two choices after opening the invariant: either prove the first redex is atomic and close the invariant after that atomic step, or close the invariant immediately.
With the help of \ruleref{rule:inv-acc-maybe}, we can open the $I_{\textlog{atom}}$ invariant and get the latest configuration $\rho=(\vv*\pi 1\dplus(\tau,\expr_1)::\vv*\pi 2,\vv\expr,\pstate_1)$.
Next, we want to use the language-independent completeness condition~\citep[Condition 18]{hostert2026-completeness} to reduce $e$ for one step.
$$
\inferhref{LangCompleteness}{rule:lang-completeness}
{}
{\pure{\red(e_1,\pstate_1)}\ast\ownstate(\pstate_1)\proves\pvs[\mask]\Big(\left(\Exists K,e_1'.\pure{e_1=K[e_1']\land\atomic(e_1')}\ast\All \Phi.\later C_1\wand\wpre{e_1'}[\mask]{\Phi}\right)\\\\
{}\lor{} \ownstate(\pstate_1)\ast\All\Phi.\later C_2\wand\wpre{e_1}{\Phi}\Big)}$$
where
\begin{align*}
C_1\eqdef{}& \All \vv\kappa,\val_2,\pstate_2,\vv*{e}{f}.\pure{(e_1',\pstate_1)\basestep[\vv \kappa](\val_2,\pstate_2,\vv*ef)}\vsWI(\ownstate(\pstate_2)\wand\Phi(\val_2))\ast\Sep_{\expr_f\in\vv*\expr f}\wpre{e_f}{\Ret\_.\TRUE}\\
C_2\eqdef{}&\All e_2,\vv*ef.\left(\All \pstate_1'.\ownstate(\pstate_1')\vsWI[\mask]\Exists \vv\kappa,\pstate_2.\pure{(e_1,\pstate_1')\primstep[\vv \kappa](e_2,\pstate_2,\vv*ef)}\ast\ownstate(\pstate_2)\right)\vsWI[\top]\\
&\hspace*{20em}\wpre{e_2}{\Phi}\ast\Sep_{\expr_f\in\vv*\expr f}\wpre{\expr_f}{\Ret \_.\TRUE}
\end{align*}

To use this condition, we need to prove $\red(\expr_1,\pstate_1)$ and provide the ownership of the current program state $\ownstate(\pstate_1)$. The former holds because $\textlog{cfg-safe}(\rho)$, and the latter comes from the invariant.
Applying this condition leaves two cases.

For case $C_1$, we know the first redex of $e_1$ is atomic, so we can keep the invariant open for one step.
In fact, $\All \Phi.\later C_1\wand\wpre{e_1'}[\mask]{\Phi}$ from the condition is the exact rule needed to justify that step.
After that step, we are obligated to prove%
\footnote{In the goal, assume $\top\setminus\namesp$ is the mask after opening the $I_{\textlog{atom}}$ invariant.}
$$\left(\ownstate(\pstate_2)\wand\pvs[\top\setminus\namesp][\top]\wpre{K[\val_2]}{\Phi}\right)\ast\Sep_{\expr_f\in\vv*\expr f}\wpre{e_f}{\Ret\_.\TRUE}$$
under the assumption that $(e_1',\pstate_1)\basestep[\vv \kappa](\val_2,\pstate_2,\vv*ef)$.

Using \ruleref{rule:wp-resolve-trace-interm}, we update $\prophtrace{(\vv*\pi 1\dplus(\tau,\expr_1)::\vv*\pi 2,\vv\expr,\pstate_{1})}{p}{\pi_\Sigma(\anntr)}$ to $\prophtrace{(\vv*\pi 1\dplus(\tau,K[\val_2])::\vv*\pi 2,\vv\expr\dplus\vv*\expr f,\pstate_{2})}{p}{\pi_\Sigma(\anntr')}$, where $\anntr'\prefixof\anntr$.
We then update the ghost state $\gamma_\pi$ and $\gamma_e$ accordingly; this includes allocating new ghost points-tos for each thread in $\vv*\expr f$: $$\Sep_{i=0}^{\left|\vv*\expr f\right|-1}(\left|\vv\expr\right|+i)\hookrightarrow^{\gamma_e} \vv*\expr f(i)$$

We now split the goal into two parts: $\ownstate(\pstate_2)\wand\pvs[\top\setminus\namesp][\top]\wprename\cdots$ and the iterated separating conjunction of $\wprename$\uq s over $\vv*ef$.

For part 1, we obtain the ownership of the new state after that atomic step, and we need to close the invariant.
Most parts are mechanical, and the only interesting part is (\ref{fig:I-atom-part-au}), for which we need to prove $\textit{interpret}(\anntr,\tau,\Phi)\proves\textit{interpret}(\anntr',\tau,\Phi)$.
If $\pi_\tau(\anntr')$ is not nil, because $\anntr'\prefixof\anntr$, $\pi_\tau(\anntr)$ must also be non-nil and have the same head as $\pi_\tau(\anntr')$, which implies $\textit{interpret}(\anntr',\tau,\Phi)=\textit{interpret}(\anntr,\tau,\Phi)$.
Otherwise, if $\pi_\tau(\anntr')$ is nil, then $\textit{interpret}(\anntr',\tau,\Phi)$ is trivially true.

After closing the invariant, we can use the L\"ob induction hypothesis to finish part 1.

Part 2 is proved by \Cref{lem:safe-child} below.

For case $C_2$, we do not know whether the first redex of $e_1$ is atomic, so we have to immediately close the invariant.
As a result, we lose all the knowledge learned from the invariant, including the current configuration $\rho$ and the ownership of the current state $\ownstate(\pstate_1)$.
Fortunately, with the help of $\All \Phi.\later C_2\wand\wpre{e_1}{\Phi}$, we can still take one step with the invariant closed.
To prove $C_2$, we need to prove
$$\wpre{e_2}{\Phi}\ast\Sep_{\expr_f\in\vv*\expr f}\wpre{e_f}{\Ret\_.\TRUE}$$
under the assumption that
$$\All \pstate_1'.\ownstate(\pstate_1')\vsWI[\mask]\Exists \vv\kappa,\pstate_2.\pure{(e_1,\pstate_1')\primstep[\vv \kappa](e_2,\pstate_2,\vv*ef)}\ast\ownstate(\pstate_2)$$
This is a rather strong assumption because it allows us to take one step from $e_1$ under \emph{any} state.

To use this assumption, we open the invariant again and get a (possibly different) configuration  $\rho'=(\vv*\pi 1\dplus(\tau,\expr_1)::\vv*\pi 2,\vv\expr,\pstate_1')$ and the ownership of $\ownstate(\pstate_1')$.
The remaining proof for this case is identical to case $C_1$, which also uses \Cref{lem:safe-child} to discharge the goal for threads in $\vv*ef$.

\begin{lemma}\label{lem:safe-child}
$\knowInv{}{I_{\textlog{atom}_\mu}(\gamma,\textit{obj},p)}
\ast\ownGhost{\gamma}{\Square(\gamma_\pi,\gamma_e,\gamma_\phi,\gamma_s)}\proves i\hookrightarrow^{\gamma_e}\expr\wand \wpre{e}{\Ret \_.\TRUE}$
\end{lemma}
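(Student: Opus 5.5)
The plan mirrors the intermediate-step argument just given for operations (\Cref{sec:logatom-proof-interm}), specialised to child threads. A child thread carries no atomic update, has no tag, and its postcondition is $\TRUE$, so it is strictly simpler. I will prove the stronger, Löb-friendly statement
$$\All i,\expr.\ i\hookrightarrow^{\gamma_e}\expr\wand\wpre{\expr}{\Ret\_.\TRUE}$$
by Löb induction, with the invariant and the persistent name-tuple in the context. Given $i\hookrightarrow^{\gamma_e}\expr$, I first case-split on whether $\expr\in\Val$. If it is a value, \ruleref{rule:wp-value} closes the goal immediately.

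If $\expr$ is not a value, I apply \ruleref{rule:inv-acc-maybe} to open $I_{\textlog{atom}_\mu}(\gamma,\textit{obj},p)$. This gives the current configuration $\rho=(\vv\pi,\vv\expr,\pstate_1)$ together with $\ownGhost{\gamma_e}{\authfull\vv\expr}$. By \ruleref{rule:map-lookup} (using the name-tuple agreement to identify $\gamma_e$), we learn $\vv\expr(i)=\expr$, so $\rho=(\vv\pi,\vv*\expr a\dplus\expr::\vv*\expr b,\pstate_1)$. From $\cfgsafename(\rho)$ in part (\ref{fig:I-atom-part-lin}) and the fact that $\expr$ is not a value, we obtain $\red(\expr,\pstate_1)$. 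Together with $\ownstate(\pstate_1)$ from the invariant, this lets us invoke \ruleref{rule:lang-completeness}, which again yields the two cases $C_1$ and $C_2$.

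\emph{Case $C_1$ (atomic redex).} Here we take the step with the invariant still open, obtaining $(\expr_1',\pstate_1)\basestep(\val_2,\pstate_2,\vv*\expr f)$ and hence a prim step $\expr\to K[\val_2]$. We then:
\begin{enumerate}
\item apply \ruleref{rule:wp-resolve-trace-child} to move $\textlog{prophTrace}$ to the new configuration with a prefix trace $\pi_\Sigma(\anntr')$;
\item update $i\hookrightarrow^{\gamma_e}K[\val_2]$ via \ruleref{rule:map-update};
\item allocate fresh points-tos $(|\vv\expr|+j)\hookrightarrow^{\gamma_e}\vv*\expr f(j)$ via \ruleref{rule:map-insert}.
\end{enumerate}
To close the invariant, the remaining obligations are discharged as follows. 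Part (\ref{fig:I-atom-part-lin}) follows from \Cref{remark:cfg-safe-step} and \Cref{remark:partial-wf}, since the new trace is a prefix. Part (\ref{fig:I-atom-part-au}) follows from the same $\textit{interpret}(\anntr,\tau,\Phi)\proves\textit{interpret}(\anntr',\tau,\Phi)$ argument used for operations. The domain and token parts are unchanged, because $\vv\pi$ and $\Phi s$ are untouched. After closing, the Löb hypothesis, which is available because we are under a later, gives $\wprename$ for $K[\val_2]$ from the updated points-to, and also gives $\wprename$ for each forked $\expr_f$ from its freshly allocated points-to. This is exactly why the lemma is stated for arbitrary $i$.

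\emph{Case $C_2$ (non-atomic redex).} We close the invariant immediately and rely only on $i\hookrightarrow^{\gamma_e}\expr$. To discharge $C_2$, we reopen the invariant, obtain a possibly different $\pstate_1'$, and re-look-up that thread $i$ still holds $\expr$. This remains true because we own the fragment, but the index $i$ may sit in a longer list, which is harmless since the list only grows by appending. We then use the supplied step under $\ownstate(\pstate_1')$ and proceed exactly as in case $C_1$.

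The main obstacle is the bookkeeping of the child list: ensuring that the indices in $\gamma_e$ remain stable as other threads fork, since they only append, and that $\ownGhost{\gamma_e}{\authfull\vv\expr}$, viewed as a map from indices, stays in sync with the $\vv\expr$ component of $\rho$ inside $\textlog{prophTrace}$. Everything else is a direct reuse of the operation case.
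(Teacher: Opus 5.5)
Your proposal is correct and follows the paper's proof essentially step for step: L\"ob induction over all $i,\expr$, then \ruleref{rule:inv-acc-maybe} with \ruleref{rule:lang-completeness} and its two cases, \ruleref{rule:wp-resolve-trace-child} to advance $\textlog{prophTrace}$, and the L\"ob hypothesis for both the continuation and the forked threads. One small bookkeeping correction: the new annotated trace $\anntr'$ is a \emph{prefix} of $\anntr$, so \Cref{remark:partial-wf} (which is about suffixes) does not apply as stated, and the domain part is not literally unchanged because $\anntr$ is replaced by $\anntr'$; you instead need the easy facts that partial well-formedness and $\textlog{sound}_\mu$ (with a different end state) are preserved under prefixes, and that $\textit{pendingOp}(\anntr')\subseteq\textit{pendingOp}(\anntr)$ whenever $\anntr'$ is a prefix of a partially well-formed $\anntr$.
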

\begin{proof}[Proof Sketch]
Similar to \Cref{sec:logatom-proof-interm}, the proof is based on L\"ob induction, but this time, we use \ruleref{rule:wp-resolve-trace-child} to update the \textlog{prophTrace} resource.
\end{proof}

\subsubsection{The Return of the Operation}
As the final step, we need to verify $\ghostcode{\Resolve \prophid to \Inr(\tau,y)}$ under the precondition that
$\tau\hookrightarrow^{\gamma_\pi}y$ and $\tau\hookrightarrow^{\gamma_\phi}\Phi$.
The proof is symmetric to \Cref{sec:logatom-proof-call}: we open the invariant, use \ruleref{rule:wp-resolve-trace-ret} to update the prophecy variable, split the new trace into $\vv*t{\textlog{lin}}$ and $\anntr'$, update ghost states, use \Cref{lem:helping} to linearize operations in $\vv*t{\textlog{lin}}$, and close the invariant.
But this time, for (\ref{fig:I-atom-part-au}), we extract the receipt $\Phi'(y)$ from the iterated separating conjunction, where $\Phi'=\Phi s(\tau)$. Since we still own the fragment $\tau\hookrightarrow^{\gamma_\phi}\Phi$ recording this operation's receipt type, its agreement with the authoritative receipt map $\ownGhost{\gamma_\phi}{\authfull\Phi s}$ yields $\later(\Phi\equiv\Phi')$.

We are now left with proving $\later(\Phi\equiv\Phi')\ast\Phi'(y)\proves\wpre{();y}{\Phi}$.
Take a pure step and apply \ruleref{rule:wp-value}, and we get $(\Phi\equiv\Phi')\ast\Phi'(y)\proves{\Phi}(y)$, which is trivially true.

This concludes the proof of the specification for $\textit{wrapo}(\function{op})$.

\section{Transporting Logical Atomicity through Logical Refinement}\label{sec:refine}

Contextual refinement is another commonly used approach to specifying concurrent data structures.
For example, with contextual refinement, one can capture that a fine-grained concurrent data structure behaves in an atomic way by proving that it refines a coarse-grained implementation where the operations are guarded by a global lock.
Such a contextual refinement implies that any behaviors that a client of the fine-grained implementation can observe are a subset of the behaviors that they could observe with the coarse-grained implementation.
Because the latter behaves atomically by construction, this means the former also appears to do so from the client's perspective.

Since the coarse-grained implementation is atomic by construction, one can naturally also prove a logically atomic triple specification for it.
One might expect that the contextual refinement would also imply that this logically atomic specification must hold for the fine-grained implementation as well.
However, there has previously been no way to go from a refinement to a logically atomic triple.
This is a notable gap, since in contrast, prior work \emph{has} established equivalences between contextual refinement and linearizability~\citep{filipovic2010-ctx-refine}.

Using our results from previous sections, we close this gap by showing how to transport a logically atomic triple across a refinement specification.
Specifically, we target ReLoC \cite{frumin2021-reloc}, a relational logic defined on top of Iris \cite{IrisGroundUp} for proving contextual refinement of \heaplang{} programs.
ReLoC introduces a \emph{logical refinement judgment} $\refinesS{e}{e'}{A}$ where $A$ is an Iris relation that holds on the return values of $e$ and $e'$.
For example, if $A \eqdef{} (=)$, \ie the Leibniz equality relation, then $e$ and $e'$ always return equal values.
ReLoC's soundness theorem shows that logical refinement implies contextual refinement.

We show that ReLoC's notion of logical refinement transports logically atomic specifications.
\begin{theorem}[Refinement Transports Logical Atomicity]\label{thm:atomic-refines}
  If
\begin{enumerate}
\item  $\nofork{\function{init}()}{\emptyset}$ and $\nofork{\function{init\uq}()}{\emptyset}$,
\item $\atomspec[\mu]{\IsP'}{\PState'}{\function{init\uq}}{\function{op\uq}}$,
\item $\refinesS{\function{init}()}{\function{init\uq}()}{A}$, and
\item $A(\textit{obj}, \textit{obj}') \ast \pure{\textit{vr}(x)} \proves \refinesS{\function{op}(\textit{obj},x)}{\function{op\uq}(\textit{obj}',x)}{(=)}$ for all $\textit{obj}$, $\textit{obj}'$, and $x$,
\end{enumerate}
then $\atomspec[\mu]{\IsP_{\mu}}{\PState_{\mu}}{\textit{wrapi}(\function{init})}{\textit{wrapo}(\function{op})}$.
\end{theorem}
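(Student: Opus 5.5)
The plan is to route the argument through linearizability, following the triangle of \Cref{fig:lin-triangle}, rather than relate the two logically atomic specifications directly inside Iris. The chain is: atomic specification for $(\function{init\uq},\function{op\uq})$, then linearizability of $(\function{init\uq},\function{op\uq})$ (soundness, \citet{birkedal2021-free}), then linearizability of $(\function{init},\function{op})$ (via refinement), then the atomic specification for the wrapped $(\function{init},\function{op})$ (our completeness theorem, \Cref{thm:logatom}). The conclusion of \Cref{thm:atomic-refines} is exactly the output of \Cref{thm:logatom}, with $\IsP_\mu$ and $\PState_\mu$ being the predicates it builds. So it suffices to establish $\lin[\mu]{\function{init}}{\function{op}}$.

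\textbf{Step 1.} From hypothesis (2), I obtain $\lin[\mu]{\function{init\uq}}{\function{op\uq}}$ by the soundness result of \citet{birkedal2021-free}. I expect that result to require some side conditions, such as $\function{init\uq}$ not forking, which is presumably why hypothesis (1) includes $\nofork{\function{init\uq}()}{\emptyset}$.

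\textbf{Step 2.} I then transfer linearizability along the refinement. Fix $([\function{init}()],\varnothing)\tpsteps([\textit{obj}],\pstate)$ and a most adversarial run $(\nil,\nil,\pstate)\advsteps{\phytr}(\vv\pi,\vv\expr,\pstate')$. I encode this run as a closed \heaplang{} client program $C[\function{init}(),\function{op}]$. This program runs $\function{init}$, forks one thread per call event, and records call and return events in a shared log. The log writes can be modelled as atomic appends, or the trace can be observed through prophecy or ghost state. The client also asserts, at the end of its run, that the recorded log is not linearizable; since that property is undecidable, I would instead phrase the observation as ``the log equals $\phytr$'' for the specific trace under consideration.

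Hypotheses (3) and (4), combined with the ReLoC compatibility lemmas for $\Fork$, references and sequencing, give $\refinesS{C[\function{init}(),\function{op}]}{C[\function{init\uq}(),\function{op\uq}]}{(=)}$. The relation $A$ is threaded from the $\function{init}$ results to every $\function{op}$ call. Because $A$ is an Iris relation that may be non-persistent, I need it to be shareable among threads. I would either use an invariant holding $A(\textit{obj},\textit{obj}')$, or note that ReLoC relations are persistent by definition.

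ReLoC adequacy then gives the transfer of observations: if the left-hand program can produce log $\phytr$ and terminate, so can the right-hand program. Linearizability of the primed structure then yields $\trlin[\mu]{\phytr}$. Hypothesis (1) for $\function{init}$ ensures that $\function{init}$'s execution corresponds to a single-threaded reduction to $[\textit{obj}]$, matching the shape required in \Cref{def:linear}.

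\textbf{Step 3.} The safety clauses of \Cref{def:linear} must also hold: $\function{init}$ and every configuration thread are never stuck. ReLoC adequacy also gives safety of the left program whenever the right program is safe. Safety of the right program follows from Step 1, since it only calls a linearizable, hence never stuck, structure.

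\textbf{Main obstacle.} I expect the hardest part to be the faithful encoding of arbitrary adversarial runs, together with partial traces and non-terminating operations, as a finite client. The plan is to build, for each given finite run, a client that deterministically replays the scheduling-independent skeleton of that run: which calls are issued, with which arguments, and which returns are logged. The client then nondeterministically stops, so that prefix traces whose operations are still pending are observable as termination with a given log. The delicate point is that the event order in the log must follow from the left run's interleaving, and the same log must be reproducible on the right. For this I would make the log appends atomic and rely on ReLoC's observational refinement of the final log value.
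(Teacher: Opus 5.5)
Your chain is exactly how the paper proves \Cref{thm:atomic-refines}: soundness gives $\lin[\mu]{\function{init\uq}}{\function{op\uq}}$, the refinement transports linearizability, and \Cref{thm:logatom} finishes. Your per-trace client, which forks one thread per call, appends call/return events to a physical cell and returns that cell's contents, is the paper's $\textit{MGC}(\phytr)$, related at $(=)$ through ReLoC adequacy. Of your observation options only the physical log works, because ghost or prophecy state is invisible to ReLoC adequacy, which relates only returned values.

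\textbf{The gap: the back-translation on the right-hand side.} ReLoC adequacy only tells you that the right-hand \emph{client program} can return the log $\phytr$. But $\lin[\mu]{\function{init\uq}}{\function{op\uq}}$ speaks about most adversarial runs. Before you can conclude $\trlin[\mu]{\phytr}$, you must extract from that client execution a run $([\function{init\uq}()],\varnothing)\tpsteps([\textit{obj}'],\pstate)$ together with $(\nil,\nil,\pstate)\advsteps{\phytr}\rho'$. The paper isolates this as \Cref{lem:adv-match-mgc}, and it is not pure bookkeeping.
\begin{itemize}
\item The client's heap contains the log cell and the logging code's auxiliary state. Because \heaplang{} has location offsets, nothing syntactically stops $\function{op\uq}$ from touching those cells. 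You therefore need the post-$\function{init\uq}$ adversarial configuration to be $\cfgsafename$, which follows from $\lin[\mu]{\function{init\uq}}{\function{op\uq}}$ via \Cref{remark:cfg-safe-init}. That safety is what lets every data-structure step go through, and be framed, without the extra cells.
\item You also need $\nofork{\function{init\uq}()}{\emptyset}$, so that the run of $\function{init\uq}()$ inside the client is a single-thread run of the shape \Cref{def:linear} demands. This is where the paper uses that half of hypothesis (1).
\end{itemize}
Your ``main obstacle'' discussion covers only the forward direction, which is the paper's \Cref{lem:mgc-match-adv}.

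\textbf{Step 3 needs no right-side safety.} Deriving safety of the right program from Step 1 would itself require the same back-translation, and you do not need it. The ReLoC judgment is a weakest precondition on the left, so adequacy gives non-stuckness of the left client directly. Adequacy applied to hypothesis (3) likewise gives non-stuckness of $\function{init}()$. You then push this to $\vv\pi$ and $\vv\expr$ through the reverse of your thread translation.
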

To use this theorem, a user picks a relation $A$ that relates the result of $\function{init}$ and $\function{init\uq}$.
This is typically an invariant that relates the internal states of the data structures on the two sides of the refinement.
Then given $A$-related data structures $\textit{obj}, \textit{obj}'$ and a valid request $x$, the user has to show that ${\function{op}(\textit{obj},x)}$ refines ${\function{op\uq}(\textit{obj}',x)}$ and they return equal results.
As a result, if $(\function{init\uq},\function{op\uq})$ satisfies a logically atomic specification, then so does $(\function{init},\function{op})$.

To establish our transport theorem, we first show that logical refinement can be used to transport linearizability.
\Cref{thm:atomic-refines} then follows immediately from \Cref{thm:logatom-soundness} and \Cref{thm:logatom}.

\begin{lemma}[Refinement Transports Linearizability]\label{thm:lin-refines}
  If
  \begin{enumerate}
  \item  $\nofork{\function{init}()}{\emptyset}$ and $\nofork{\function{init\uq}()}{\emptyset}$,
  \item $\lin[\mu]{\function{init\uq}}{\function{op\uq}}$,
  \item $\refinesS{\function{init}()}{\function{init\uq}()}{A}$, and
\item $A(\textit{obj}, \textit{obj}') \ast \pure{\textit{vr}(x)} \vdash \refinesS{\function{op}(\textit{obj},x)}{\function{op\uq}(\textit{obj}',x)}{(=)}$ for all $\textit{obj}$, $\textit{obj}'$, and $x$,
  \end{enumerate}
  then $\lin[\mu]{\function{init}}{\function{op}}$.
\end{lemma}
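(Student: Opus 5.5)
Since linearizability is a statement about closed executions, my plan is to reduce the lemma to ReLoC's soundness theorem (logical refinement implies contextual refinement) applied to a suitable \emph{most general client}. Concretely, I would construct a \heaplang{} program $C[\cdot]$ that, given a data structure $(\function{init},\function{op})$, runs $\function{init}()$ and then nondeterministically simulates the most adversarial step relation. It repeatedly and nondeterministically either (a) forks a new thread that picks a fresh tag $\tau$ and an arbitrary value $v$, checks $\textit{vr}(v)$ (or, since $\textit{vr}$ need not be decidable, we quantify over clients indexed by the trace to be produced), logs $(\tau,\textlog{Call}(v))$, calls $\function{op}(\textit{obj},v)$, and logs $(\tau,\textlog{Ret}(y))$; or (b) does nothing. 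The log is kept in a shared reference, and each logging step is a single atomic append. The client then nondeterministically halts and returns the log. Alternatively, to avoid decidability issues, I would fix the target trace $\phytr$ and build a client that asserts (gets stuck or loops) unless the log equals the prescribed prefix, so that observing a return value encodes the trace.

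The key steps, in order, are as follows. \textbf{First}, show a correspondence lemma: every execution $([\function{init}()],\varnothing)\tpsteps([\textit{obj}],\pstate)$ followed by $(\nil,\nil,\pstate)\advsteps{\phytr}(\vv\pi,\vv\expr,\pstate')$ can be matched by an execution of $C[\function{init},\function{op}]$ whose log equals $\phytr$, and whose op-threads and children correspond to $\vv\pi$ and $\vv\expr$. Conversely, every execution of $C[\function{init\uq},\function{op\uq}]$ yields an adversarial execution with the same log. Hypothesis (1), $\noforkname$, is used here: since $\function{init}$ forks no threads, the thread pool after $\function{init}$ is exactly $[\textit{obj}]$, which matches the shape required by \Cref{def:linear}. \textbf{Second}, use hypotheses (3) and (4) together with ReLoC's compatibility rules (fork, let, reference operations, and the fact that logged values are related by $(=)$) to derive $\refinesS{C[\function{init},\function{op}]}{C[\function{init\uq},\function{op\uq}]}{(=)}$, and then invoke ReLoC soundness to obtain contextual refinement. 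For this to go through, $A$ must be duplicable or persistent, so that it can be shared across the forked threads; I would either assume this of ReLoC relations (they are persistent by definition in ReLoC) or note it explicitly.

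\textbf{Third}, derive the two clauses of \Cref{def:linear}. For trace linearizability, run $C$ on the implementation to produce $\phytr$. Contextual refinement, instantiated with an observing context that checks the log against $\phytr$, gives a specification execution producing the same $\phytr$. By the correspondence lemma, this is an adversarial trace of $(\function{init\uq},\function{op\uq})$, so hypothesis (2) yields $\trlin[\mu]{\phytr}$. For safety (that $\function{init}$ and the threads in $\vv\pi,\vv\expr$ are not stuck), I would use the fact that ReLoC refinement also transports safety: a stuck implementation state would be an observable behavior (via contextual refinement with respect to termination/safety) not exhibited by the safe specification, whose safety follows from (2). A more direct route is to obtain safety from the adequacy of the refinement judgment, since $\refinesS{e}{e'}{A}$ is defined via a weakest precondition for $e$, and weakest preconditions imply safety.

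I expect the main obstacle to be the first step: faithfully encoding the most adversarial step relation as a single \heaplang{} client. The difficulties are generating fresh tags, recording call and return events atomically and adjacent to the actual invocation and return, handling a possibly undecidable $\textit{vr}$, and proving the bisimulation between client executions and $\advsteps{}$ (in particular, commuting the client's bookkeeping steps past the data structure's steps). I would tackle this by making call and return logging the only client-side effects around $\function{op}$, and by showing that the auxiliary steps can be reordered so that log appends coincide with the \ruleref{rule:adv-call} and \ruleref{rule:adv-ret} transitions.
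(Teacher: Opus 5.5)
Your proposal is correct and takes essentially the paper's route. The paper also uses a meta-level, trace-indexed most general client that logs call/return events in a reference cell and returns the log. It proves correspondence lemmas in both directions (its MGC completeness and soundness lemmas, with hypothesis (1) lining up the thread pool after $\function{init}$). It then builds a ReLoC refinement between the two MGC instances from hypotheses (3) and (4), gets safety from the weakest precondition underlying the left side, and concludes with linearizability of $(\function{init\uq},\function{op\uq})$.

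One adjustment: drop the detour through typed contextual refinement and a log-checking observer context, and apply adequacy of the closed $(=)$-refinement directly, which is how the paper obtains a specification execution returning the same trace. The observer is not generally expressible in \heaplang{}, where comparing boxed values such as closures is stuck, and a check that gets stuck would also break the left-hand weakest precondition.
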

The fact that this lemma should be true is not surprising, because, as alluded to above, prior work by \citet{filipovic2010-ctx-refine} has established correspondences between contextual refinements and linearizability, and we know that logical refinement implies contextual refinement.
However, that prior work dealt with a considerably simpler first-order programming language that lacked dynamic allocation.
Thus, we cannot directly appeal to prior results.

The key idea behind our proof of this lemma is to construct a program called the \emph{most general client} (MGC) that covers all possible behaviors permitted by the most adversarial steps relation.
We then use refinement to transport the MGC's interaction with ${\function{op}}$ into an interaction with ${\function{op\uq}}$.
Similar most general client constructions are used elsewhere in the linearizability and refinement literature.
The idea is that this client repeatedly invokes concurrent operations with non-deterministic arguments.
This non-determinism ensures that for any sequence of invocations in the most adversarial step relation, the most general client could trigger a similar sequence of concurrent calls.

One subtlety is that, unlike the way this program is constructed in some prior work, we \emph{cannot} construct a single most general client program in \heaplang{}: the language does not have primitives that can non-deterministically generate all possible valid arguments to invoke the operation with.
We instead define a function $\textit{MGC} : \Sigma^* \rightarrow \Val$ at the meta-level, such that given a trace $\phytr$, the program $\textit{MGC}(\phytr)$ invokes concurrent operations in the pattern specified by $\phytr$.
In particular, this $\textit{MGC}(\phytr)$ is a \heaplang{} function that takes an operation $\function{op}$ and an object $\textit{obj}$ as arguments.

Concretely, to show that $(\function{init},\function{op})$ is linearizable, we are given some execution of $\function{init}$ returning an object $\textit{obj}$, along with $\rho \mathrel{\smash{\advsteps[\textit{obj}.\function{op}/\textit{vr}]{\vv t}}} \rho'$, and we have to show that $\vv t$ is a linearizable trace.
By construction, there is an execution of $\textit{MGC}(\phytr)(\function{op}, \textit{obj})$ that exhibits the same behavior and order of events as this adversarial interaction.
From the assumption that $\function{init}$ refines $\function{init\uq}$, we know that there is a corresponding execution of $\function{init\uq}$ returning some object $\textit{obj\uq}$ such that $\textit{obj}$ refines $\textit{obj\uq}$.
We then show that ${\textit{MGC}(\phytr)(\function{op}, \textit{obj})}$ refines ${\textit{MGC}(\phytr)(\function{op\uq}, \textit{obj\uq})}$, which in turn gives us an execution trace for $\textit{MGC}(\phytr)(\function{op\uq}, \textit{obj\uq})$.
We conclude by converting this MGC execution into a most adversarial interaction for $(\function{init\uq}, \function{op\uq})$ with trace $\vv t$ and applying the assumption that $(\function{init\uq},\function{op\uq})$ is linearizable.

In order to make the argument in the previous paragraph work, we must ensure (1) that the MGC indeed fully generates equivalent traces, and (2) that a refinement between MGCs implies that the traces are equivalent.
The latter involves another slight technical complication: contextual refinement is an extensional property and is only concerned with the externally observable behavior; however, in \heaplang{}, calls and returns are \emph{not} externally observable, because
functions are first-class values and calls are just beta-reductions of $\lambda$ terms.

To deal with this, we have the MGC keep track of call/return events in a physical log stored at some reference cell $\loc_{\textit{tr}}$.
The ${\textit{MGC}(\phytr)(\function{op}, \textit{obj})}$ construction forks a new thread for each $\textlog{Call}(x)$ operation in the trace $\vv t$, with shared access to this reference cell.
Each thread invokes $\function{op}$ on $\textit{obj}$ with request $x$, but it also ``emits'' an event (containing the request) by adding the event to the physical log stored in the reference cell.
Similarly, after the function returns, the result is also added to the log.
By dereferencing $\loc_{\textit{tr}}$ at the end of $\textit{MGC}(\phytr)$, we make the trace events physically observable, and as a consequence, the MGC refinement implies trace refinement.
Additional details on the MGC construction and supporting lemmas appear in \Cref{sec:app:mgc}.

\section{Embedding External Linearizability Techniques}\label{sec:external}

Besides formally characterizing the expressivity of the logic, our completeness result (\Cref{thm:logatom}) also provides a new approach for obtaining a logically atomic specification in Iris: we can verify linearizability of a data structure using any technique external to Iris and then formally embed the result as a logically atomic specification.

Among the vast array of linearizability proof methods in the literature, we reformulate three representative methods, aspect-oriented proofs for queues~\citep{henzinger2013-aspect}, forward simulation with commit points~\citep{bouajjani2017-forward}, and meta-configuration tracking~\citep{jayanti2024-metaconfig}, using the approach below.
As a first step, we mechanize in Rocq the meta-theorems from these prior works showing that these methods indeed imply linearizability.
Since these techniques have already been described in the literature, after mechanizing the statement of these meta-theorems, we were able to discharge most of the proofs using Large Language Models (LLMs).

Each of these methods requires showing some conditions on the trace of events that a data structure can generate.
So, to apply these methods, one needs a way to prove that the traces generated by an implementation of a data structure satisfy these conditions.
In our case, we will use Iris itself to prove that these trace conditions hold.
This means we can still take advantage of Iris's expressive features for modular reasoning about concurrent state manipulation, but we avoid having to explicitly reason about linearization points and atomic updates.
Instead, we just prove the trace property holds, apply the meta-theorems to deduce linearizability, and thereby obtain logical atomicity.

To leverage Iris for this task, we adapt a \emph{free theorem} result that \citet{birkedal2021-free} established for Iris.
Their result allows one to deduce trace properties from the separation logic specifications of a library.
Their approach models a trace as a separation logic resource, and puts this resource in an appropriate Iris invariant, so that the usual soundness theorem of Iris implies properties about the resulting trace that a program can generate.

In the remainder of this section, we start by adapting their theorem to account for our adversarial-steps trace reduction relation, and then we describe each of the external linearizability proof methods we have formulated in this manner.
The following section~(\Cref{sec:case-studies}) describes how we then use these methods to get logically atomic triples for various example data structures.

\subsection{Deriving Trace Invariants from Separation Logic Specifications}\label{sec:free-theorem}
\citet{birkedal2021-free} propose a theorem to get a trace property about a data structure from its separation logic specification.
Their key idea is adding a wrapper around data structure operations to emit a call-return trace via two special primitive operations \langkw{emit} and \langkw{fresh}.
Because they design the specifications of these operations in a way that ensures the emitted trace must satisfy a predicate $I\colon\Sigma^*\to\mProp$, every trace produced by an interaction between the data structure and a well-formed client that uses this specification must also belong to $I$.

In our case, we want to talk about the traces that would be generated under the most adversarial steps relation, in order to connect to linearizability.
Thus, we reuse the most general client (MGC) construction from the proof of \Cref{thm:lin-refines}: a client that spawns one thread for each operation in a trace $\phytr$.
Instead of using primitive $\langkw{emit}$ and $\langkw{fresh}$ commands, we simulate a trace by storing these events in a physical reference cell.

As we discussed in \Cref{sec:refine}, executions of the MGC cover all behaviors captured by the $\smash{\advsteps{}}$ relation.
Consequently, to derive a property of every adversarial trace, it suffices to prove a triple about every instance of the MGC using a data structure's operations.

That in turn only requires proving a specification we call $\tracespec[I]{\MgcInv}{\textit{vr}}{\function{init}}{\function{op}}$ about $\function{init}$ and $\function{op}$ wrapped with the trace manipulating code.
This specification asserts that, for some invariant $\MgcInv$ chosen by the prover, (1) initializing the data structure establishes $\MgcInv$, and (2) for a valid request, the wrapped operation is safe given the trace invariant. Proving this requires showing, at each call to $\function{emit}$ and $\function{fresh}$, that the appended event preserves $I$.
Chaining everything together, we have this theorem.
\begin{theorem}\label{thm:bdgjst-free}
  If\/ $\tracespec[I]{\MgcInv}{\textit{vr}}{\function{init}}{\function{op}}$, then
  \begin{enumerate}
  \item $\safe{\function{init}()}{\varnothing}$ and
  \item if $([\function{init}()],\varnothing)\tpsteps([\textit{obj}],\pstate)$ and $(\nil,\nil,\pstate) \mathrel{\smash{\advsteps[\textit{obj}.\function{op}/\textit{vr}]{\phytr}}}(\vv{\pi}, \vv{e}, \pstate')$, then every expression in $\vv{\pi}$ and $\vv{\expr}$ is not stuck, and $I(\phytr)$.
  \end{enumerate}
\end{theorem}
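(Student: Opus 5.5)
The plan is to reduce both clauses to the adequacy (soundness) direction of \Cref{thm:sound-complete}, applied to suitable closed programs built from the most general client. The trace is made physically observable through the reference cell $\loc_{\textit{tr}}$.

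Clause (1) comes directly from the first part of $\tracespec[I]{\MgcInv}{\textit{vr}}{\function{init}}{\function{op}}$. That part gives a triple $\hoare{\TRUE}{\function{init}()}{\Ret \textit{obj}.\MgcInv(\textit{obj})}$. Weakening its postcondition to $\TRUE$ and applying soundness yields $\safe{\function{init}()}{\varnothing}$.

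For clause (2), fix an execution $([\function{init}()],\varnothing)\tpsteps([\textit{obj}],\pstate)$ and an adversarial run $(\nil,\nil,\pstate)\advsteps{\phytr}(\vv\pi,\vv\expr,\pstate')$. Consider the closed program
\[ e_{\phytr} \eqdef \Let \textit{obj} = \function{init}() in \textit{MGC}(\phytr)(\textit{wrapped-op},\textit{obj}), \]
where $\textit{MGC}(\phytr)$ allocates $\loc_{\textit{tr}}$, forks one thread per call event in $\phytr$, and has each thread log its call and its return around the wrapped $\function{op}$.

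The first key step is a separation-logic proof of
\[ \hoare{\TRUE}{e_{\phytr}}{\TRUE}. \]
We get $\MgcInv(\textit{obj})$ from the init spec. We then allocate an Iris invariant containing $\loc_{\textit{tr}}\mapsto \vv u \ast \pure{I(\vv u)}$ together with the trace resource the trace-spec expects. Each forked thread is verified using the second part of the trace-spec. The validity of each request $x$ holds because every call in an adversarial trace satisfies $\textit{vr}$ by \ruleref{rule:adv-call}. The spec guarantees that every append preserves $I$, so the invariant is maintained throughout.

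Adequacy then tells us two things about every reachable configuration of $e_{\phytr}$. First, every thread is a value or reducible. Second, the contents $\vv u$ of $\loc_{\textit{tr}}$ satisfy $I(\vv u)$. We obtain the second fact either by adding a final dereference with postcondition $\pure{I(v)}$, or by an invariant-adequacy argument applied at intermediate states.

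The second key step is a simulation lemma. The given init execution followed by the adversarial run should be matched by an execution
\[ ([e_{\phytr}],\varnothing)\tpsteps(\vv e'',\pstate''), \]
together with a correspondence relating three things: each operation expression in $\vv\pi$ to a thread of $\vv e''$ in a logging context; $\vv\expr$ to the forked children; and $\pstate'$ to $\pstate''$ with the MGC's private locations removed. Under this correspondence, $\loc_{\textit{tr}}$ holds exactly $\phytr$. The lemma is proved by induction on $\advsteps{}$, matching each rule to steps of $e_{\phytr}$:
\begin{itemize}
\item \ruleref{rule:adv-call} is matched by running the forked thread for $\tau$ up to its call and its log append. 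The MGC forks threads in trace order, but they stay idle until scheduled.
\item \ruleref{rule:adv-interm} and \ruleref{rule:adv-child} are matched by the same primitive step, lifted by \ruleref{rule:prim-tp} inside the logging context.
\item \ruleref{rule:adv-ret} is matched by the return-logging steps.
\end{itemize}
Given the correspondence, we transfer the two adequacy facts: non-stuckness of threads of $e_{\phytr}$ yields non-stuckness of $\vv\pi$ and $\vv\expr$, since reducibility is preserved by evaluation contexts and by the heap extension. The trace contents yield $I(\phytr)$.

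I expect this simulation to be the main obstacle. There are two difficulties. The first is locality: heap allocations by the MGC must not collide with locations used by $\function{op}$. This needs a renaming or freshness argument, or it must be arranged that the MGC allocates $\loc_{\textit{tr}}$ in a way that can be chosen disjoint from $\pstate'$. The second is bookkeeping: tags must be matched with MGC thread indices, and the MGC must be shown to fork threads whose arguments are exactly those of the call events of $\phytr$. I would first attempt to prove a general frame lemma stating that primitive steps are preserved under heap extension by disjoint locations, and then carry out the induction with an explicit invariant relating configurations.
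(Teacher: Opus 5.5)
Your proposal is correct and follows essentially the same route as the paper: clause (1) by soundness applied to the init triple, and clause (2) by combining two ingredients and then transferring $I(\phytr)$ and safety back. The first ingredient is a specification of the most general client derived from the trace spec (the paper's \Cref{lem:mgc-spec}, which likewise returns the dereferenced log with postcondition $I$); the second is a simulation showing that the MGC reproduces any adversarial run with $\loc_{\textit{tr}}$ holding exactly $\phytr$ (\Cref{lem:mgc-match-adv}). The one step to state more carefully is the final transfer of non-stuckness, which needs the converses of the facts you cite: if $K[\expr]$ is reducible and $\expr\notin\Val$ then $\expr$ is reducible, and reducibility in $\pstate'$ extended with $\loc_{\textit{tr}}$ implies reducibility in $\pstate'$, where the latter requires choosing $\loc_{\textit{tr}}$ fresh also for the locations occurring syntactically in the threads, not merely outside $\dom(\pstate')$.
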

Additional details about the lemmas and proof of this theorem are included in \Cref{sec:app:mgc}.
Each linearizability proof method we consider in the remainder of this section requires showing conditions on a data structure's traces that can be formulated in terms of an appropriate predicate $I$ in the above theorem.
Thus, we can use Iris to prove that these conditions hold.

\subsection{Aspect-Oriented Proofs}
The idea behind aspect-oriented proofs of linearizability is to find, for a given class of data structures such as queues or stacks, a collection of properties about traces that imply linearizability for any implementation of those data structures.
For example, the original result by~\citet{henzinger2013-aspect} establishes such a theorem for queues saying that if for every completion of an adversarial call-return trace, there exists an injective partial map $M$ from dequeue (pop) operations to enqueue (push) operations, and $M$ satisfies some consistency properties, then the data structure is linearizable as a queue.
Some examples of these consistency properties are
\begin{itemize}
\item \emph{Valid Match.} If a dequeue $\tau_2$ is mapped to an enqueue $\tau_1$, then the response of $\tau_2$ must equal the request of $\tau_1$, and $\tau_2$ must not happen before $\tau_1$ (\ie $\tau_2$ must not finish before $\tau_1$ starts).
\item \emph{Ordered Correct.} For two pairs of enqueue and dequeue operations $(M(\tau_1),\tau_1)$ and $(M(\tau_2),\tau_2)$, if $M(\tau_1)$ happens before $M(\tau_2)$, then $\tau_2$ must not happen before $\tau_1$.
\item \emph{Empty Sound.} If a dequeue reports that the queue is empty, then the queue must be logically empty at some point between the call and return of the dequeue.
\end{itemize}

Later, \citet{dodds2015-ts-stack} formulated an analogous theorem for stacks, and \citet{DBLP:journals/pacmpl/OhmanN22} developed similar results for snapshottable arrays.

We have adapted \citeauthor{henzinger2013-aspect}'s queue theorem to our setting. %
To state our result, we first need to define a sequential specification $\mu_{\textit{queue}}$ for queues, following the style of $\mu_{\textit{bag}}$ in \Cref{example:bag}:
\begin{align*}
\mu_{\textit{queue}}.s_0\eqdef{}&\nil\hspace{4em}
 \mu_{\textit{queue}}.\textit{vr}\eqdef\mu_{\textit{bag}}.\textit{vr}\\
 \mu_{\textit{queue}}.\textit{rr}(s,x,s',y)\eqdef{}&(\Exists v.x=\Some(v)\land s'=s\dplus[v]\land y=\TT)\\
 {}\lor{}&x=\None{}\land s=\nil\land s'=\nil\land y=\None\\
 {}\lor{}&x=\None{}\land\Exists v. s=v::s'\land y = \Some(v)
\end{align*}

Then, our formulation of their theorem says
\begin{theorem}
For a physical trace $\phytr$, if\/ $\textlog{well-formed-bag}(\phytr)$ and $I_{\textit{queue}}(\phytr)$, then $\trlin[\mu_{\textit{queue}}]{\phytr}$, where
\begin{align*}
\textlog{well-formed-bag}(\phytr)\eqdef{}&\All\tau.(\Exists v.\pi_\tau(\phytr)\prefixof[(\tau,\textlog{Call}(\Some v)); (\tau,\textlog{Ret}(\TT))])\\
&\mathmakebox[\widthof{\ensuremath{\All\tau.}}][r]{{}\lor{}}(\Exists v.\pi_\tau(\phytr)\prefixof[(\tau,\textlog{Call}(\None)); (\tau,\textlog{Ret}(\Some v))])\\
&\mathmakebox[\widthof{\ensuremath{\All\tau.}}][r]{{}\lor{}}\pi_\tau(\phytr)\prefixof[(\tau,\textlog{Call}(\None)); (\tau,\textlog{Ret}(\None))]\\
I_{\textit{queue}}(\phytr)\eqdef{}&\Exists M.\textlog{valid-match}(\phytr,M)\land \textlog{empty-sound}(\phytr,M)\\
&\mathmakebox[\widthof{\ensuremath{\Exists M.}}][r]{{}\land{}}\textlog{order-matched}(\phytr,M)\land\textlog{order-unmatched}(\phytr,M)%
\end{align*}
\end{theorem}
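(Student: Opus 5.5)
The plan is to build the annotated trace $\anntr$ required by \Cref{def:trlin} directly: insert one $\textlog{Lin}$ event for each operation whose effect must be visible, then check $\textlog{sound}_{\mu_{\textit{queue}}}$ by induction.

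\textbf{Completion.} First I would reduce to a well-behaved set of operations, following the completion step of \citet{henzinger2013-aspect}:
\begin{itemize}
\item every returned operation must be linearized;
\item a pending enqueue that lies in the range of the witness map $M$ must also be linearized;
\item a pending dequeue is linearized only if $M$ matches it (with the response read off the matched enqueue's argument);
\item all other pending operations receive no $\textlog{Lin}$ event, which is allowed because $\textlog{well-formed}_\textlog{lin}$ only asks for prefixes.
\end{itemize}
The validity condition on calls follows from $\textlog{well-formed-bag}(\phytr)$, since every call argument is $\Some v$ or $\None$.

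\textbf{Constructing the order.} I would place the $\textlog{Lin}$ events greedily while scanning $\phytr$ left to right, maintaining the current abstract queue $s$. At each $\textlog{Ret}$ event of an operation $\tau$ not yet linearized, and at the end of the trace for the linearized pending operations, $\tau$ must be linearized now. Before doing so, I first linearize, in call order, the operations that $\tau$ depends on:
\begin{itemize}
\item for a dequeue $\tau$ matched to $M(\tau)$, this is every enqueue that must precede $M(\tau)$, together with its matched dequeues;
\item for an enqueue, this is every earlier-matched pair it must not overtake.
\end{itemize}
The $\textlog{Lin}$ events are inserted immediately before the $\textlog{Ret}$. Because each inserted operation has already been called, and its $\textlog{Lin}$ sits before its own $\textlog{Ret}$, $\textlog{well-formed}_\textlog{lin}$ holds by construction and $\pi_\Sigma(\anntr)=\phytr$.

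\textbf{Soundness by induction on the prefix.} The invariant is that $s$ equals the list of linearized enqueues whose matching dequeues are not yet linearized (or are unmatched), ordered by linearization. For each kind of step I use a different property of $M$:
\begin{itemize}
\item a matched dequeue of $v$ must find $v$ at the head of $s$; this follows from $\textlog{valid-match}$ and $\textlog{order-matched}$, since an earlier enqueue still in $s$ would be a pair ordered in the wrong direction;
\item unmatched enqueues never sit ahead of a matched one when a dequeue fires, by $\textlog{order-unmatched}$;
\item an enqueue simply appends to $s$.
\end{itemize}
Each case is one application of the $\textit{rr}$ clauses of $\mu_{\textit{queue}}$ together with \Cref{remark:sound-cons}, or its snoc form in \Cref{def:trlin}.

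\textbf{Main obstacle: dequeues that return $\None$.} These are not in $\dom(M)$. $\textlog{empty-sound}$ only guarantees that the queue is logically empty at some instant between the call and the return of such a dequeue, and that instant need not coincide with the state produced by the greedy order. My first attempt would be to linearize an empty dequeue not at its return but at the witnessing instant given by $\textlog{empty-sound}$. Each witnessing instant would be expressed as a position in $\phytr$, so that the greedy construction can be stated in terms of ``linearize everything forced before position $i$''.

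I would then need to prove a commutation lemma: the forced-dependency closure built so far never contains an enqueue whose matching dequeue lies after the empty point. This is where the exact formulation of $\textlog{empty-sound}$ (relative to $M$ and happens-before) must be used carefully. If this fails, I would instead strengthen the induction invariant to carry the set of ``must-precede'' pairs explicitly. In that version, each greedy step picks a minimal element of this acyclic relation, and acyclicity is derived from the three order conditions.
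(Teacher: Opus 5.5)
\textbf{There is a genuine gap.} Your proposal is an outline, and its central step is never carried out. The theorem is really the claim that one total order of $\textlog{Lin}$ events exists with all of the following properties:
\begin{enumerate}
\item[(i)] it respects every operation's call/return window;
\item[(ii)] it orders matched enqueues exactly as their dequeues, with each enqueue before its own dequeue;
\item[(iii)] it places every returned unmatched enqueue after all matched enqueues and all empty dequeues, since such a value blocks the queue forever;
\item[(iv)] it puts each empty dequeue where every earlier-linearized matched enqueue has already had its dequeue linearized.
\end{enumerate}

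\textbf{The greedy scan does not establish this.} The dependency sets (``enqueues that must precede $M(\tau)$'', ``pairs it must not overtake'') are never defined. Read naturally, they are circular: whether an enqueue may overtake a pair depends on the dequeue order that the scan is still deciding. So a global order on matched pairs must be fixed from the whole trace before the scan.

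The claim that every pulled-in operation ``has already been called'' is asserted, not proved. That is exactly where $\textlog{valid-match}$ (including its strengthened clause) and $\textlog{order-matched}$ must be combined, e.g.\ via the 2+2-free interval-order structure of happens-before.

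Inserting dependencies ``in call order'' is also wrong. A dequeue may be invoked before the enqueue it is matched with, because $\textlog{valid-match}$ only forbids it from \emph{returning} first. Likewise, two overlapping matched pairs can have opposite call orders on the enqueue and dequeue sides.

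For the same reason, your soundness step (``an earlier enqueue still in $s$ would be a pair ordered in the wrong direction'') does not follow from $\textlog{order-matched}$. That condition speaks about real-time happens-before, but two overlapping enqueues are ordered by your construction, not by real time. The head-of-queue claim therefore needs property (ii), which you have not established. Similarly, no condition on $M$ stops your scan from linearizing a returned unmatched enqueue at its return while an overlapping matched enqueue is still pending. Unless the construction explicitly defers such enqueues, the matched dequeue will not find its value at the head.

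\textbf{Empty dequeues, which you name as the main obstacle, are left open.} $\textlog{empty-sound}$ gives each empty dequeue its own witness instant. The constraints are disjunctive, though: for each matched pair $(e,d)$ and empty dequeue $d_\emptyset$, either $d$ precedes $d_\emptyset$ or $d_\emptyset$ precedes $e$. You must show that all witnesses can be honoured at once, compatibly with a single FIFO order on pairs and with (iii). Your ``commutation lemma'' is exactly this statement, so deferring it defers the theorem.

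The fallback cannot close the gap as stated. An acyclicity argument ``from the three order conditions'' never uses $\textlog{empty-sound}$, so it cannot account for empty dequeues. Also, a linear extension of a must-precede relation does not handle disjunctive constraints unless you first resolve them, for instance by fixing witness positions, and then prove the resolved relation acyclic. Alternatively, in the style of Henzinger et al., you could run an induction that repeatedly removes an operation or matched pair that can be linearized first and shows the four conditions survive the removal. Until one of these arguments is actually supplied, the proposal does not prove the theorem.
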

$I_{\textit{queue}}$ mostly follows the consistency properties required by \citeauthor{henzinger2013-aspect}, but is tweaked to account for partial correctness.
Notably, unlike the original version that relies on a completion of the trace, our $I_{\textit{queue}}$ can be incrementally verified: the empty list satisfies $I_{\textit{queue}}$, and knowing that $I_{\textit{queue}}(\phytr)$, verifying $I_{\textit{queue}}(\phytr\dplus u)$ only requires checking the new event $u$ is compatible with $\phytr$.
This requires slightly strengthening \textlog{valid-match}.
In addition to the requirement by \citeauthor{henzinger2013-aspect}, our \textlog{valid-match} further requires that for enqueues $\tau_1$ and $\tau_2$, if $\tau_1$ happens before $\tau_2$ and $\tau_2$ is matched with some dequeue by $M$, $\tau_1$ must also be matched by $M$.
This additional restriction forces the user to construct $M$ incrementally along with the execution of the queue, instead of delaying matching $\tau_1$ with a dequeue forever.

\subsection{Forward Simulation with Commit Points}
In \Cref{sec:refine} we discussed methods of transporting linearizability (and logical atomicity) by establishing a refinement between an implementation $M$ of some data structure and another implementation $M'$, and then showing that $M'$ is linearizable.
However, one can also use refinement to transport linearizability even when $M'$ is not another code-level implementation but is instead an abstract state machine or labeled transition system (LTS).
In particular, if one can prove a trace refinement between $M$ and a transition system $M'$, and the traces of the transition system are all linearizable, then $M$ is linearizable.
In many cases, we can take $M'$ to simply be an LTS which performs the operations in atomic transitions according to the sequential specification.
This $M'$ is then linearizable by definition.

Then, to establish the refinement, one approach is to construct a forward simulation between $M$ and $M'$.
Such forward simulation proofs are natural to carry out inside of a concurrent separation logic such as Iris, because they can be constructed as we reason over the steps of $M$.
However, in some cases, it is not possible to directly construct a forward simulation between $M$ and an atomic transition system $M'$.
In particular, if $M$ has future-dependent linearization points, then such a forward simulation is not possible.

\citet{bouajjani2017-forward} propose an alternate technique to work around this.
Specifically, for proving linearizability of queues, they define an abstract state
machine $\textit{AbsQ}$ that only maintains a partial order of enqueues rather than a full sequential order.
This makes it possible to prove a forward simulation between a queue implementation and $\textit{AbsQ}$, even for queues that have future-dependent enqueues.
They then show that $\textit{AbsQ}$ is linearizable \wrt the standard queue model using a \emph{backward} simulation proof.
Thus, establishing a forward simulation to $\textit{AbsQ}$ implies linearizability.
\citeauthor{bouajjani2017-forward} also develop a similar abstract state machine for stacks.

Following their approach, we mechanize a definition of $\textit{AbsQ}$ as an LTS and formalize their key theorem:
\begin{theorem}
For a physical trace $\phytr$, if there exists an annotated trace $\anntr$ that forward simulates $\textit{AbsQ}$ and $\phytr = \pi_\Sigma(\anntr)$, then $\trlin[\mu_{\textit{queue}}]{\phytr}$.
\end{theorem}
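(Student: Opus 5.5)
The plan is to transport the backward-simulation argument of \citet{bouajjani2017-forward} to our trace-based notion of linearizability. The key steps are as follows.

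First, I will fix the mechanized definition of $\textit{AbsQ}$. Its states record the set of enqueues that have been called, together with their values and whether they are still pending. They also record a partial order on enqueues, in which $\tau_1$ precedes $\tau_2$ if $\tau_1$'s linearization event in $\anntr$ occurred before $\tau_2$'s. Finally, they record which enqueues have already been dequeued. The transitions are labelled by call, linearization (commit) and return events, and ``$\anntr$ forward simulates $\textit{AbsQ}$'' means $\anntr$ is a trace of this LTS.

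The proof then proceeds by constructing, from such an $\anntr$, a new annotated trace $\anntr''$ with $\pi_\Sigma(\anntr'')=\pi_\Sigma(\anntr)=\phytr$, $\textlog{well-formed}_\textlog{lin}(\anntr'')$, and $\textlog{sound}_{\mu_{\textit{queue}}}(\pi_\textlog{lin}(\anntr''),\nil,s')$ for some $s'$. The validity of call arguments follows directly from the $\textit{AbsQ}$ call transitions.

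The construction of $\anntr''$ is a backward simulation. It is an invariant-preserving induction on $\anntr$ processed from the end, or equivalently an induction using \Cref{remark:sound-cons} on a suffix. I will define a relation between $\textit{AbsQ}$ states and pairs consisting of a concrete sequential queue $s$ and a set of pending enqueues not yet linearized. The relation requires that $s$ is a linearization of the committed enqueues consistent with the partial order, and that every pending enqueue is either already placed in $s$ or still linearizable later.

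Each $\textit{AbsQ}$ step is then matched as follows.
\begin{itemize}
\item A dequeue commit returning $v$ becomes a $\textlog{Lin}$ event for the dequeue. It is immediately preceded by $\textlog{Lin}$ events for any pending enqueues that must be placed before the removed one, namely the minimal elements in the partial order.
\item An empty-dequeue commit requires that all enqueues that are not yet dequeued and are ordered before it are pending. Those enqueues are linearized after the dequeue.
\item Enqueue $\textlog{Lin}$ events that are never forced are inserted just before their $\textlog{Ret}$.
\end{itemize}
Because each inserted $\textlog{Lin}$ for $\tau$ lands between $\tau$'s call and return, well-formedness is preserved. I will proceed by induction on the length of $\anntr$, generalized over the simulation relation and with all inserted events accumulated.

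The main obstacle is the case of future-dependent enqueues. Here the order of enqueues in the final linearization is only determined by later dequeues, so a naive forward construction may commit too early. This is why I will do the construction backwards, following the original paper. I will prove that for any $\textit{AbsQ}$ trace there is a total extension of its partial order, consistent with all dequeue commits, and then define the $\textlog{Lin}$ positions from that extension. To prove existence of this total extension, I will first try induction on the number of dequeue commits, removing the last one and applying the IH. Partial correctness is handled by the fact that never-returning enqueues can simply remain unlinearized, and this is allowed by $\textlog{well-formed}_\textlog{lin}$ being a prefix condition.
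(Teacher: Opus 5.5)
Your top-level route matches what the paper formalizes: discard the enqueue commits of $\anntr$ and rebuild a linearization through Bouajjani et al.'s backward simulation from $\textit{AbsQ}$ to the atomic queue. However, your simulation relation is only gestured at, and the step-matching you do spell out fails.

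First, the order you attribute to $\textit{AbsQ}$ (``$\tau_1$ precedes $\tau_2$ if $\tau_1$'s linearization event occurred before $\tau_2$'s'') is total on committed enqueues. With it, $\textit{AbsQ}$ would be the atomic queue, and $\anntr$ would essentially already be the linearization. The point of $\textit{AbsQ}$ is that overlapping enqueues stay unordered (its order is roughly real-time precedence), so a dequeue may remove either of them.

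That is exactly what defeats your rules. Take $(\tau_1,\textlog{Call}(\Some(v_1)))$, $(\tau_2,\textlog{Call}(\Some(v_2)))$, $(\tau_2,\textlog{Ret}(\TT))$, followed by a dequeue $\tau_3$ that commits and returns $\Some(v_1)$ while $\tau_1$ is still pending. Both calls come first, so $\textit{AbsQ}$ permits this whatever enqueue commits it requires. Your rules put $\tau_2$'s $\textlog{Lin}$ just before its return and $\tau_1$'s just before $\tau_3$'s. The sequential queue at $\tau_3$ is then $[v_2;v_1]$, and $\textit{rr}$ forces the answer $\Some(v_2)$. Every valid linearization must linearize $\tau_1$ before $\tau_2$ returns. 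That is precisely the future-dependent decision you name as the obstacle, and none of your bullets makes it.

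Your remedy (a total extension consistent with the dequeues, from which $\textlog{Lin}$ positions are ``defined'') is where the whole proof lives, and it is left blank. You also misplace the difficulty. Existence of the extension needs no induction: list the dequeued enqueues in dequeue order, then the remaining returned ones in any real-time-compatible order. This is consistent exactly because of $\textit{AbsQ}$'s minimality check on dequeues.

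The real work is the placement and its verification. Write $e_1,e_2,\ldots$ for this order, and let the \emph{deadline} of $e_j$ be the earlier of its return and the commit of the dequeue removing it. Linearize $e_k$ immediately before the earliest deadline of any $e_j$ with $j\ge k$, not just before its own return; this is what forces $\tau_1$ early above. Place it no earlier than that, so that pending enqueues stay out of the queue at empty dequeues. You must then prove three things:
\begin{itemize}
\item this position follows $e_k$'s call, because the order extends real time and a dequeue only removes an already-called enqueue;
\item each value-returning dequeue finds its enqueue at the head, because positions are monotone in $k$ and dequeued enqueues come first, in dequeue order;
\item at each empty dequeue every already-linearized enqueue has been removed, which is where $\textit{AbsQ}$'s emptiness side condition is needed.
\end{itemize}
None of these obligations, nor the $\textit{AbsQ}$ side conditions they depend on, appear in your proposal.
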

The existence of such an annotated trace that forward simulates $\textit{AbsQ}$ can be established using an appropriate trace invariant in Iris.

\subsection{Meta-Configuration Tracking}
\citet{jayanti2024-metaconfig} note that when doing forward reasoning, if instead of maintaining one abstract state of the data structure, one considers a set of possible abstract states, one obtains a complete forward-reasoning-based linearizability proof technique.
They call the set of states a meta-configuration and hence name their technique meta-configuration tracking.
Unlike the two techniques above, this technique is universal in the sense that it is not specific to a particular class of data structures, such as queues or stacks.

In our mechanization, we define a single configuration as a pair of the form $(s, \textit{OpSt})$ with type $S\times(\textdom{Tag}\fpfn(\Val\times\Val^?))$, where $s$ represents the state of the object, and $\textit{OpSt}$ is a map from operation invocation tags to a \emph{status}.
There are two possible statuses for an operation:
\begin{enumerate}
\item called with $\val$ but not linearized, which is represented as $(\val,\bot)$; or
\item called with $\val$ and linearized with $\valB$, which is represented as $(\val,\valB)$.
\end{enumerate}
A meta-configuration is defined as a set of single configurations.
We define a transition relation on meta-configurations $M_1\mconfigsteps[\mu]{\phytr}M_2$ with three types of atomic transitions:
\begin{enumerate}
\item an operation $\tau$ is called with value $\val$, which emits event $(\tau,\textlog{Call}(\val))$ and adds a new operation $\mapsingleton{\tau}{(\val,\bot)}$ for \emph{every} configuration in the set;
\item an operation $\tau$ in a configuration $(s,\mapinsert{\tau}{(\val,\bot)}{\textit{OpSt}})$ possibly linearizes with $\valB$, which does not emit any event but inserts a new configuration $(s',\mapinsert{\tau}{(\val,\valB)}{\textit{OpSt}})$ satisfying $\textit{rr}(s,\val,s',\valB)$; and
\item an operation $\tau$ returns with $\valB$, which emits event $(\tau,\textlog{Ret}(\valB))$ and prunes every single configuration whose status of $\tau$ is not $(\_,\valB)$.
\end{enumerate}
Then, we have the following theorem:
\begin{theorem}
For a physical trace $\phytr$, if $\textlog{well-formed}(\phytr)$ (as defined in \cref{eq:well-formed}) and there exists a non-empty meta-configuration $M$ such that $\set{(s_0,\varnothing)}\mconfigsteps[\mu]{\phytr} M$, then $\trlin[\mu]{\phytr}$.
\end{theorem}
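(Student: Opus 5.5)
The proof goes by strengthening the statement to an invariant over meta-configuration executions and reading off a linearization from a surviving configuration. Fix a non-empty $M$ with $\set{(s_0,\varnothing)}\mconfigsteps[\mu]{\phytr} M$ and pick some $(s,\textit{OpSt})\in M$. We want a witness annotated trace $\anntr$ with $\pi_\Sigma(\anntr)=\phytr$, $\textlog{well-formed}_\textlog{lin}(\anntr)$, and $\textlog{sound}_\mu(\pi_{\textlog{lin}}(\anntr),s_0,s')$. To obtain it, we prove by induction on the derivation of $\mconfigsteps[\mu]{}$ the following invariant. For every configuration $c=(s,\textit{OpSt})$ in the current meta-configuration, reached after physical trace $\phytr$, there is an annotated trace $\anntr_c$ such that:
\begin{itemize}
\item $\pi_\Sigma(\anntr_c)=\phytr$ and $\textlog{sound}_\mu(\pi_{\textlog{lin}}(\anntr_c),s_0,s)$;
\item every tag with a $\textlog{Lin}$ event in $\anntr_c$ appears in $\textit{OpSt}$ with the matching status $(v,w)$, and every tag with status $(v,\bot)$ has a $\textlog{Call}(v)$ event but no $\textlog{Lin}$ in $\anntr_c$;
\item $\anntr_c$ satisfies the per-tag ordering Call, then Lin, then Ret, for those tags whose events it contains. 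Returned operations carry a $\textlog{Lin}$ event with the returned value.
\end{itemize}
To make the invariant well-defined, the meta-configuration should be treated as a set of pairs (configuration, witness trace), or the invariant should be stated existentially per configuration.

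The steps are checked in turn. The initial set $\set{(s_0,\varnothing)}$ has witness $\nil$. A \emph{call} step appends $(\tau,\textlog{Call}(v))$ to every witness. Soundness is unchanged because only $\textlog{Lin}$ events are projected. Freshness of $\tau$ follows from $\textlog{well-formed}(\phytr)$ together with the invariant that $\textit{OpSt}$ domains are exactly the called tags; this invariant needs to be proved alongside. A \emph{linearize} step, from $c$ with status $(v,\bot)$ to $c'$ with $\textit{rr}(s,v,s',w)$, appends $(\tau,\textlog{Lin}(v,w))$ to $\anntr_c$. Soundness extends by the snoc clause of $\textlog{sound}_\mu$, and $\textit{vr}(v)$ is needed here. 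The cleanest way to get it is to add ``every called tag has a valid argument'' to the invariant, using the call-validity premise of \Cref{def:trlin}. A \emph{return} step keeps only configurations whose $\tau$ status is $(\_,w)$. For each survivor we append $(\tau,\textlog{Ret}(w))$. That survivor's witness already has $\textlog{Lin}(v,w)$ after the Call, so the per-tag order Call, Lin, Ret holds.

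At the end, pick any $c\in M$ (using non-emptiness) and take $\anntr_c$. The remaining gap is full $\textlog{well-formed}_\textlog{lin}$. Projections are only required to be prefixes of $[\textlog{Call};\textlog{Lin};\textlog{Ret}]$, so pending unlinearized operations are fine. Uniqueness of events per tag comes from $\textlog{well-formed}(\phytr)$, which forbids duplicate Call or Ret events, and from the fact that the linearize rule fires only on $\bot$ statuses, which forbids duplicate Lin events. Finally, $\textit{vr}$ on every call argument follows from the strengthened invariant, or from the call rule itself if $\mconfigsteps[\mu]{}$ checks it. If neither source provides it, $\textit{vr}$ would have to be added as a hypothesis.

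I expect the main obstacle to be formal rather than conceptual. The invariant quantifies over all configurations of a set that grows by linearize steps and shrinks by return steps, so the induction has to be stated for \emph{each} member with its own witness. The bookkeeping linking $\textit{OpSt}$ statuses to the tag projections of $\anntr_c$, in particular showing a returned $\tau$ already has exactly one $\textlog{Lin}$ with matching values, is where most of the effort goes. I would first try proving a per-configuration lemma that any $c$ in a reachable meta-configuration is reached by a single-configuration ``path'' of calls, lins and rets. That reduces everything to an induction on a linear sequence of events.
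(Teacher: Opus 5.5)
Your argument is correct. It is the natural soundness argument for meta-configuration tracking: attach to each surviving configuration a witness annotated trace whose $\textlog{Lin}$-projection drives $s_0$ to that configuration's state, and maintain the witness across call, linearize, and return steps. The paper itself writes out no proof of this theorem; it only states it as a mechanized adaptation of the meta-theorem of \citet{jayanti2024-metaconfig}. Two small repairs are needed.

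\textbf{The invariant.} Your return step uses the following fact: a survivor whose $\tau$-status is $(v,w)$ with $w\neq\bot$ already has $(\tau,\textlog{Lin}(v,w))$ in its witness. Your stated invariant gives only the other direction (a $\textlog{Lin}$ event implies a matching status), plus the $\bot$ case. Make it a biconditional, and also record that $\dom(\textit{OpSt})$ is exactly the set of called tags, with each status's first component equal to that tag's call argument. This is trivially preserved, because only the linearize rule creates non-$\bot$ statuses.

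\textbf{The source of $\textit{vr}$.} Your ``cleanest way'' takes $\textit{vr}$ from the call-validity clause of \Cref{def:trlin}. That clause is part of the conclusion $\trlin[\mu]{\phytr}$, so using it to build the invariant is circular. Your fallback (``from the call rule, or else as an added hypothesis'') is the right reading. Under the rules as described in prose, the trace $[(\tau,\textlog{Call}(v))]$ with $\lnot\textit{vr}(v)$ is well-formed and reaches the non-empty meta-configuration $\set{(s_0,\mapsingleton{\tau}{(v,\bot)})}$. So the theorem can only hold if the call rule checks $\textit{vr}(v)$. A check in the linearize rule alone would not cover pending calls that are never linearized. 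With that check in place, ``every status's first component satisfies $\textit{vr}$'' is an invariant. It supplies both the $\textit{vr}(x)$ conjunct needed in the snoc clause of $\textlog{sound}_\mu$ at each linearize step and the call-validity clause of the conclusion.
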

Again, such a meta-configuration can be constructed in an Iris Hoare triple proof using ghost state, allowing us to apply this method within Iris.

\section{End-to-End Logical Atomicity Results}
\label{sec:case-studies}
As end-to-end case studies, we prove the linearizability of several queue algorithms using techniques presented in \Cref{sec:refine} and \Cref{sec:external}, and obtain logically atomic triples for them using \Cref{thm:logatom}.
Since proofs of linearizability using these techniques have already been described in the literature, either with pencil and paper or mechanically using a different framework, writing these proofs using our reformulation requires little human effort, and substantial parts of the Rocq proofs for several of these examples were automated using LLMs.

\paragraph{Herlihy-Wing Queue}
The Herlihy-Wing queue~\cite{herlihy1990-linearizability} is a famous lock-free queue algorithm known for its future dependency, and is a standard benchmark for linearizability proof techniques.
\citet{jung2019-proph} mechanized a proof of logical atomicity for the Herlihy-Wing queue using prophecy variables in Iris, and an implementation of this data structure in \heaplang{} is maintained as part of the Iris project.
We prove the linearizability of this implementation in three different ways, using the aspect-oriented, forward simulation, and meta-configuration tracking methods described in the previous section, and obtain logically atomic triples that have the same form as the ones proven by \citet{jung2019-proph}.
For the proof of linearizability using each technique, we verify a trace invariant using \Cref{thm:bdgjst-free}, and then conclude this trace is linearizable using the technique-specific theorem.
None of these linearizability proofs use prophecy variables.
Notably, the aspect-oriented proof is 27\% shorter than the direct Iris proof (2004 LoC vs 2763 LoC), though this does not include the proof of the aspect-oriented queue meta-theorem.

\paragraph{Baskets Queue}
The Baskets Queue is another lock-free queue~\cite{DBLP:conf/opodis/HoffmanSS07}.
Like the Herlihy-Wing queue, it features a challenging form of future-dependent linearization in enqueue operations.
\citet{bouajjani2017-forward} mention it as one example for which their forward simulation proof technique should be applicable.
We implement the Baskets Queue in \heaplang{}, prove its linearizability using their forward simulation, and obtain a logically atomic specification for it.
Since \heaplang{} is designed as a garbage-collected language, our implementation of the algorithm removes several aspects of the original implementation designed for dealing with manual reclamation and avoiding the ABA-problem~\citep{DBLP:journals/jpdc/MichaelS98}.
Nevertheless, it still retains the challenging future dependency in enqueues.
To our knowledge, this is the first machine-checked proof of linearizability for the Baskets Queue.

\paragraph{Folly MPMC Queue}
The multi-producer-multi-consumer (MPMC) queue from Meta's C++ library Folly \cite{folly} is a high-performance bounded concurrent queue.
The queue is optimized, production-ready, and it scales to thousands of consumer and producer threads.

\citet{DBLP:conf/cpp/VindumFB22} implement the MPMC queue in \heaplang{} and show using ReLoC that it contextually refines a coarse-grained queue.
A key challenge of the refinement proof is the queue's use of external linearization points.
Using \Cref{thm:atomic-refines}, we directly reuse their refinement proof to derive a logically atomic specification for the MPMC queue.
Proving the logically atomic specification for the coarse-grained queue is straightforward.

\section{Related Work}

\paragraph{Linearizability Proof Techniques}
As discussed in \Cref{sec:external}, one of the benefits of our completeness result is that it allows us to use external linearizability proof techniques in order to establish logical atomicity inside of Iris.
This is beneficial because, as we alluded to, there is a vast array of such techniques that have been developed in the research literature.
Much of the focus of the research community has been specifically developing methods that work for data structures that exhibit challenging features, such as helping and future-dependent linearization points.
We have already used and discussed several of these, including aspect-oriented linearizability~\citep{henzinger2013-aspect}, a forward simulation refinement technique~\citep{bouajjani2017-forward}, and meta-configuration tracking~\citep{jayanti2024-metaconfig}.
Other well-known methods that have found repeated use include simulation proofs with canonical automata~\citep{DBLP:journals/entcs/ColvinDG05} and hindsight reasoning~\citep{DBLP:conf/podc/OHearnRVYY10}.
\citet{dongol2015-linear-survey} provide a survey of many other techniques, along with what examples they have been applied to and known limitations.
It would be interesting to similarly mechanize these other proof methods and use them in combination with our results.

\paragraph{Program Logics for Linearizability}
A number of program logics are designed to directly establish linearizability, rather than trying to internalize atomicity in the logic.
The RGSep logic of \citet{DBLP:conf/concur/VafeiadisP07} was used in this way to prove linearizability by instrumenting code with linearization-point actions and establishing a simulation~\citep{DBLP:conf/vmcai/Vafeiadis09, vafeiadis2010-cave}.
The RGSim~\citep{DBLP:conf/popl/LiangFF12, DBLP:conf/pldi/LiangF13} framework handles helping and future-dependent linearization points.
Its soundness theorem yields a contextual refinement that is equivalent to linearizability.
\citet{khyzha2017-partial-order} present a logic that allows for incrementally constructing a partial order over operations rather than a single linearization, allowing ordering decisions to be delayed for data structures with future-dependent linearizations.
Several program logics have been constructed around the hindsight reasoning method mentioned above, which involves some form of temporal reasoning~\citep{DBLP:conf/podc/OHearnRVYY10, meyer2022-plankton, DBLP:journals/pacmpl/0001W023}.

As mentioned earlier, one limitation of working with linearizability inside of a program logic is that it is often not possible to compose such linearizability specifications inside of the program logic, a limitation that motivated the development of logically atomic specifications.
However, a line of work has sought to reconstruct and generalize linearizability in a way that is more directly compositional.
\citet{DBLP:journals/pacmpl/ValeSC23} use concurrent game semantics to derive such a generalized form of linearizability and give a program logic that is sound with respect to this generalized notion.

Most recently, \citet{hatti2026-lhl} have developed a rely-guarantee logic that is both sound and complete for a version of this compositional form of linearizability.
They achieve completeness by incorporating a form of \emph{possibility reasoning}, which maintains a set of possible linearizations, in contrast to our use of prophecy variables.
Their result also generalizes several variants of linearizability, including atomic, set, and interval linearizability.
In comparing their approach to logical atomicity, they note that
``[i]t is known that a logically atomic triple implies Herlihy-Wing linearizability\dots but there is no evidence that logical atomicity can be used to specify all Herlihy-Wing
linearizable objects, so its completeness is unknown.''
Our results close this gap by showing that in fact logical atomicity is complete for linearizability.

\paragraph{Logical Atomicity}
\citet{DBLP:conf/popl/JacobsP11} introduced an approach to internalizing atomicity in a program logic by parameterizing specifications with a client-supplied action that updated ghost state.
The operation would invoke this action at the point where the operation logically took effect.
HOCAP~\citep{svendsen2013-hocap} developed this higher-order style further.
TaDA~\citep{da-rocha-pinto2014-tada} introduced a distinct \emph{atomic triple} judgment, though in its original form it could not reason about helping.
Iris~\citep{Iris1} showed how to encode the logically atomic triples of TaDA inside of a higher-order framework using atomic update predicates
 that are similar to the ghost updates in \citeauthor{DBLP:conf/popl/JacobsP11}'s framework.

None of these frameworks for logically atomic specifications could initially deal with future-dependent linearization points.
This was resolved by introducing prophecy variables to Iris~\citep{jung2019-proph}, which are essential for our completeness proof.
\citet{jung2019-proph} also noted that for many forms of future-dependent linearization, a stronger form of $\ghostcode{\langkw{resolve}}$ called \emph{atomic resolve} is needed to prove logical atomicity, but our work suggests that the atomic resolve is actually unnecessary for completeness.
One of the motivations for later credits~\citep{spies2022-later} was for reasoning about a form of helping in logical atomicity proofs that \citet{spies2022-later} called \emph{unsolicited helping}.
But again, for completeness, later credits turn out to be unnecessary.

Using completeness, our work has shown how to embed other linearizability proof methods in Iris in order to obtain logically atomic specifications.
Prior to our work, \citet{DBLP:conf/ecoop/PatelSW24} encoded a form of the hindsight reasoning linearizability proof method into Iris.
Their encoding instruments a program with prophecy variables and uses these, along with helping, to derive logically atomic specifications from a user-supplied hindsight proof.
Our proof uses a similar combination of prophecies with helping to derive logical atomicity from any linearizability proof.

Although we have focused on the correspondence between logical atomicity and linearizability, there are data structures that can be specified using logically atomic triples that fall outside the scope of linearizability.
For example, \citet{da-rocha-pinto2014-tada} observe that atomic triples can capture behaviors that the sequential specifications used in linearizability cannot, because the atomic triple can constrain how a client may invoke operations through expressive preconditions.

An alternative to logical atomicity has been pursued by the FCSL logic~\citep{nanevski2014-auth-monoid}, based on using Hoare triples over time-stamped histories tracked as ghost state.
\citet{sergey2016-nonlin} advocate such history-tracking specifications as a compositional correctness condition in their own right, showing how to specify certain classes of data structures and properties that lie outside the scope of linearizability specifications.

\paragraph{Automated Linearizability Provers}
Many methods for establishing linearizability have been incorporated into automated tools.
\citet{vafeiadis2010-cave} developed CAVE, a tool that automatically proves linearizability by searching for linearization points.
CAVE cannot handle general future dependency.
However, the original work on aspect-oriented linearizability showed that CAVE could be used to check some of the conditions needed to apply the queue aspect-oriented theorem.
\citet{zhu2015-poling}'s Poling tool goes beyond CAVE and is able to handle helping.
More recently, \citet{meyer2022-plankton} and \citet{meyer2023-nekton} develop separation logic-based tools that use hindsight reasoning.
Although our examples have used proof techniques that can be done with mechanized interactive proofs in Rocq, it would be interesting to try to connect to automated techniques for establishing linearizability as a method for incorporating logically atomic specifications in Iris.

In particular, it would be interesting to compare the automation achievable using tools designed for linearizability, as opposed to automated or semi-automated techniques targeting logical atomicity directly.
Several works have taken this latter route.
Voila~\citep{wolf2021-voila} is a proof outline checker for the TaDA logic that checks a program annotated with key steps.
\citet{mulder2023-diaframe} extend the Diaframe proof automation library for Iris to handle logical atomicity and contextual refinement. They achieve full automation on simpler examples, while more complex examples allow for a mix of automation and interactive tactics.
Raven~\citep{gupta2025-raven} is an SMT-based deductive verifier whose metatheory is based on a first-order fragment of Iris with support for logically atomic triples.

\section{Conclusion and Future Work}

This paper presented a completeness theorem for logically atomic specifications in Iris, showing that for any linearizable data structure, a corresponding logically atomic triple holds for the data structure's operations.
Although the proof is mechanized for \heaplang, the idea behind it should generalize to any Iris-based program logic that supports prophecy variables and satisfies Condition 18 of \citet{hostert2026-completeness}.
Besides showing that Iris's existing reasoning mechanisms are sufficient to verify atomic triples for any such data structure, this result also enables alternative linearizability proof techniques to be encoded into Iris as a means of deriving logically atomic triples.
We have carried out several such encodings in this paper, but it would be interesting to explore others, and to apply these results to verify additional data structures.

Although linearizability is an important correctness condition for concurrent data structures, there are many variations and generalizations beyond linearizability.
In future work, it would be interesting to attempt to generalize our completeness results to these variants of linearizability.
Another promising direction would be to use our completeness result as the starting point for a bridge to connect Iris to existing automated provers for linearizability.

\section*{Data Availability Statement}\addcontentsline{toc}{section}{Data Availability Statement}
The Rocq mechanization accompanying this work is available on GitHub at \url{https://github.com/u8cat/iris-logatom}.

\begin{acks}
  This work was supported in part by the \grantsponsor{NSF}{National Science Foundation}{} under Grant No.~\grantnum{NSF}{2319168} and Grant No.~\grantnum{NSF}{2524669}, as well as the \grantsponsor{Carlsberg Foundation}{Carlsberg Foundation}{} under Grant No.~\grantnum{Carlsberg Foundation}{CF23-0791}.
  Any opinions, findings, and conclusions or recommendations expressed in this material are those
  of the authors and do not necessarily reflect the views of these funding agencies.
\end{acks}

{
\interlinepenalty=10000
\bibliography{references}
}

\makeatletter
\par\bigskip\noindent{\small\normalfont\@received\par}
\renewcommand{\@received}{\@empty}
\makeatother

\newpage
\appendix
\crefalias{section}{appendix}
\crefalias{subsection}{appendix}
\crefalias{subsubsection}{appendix}

\section{Details about the Most General Client Construction}\label{sec:app:mgc}

This appendix gives additional details about the proofs involving the most general client (MGC) described in \Cref{sec:refine} and \Cref{sec:free-theorem}.

\paragraph{The Trace Library}
The MGC uses a trace library to record a trace at location $\loc_{\textit{tr}}$.
The library comes with two functions satisfying the specifications below.
\begin{mathpar}
  \inferhref{AhtEmit}{rule:wp-emit}
  {\traceInv{\loc_{\textit{tr}}}{I}}
  {\ahoareV
    {\All \phytr. \traceIs{\loc_{\textit{tr}}}{\phytr}\ast \pure{I(\phytr\dplus[(\tau,u)])}}
    {\function{emit}(\loc_{\textit{tr}},\tau,u)}
    {\Ret\TT.\traceIs{\loc_{\textit{tr}}}{\phytr\dplus[(\tau,u)]}}}
\and
\inferhref{AhtFresh}{rule:wp-fresh}
  {\traceInv{\loc_{\textit{tr}}}{I}}
  {\ahoareV[t]
  {\All \phytr.\traceIs{\loc_{\textit{tr}}}{\phytr}\ast\pure{\All \tau\notin\phytr.1\Ra I (\phytr\dplus[(\tau,u)])}}
  {\function{fresh}(\loc_{\textit{tr}},u)}
  {\Ret \tau.\traceIs{\loc_{\textit{tr}}}{\phytr\dplus[(\tau,u)]}\ast\pure{\tau\notin \phytr.1}}}
\end{mathpar}
Assertion $\traceIs{\loc_{\textit{tr}}}{\phytr}$ says the trace currently stored at $\loc_{\textit{tr}}$ is $\phytr$, and persistent assertion $\traceInv{\loc_{\textit{tr}}}{I}$ says the trace always satisfies a predicate $I\colon\Sigma^*\to\mProp$.
The user of the library is free to pick the event type $\Sigma$, provided each event is a pair of a tag $\tau$ and a value.
Function $\function{emit}$ appends a new event with tag $\tau$ to the trace, and $\function{fresh}$ (demonically) non-deterministically picks a fresh tag $\tau$ \wrt the existing trace, appends $(\tau,u)$ to the trace and returns the tag.
The two operations are logically atomic.

\paragraph{The Most General Client}
To construct the MGC, we first define an auxiliary function $\textit{wrapcr}$ to record the call-return trace
$$\textit{wrapcr}(\function{op},\textit{obj},x)\eqdef
\Let \tau := \function{fresh}(\loc_{\textit{tr}},\Inl(x)) in
\Let y := \function{op}(\textit{obj},x) in \function{emit}(\loc_{\textit{tr}},\tau,\Inr(y)); y$$
It is similar to \textit{wrapo} in \Cref{fig:wrapper}, but instead of resolving a prophecy variable, it records the trace using the trace library.
Here location $\loc_{\textit{tr}}$ is an external parameter that is allocated at meta-level.
Unlike \textit{wrapo}, \textit{wrapcr} takes the operation $\function{op}$ as an object-level parameter, so that the same wrapper can be applied to both sides of a refinement, as in the proof of \Cref{thm:lin-refines}.

To define $\textit{MGC}$, we first define a meta-level fixed point $\textit{concCalls}(\vv{\textit{req}}, \function{op},\textit{obj})\colon \Val^*\times\Expr\times\Expr\to\Expr$, and the $\textit{MGC}$ itself is a thin wrapper over \textit{{concCalls}}
\begin{align*}
\textit{concCalls}(\nil,\function{op},\textit{obj})\eqdef{}&!\loc_{\textit{tr}}\\
\textit{concCalls}(r::\vv{\textit{req}},\function{op},\textit{obj})\eqdef{}&\Fork{\textit{wrapcr}(\function{op},\textit{obj},r);\TT}; \textit{concCalls}(\vv{\textit{req}},\function{op},\textit{obj})\\
\textit{MGC}(\vv t)\eqdef{}&\Lam \function{op},\textit{obj}.\textit{concCalls}(\textit{extractRequests}(\vv t),\function{op},\textit{obj})
\end{align*}
The MGC spawns one thread for each operation in $\phytr$, where each thread calls $\textit{wrapcr}$ with the parameter extracted from $\phytr$, and the main thread of the MGC returns the trace stored at $\loc_{\textit{tr}}$, making the emitted events externally observable.

\paragraph{Supporting Lemmas}
The trace-agnostic specification $\tracespecname$ used in \Cref{thm:bdgjst-free} is defined as follows.
\begin{align}
&\mathleftalign{\tracespec[I]{\MgcInv}{\textit{vr}}{\function{init}}{\function{op}}\eqdef}\notag\\
&\hspace*{2em}\hoare{\TRUE}{\function{init}()}{\Ret \textit{obj}. \Exists \gamma.\All \loc_{\textit{tr}}.\traceIs{\loc_{\textit{tr}}}{\nil}\wand\pvs\MgcInv(\loc_{\textit{tr}},\gamma,\textit{obj})}\\
&\hspace*{2em}\pure{\textit{vr}(x)}\proves\hoare{\traceInv{\loc_{\textit{tr}}}{I}\ast\MgcInv(\loc_{\textit{tr}},\gamma,\textit{obj})}{\textit{wrapcr}(\function{op},\textit{obj},x)}{\Ret\_.\TRUE}\\
&\hspace*{2em}\persistent{\MgcInv}
\end{align}
Although the MGC is parameterized by a trace, its specification follows from this trace-agnostic condition.
\begin{lemma}\label{lem:mgc-spec}
If\/ $\tracespec[I]{\MgcInv}{\textit{vr}}{\function{init}}{\function{op}}$, then $\mgcspec[I]{\textit{vr}}{\phytr}{\function{init}}{\function{op}}$, where
\begin{align*}
&\mathleftalign{\mgcspec[I]{\textit{vr}}{\phytr}{\function{init}}{\function{op}}\eqdef}\notag\\
&\hspace*{2em}\pure{\All (\tau,\textlog{Call}(x))\in\phytr.\textit{vr}(x)}\proves\hoare{\traceInv{\loc_{\textit{tr}}}{I}\ast\traceIs{\loc_{\textit{tr}}}{\nil}}{\textit{MGC}(\phytr)(\function{op},\function{init}())}{\Ret \vv u.\pure{I(\vv u)}}
\end{align*}
\end{lemma}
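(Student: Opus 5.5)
We prove $\mgcspec[I]{\textit{vr}}{\phytr}{\function{init}}{\function{op}}$ directly in the logic, by symbolic execution of $\textit{MGC}(\phytr)(\function{op},\function{init}())$ using the two clauses of $\tracespec[I]{\MgcInv}{\textit{vr}}{\function{init}}{\function{op}}$.

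First, since \heaplang{} evaluates right to left, the argument $\function{init}()$ is evaluated first. We apply the first clause of $\tracespecname$ with \ruleref{rule:wp-seq}, so the post-state has some $\textit{obj}$ and $\gamma$ such that, for every $\loc_{\textit{tr}}$, $\traceIs{\loc_{\textit{tr}}}{\nil}\wand\pvs\MgcInv(\loc_{\textit{tr}},\gamma,\textit{obj})$. We feed it the precondition resource $\traceIs{\loc_{\textit{tr}}}{\nil}$ and use \ruleref{rule:wp-fupd} to obtain $\MgcInv(\loc_{\textit{tr}},\gamma,\textit{obj})$. This assertion is persistent by the third clause, so it can be duplicated freely. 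The beta reductions of the $\lambda$-wrapper are then discharged with \ruleref{rule:wp-pure}.

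Second, we prove by induction on the request list $\vv{\textit{req}}$ that, for every $\vv{\textit{req}}$ whose elements all satisfy $\textit{vr}$,
\[ \traceInv{\loc_{\textit{tr}}}{I}\ast\MgcInv(\loc_{\textit{tr}},\gamma,\textit{obj})\proves\hoare{\TRUE}{\textit{concCalls}(\vv{\textit{req}},\function{op},\textit{obj})}{\Ret\vv u.\pure{I(\vv u)}}. \]
The hypothesis on $\vv{\textit{req}}$ is supplied by the premise $\All(\tau,\textlog{Call}(x))\in\phytr.\textit{vr}(x)$, since $\textit{extractRequests}(\phytr)$ consists precisely of the arguments of the call events of $\phytr$.
\begin{itemize}
\item In the cons case, we use the standard fork rule. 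The forked thread needs $\wpre{\textit{wrapcr}(\function{op},\textit{obj},r);\TT}{\Ret\_.\TRUE}$, which follows from the second clause of $\tracespecname$ instantiated with $\textit{vr}(r)$ and with copies of the persistent $\traceInv{\loc_{\textit{tr}}}{I}$ and $\MgcInv$. The main thread continues by the induction hypothesis.
\item In the nil case, we must verify $!\loc_{\textit{tr}}$ with postcondition $I$ of the result. Here we need a lemma about the trace library: whenever the trace invariant holds, a load of $\loc_{\textit{tr}}$ returns a value $\vv u$ satisfying $I(\vv u)$.
\end{itemize}

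The main obstacle is the nil case. The specifications \ruleref{rule:wp-emit} and \ruleref{rule:wp-fresh} do not expose a read operation, so we must unfold $\traceInv{\loc_{\textit{tr}}}{I}$. Presumably it is an Iris invariant containing $\Exists\vv u.\loc_{\textit{tr}}\mapsto\textit{encode}(\vv u)\ast\pure{I(\vv u)}\ast(\text{ghost authority for }\traceIs{}{})$. We open this invariant around the atomic load, use \ruleref{rule:wp-load}, and close it again. The delicate point is that the main thread returns the raw encoded value, so the postcondition must be stated on the decoded trace. We handle this by reading $\vv u$ in the statement up to the encoding used by the library, or by adding a decode step to the MGC's final read; we would check which convention the mechanization adopts.

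The last step is that $\traceIs{\loc_{\textit{tr}}}{\nil}$ must be available to allocate or initialize $\traceInv{\loc_{\textit{tr}}}{I}$ (which requires $I(\nil)$). This is taken as given by the precondition, together with the persistent $\traceInv$ it supplies, so no further work is needed.
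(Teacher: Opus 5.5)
Your proof is correct and follows the natural argument the paper has in mind; the paper gives no written proof and only remarks that the MGC specification follows from the trace-agnostic condition. The argument is: run $\function{init}()$ with $\traceIs{\loc_{\textit{tr}}}{\nil}$ framed, trade that resource for the persistent $\MgcInv$, induct on the request list using the fork rule and the second clause of $\tracespecname$, and discharge the final $!\loc_{\textit{tr}}$ by opening the invariant behind $\traceInv{\loc_{\textit{tr}}}{I}$. The one thing to drop is your fallback of adding a decode step to the MGC, because the program is fixed by the statement; the returned value should just be read as a trace via the library's encoding, as the paper does implicitly.
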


The following lemma ensures that the MGC construction covers all possible adversarial steps.
\begin{lemma}[MGC Completeness]\label{lem:mgc-match-adv}
Given location $\loc_{\textit{tr}}$, if $(\nil,\nil,\pstate)\advsteps[\textit{obj}.\function{op}/\textit{vr}]{\phytr} (\vv\pi',\vv\expr',\pstate')$ and $\loc_{\textit{tr}}\notin \dom(\pstate')$, then
$([\textit{MGC}(\phytr)(\function{op},\textit{obj})],\textit{compose}(\pstate,\nil))\tpsteps(\phytr::\ltrans\vv\pi'\rtrans\dplus\vv\expr',\textit{compose}(\pstate',\phytr))$,
where $\textit{compose}(\pstate,\phytr)\eqdef\mapinsert{\loc_{\textit{tr}}}{\phytr}{\pstate}$ and $\ltrans\vv\pi\rtrans$ is a function to translate threads in the $\advsteps{}$ relation to the threads spawned by the MGC.\footnote{This lemma statement is simplified to omit technical details about the trace library, \eg the actual trace library stores a mutex at $\loc_{\textit{tr}}+1$ and spawns a child thread to generate non-deterministic fresh tags.
}
\end{lemma}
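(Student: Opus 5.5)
The proof is a forward simulation: induct on the derivation of $(\nil,\nil,\pstate)\advsteps{\phytr}(\vv\pi',\vv\expr',\pstate')$, using \ruleref{rule:advs-nil} and \ruleref{rule:advs-cons}. Instead of the stated lemma, we prove a generalization for an arbitrary starting configuration $(\vv\pi_0,\vv\expr_0,\pstate_0)$ that arose from some prefix $\phytr_0$. For that prefix we assume the MGC has already reached $(\textit{concCalls}(\vv{\textit{req}},\function{op},\textit{obj})::\ltrans\vv\pi_0\rtrans\dplus\vv\expr_0,\textit{compose}(\pstate_0,\phytr_0))$. Here $\vv{\textit{req}}$ lists the requests of call events not yet issued. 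The conclusion is that the MGC reaches the configuration matching the end of the adversarial run, with trace $\phytr_0\dplus\phytr$ stored at $\loc_{\textit{tr}}$. The stated lemma is the instance $\phytr_0=\nil$, $\vv{\textit{req}}=\textit{extractRequests}(\phytr)$, followed by a final step that dereferences $\loc_{\textit{tr}}$ and yields $\phytr$ as the main thread's value.

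The translation $\ltrans\cdot\rtrans$ is fixed first. An entry $(\tau,\expr)$ maps to the residual wrapper expression $\Let y := \expr in \function{emit}(\loc_{\textit{tr}},\tau,\Inr(y)); y$, followed by $;\TT$. An entry $(\tau,\bot)$ maps to $\TT$.

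The simulation then needs one lemma per adversarial rule.
\begin{itemize}
\item \ruleref{rule:adv-interm}: lift the $\primstep$ through the evaluation context of the wrapper with \ruleref{rule:base-prim}. Forked threads are appended to the end of the MGC pool. Because $\ltrans\vv\pi\rtrans$ comes before $\vv\expr$, this matches $\vv\expr\dplus\vv*\expr f$ exactly.
\item \ruleref{rule:adv-child}: this case is immediate.
\item \ruleref{rule:adv-ret}: run the $\function{emit}$ steps, which append $(\tau,\textlog{Ret}(v))$ to the stored trace, then reduce to $\TT$.
\item \ruleref{rule:adv-call}: the main thread takes a $\Fork$ step, which spawns a new wrapper thread at the end of the pool.
\end{itemize}
The call case has a catch: forked threads land after all of $\vv\expr$, whereas the new $\vv\pi$ entry must sit before the children. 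I would handle this by relaxing the generalized statement so that the thread pool matches up to a fixed permutation. The cleaner alternative is to define $\ltrans\cdot\rtrans$ relative to a recorded interleaving, that is, to let the translation be indexed by the list of events so far. Either way, the final statement then needs a permutation-invariance fact for $\tpsteps$. After the fork, the new thread runs $\function{fresh}(\loc_{\textit{tr}},\Inl(x))$, and we choose its nondeterministic tag to be exactly $\tau$. This choice is legal because $\tau\notin\vv\pi.1$, and every tag recorded so far appears in $\vv\pi.1$. It appends $(\tau,\textlog{Call}(x))$, and then $\beta$-reduces to $\function{op}(\textit{obj},x)$ inside the context.

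Freshness of $\loc_{\textit{tr}}$ is what justifies $\textit{compose}$ commuting with every step. The hypothesis $\loc_{\textit{tr}}\notin\dom(\pstate')$, together with the fact that the domain only grows (HeapLang has no free), implies $\loc_{\textit{tr}}\notin\dom(\pstate_i)$ at every intermediate state. So each $\primstep$ on $\pstate_i$ is also a step on $\textit{compose}(\pstate_i,\phytr_i)$, with the same result. One subtle case is \ruleref{rule:base-alloc}: the allocated location differs from $\loc_{\textit{tr}}$ precisely because it lies in the later domain.

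I expect the main obstacle to be the bookkeeping in \ruleref{rule:adv-call}. It combines the thread-ordering mismatch above with the actual trace library's mutex and its auxiliary tag-generating thread, so the invariant relating MGC configurations to adversarial ones must carry this extra library state. I would first try to state that invariant as a relation between configurations, closed under the library's internal steps, rather than as an exact syntactic equality.
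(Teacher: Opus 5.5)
There is a genuine gap, and it sits at the \ruleref{rule:adv-call} step you flagged. Neither of your proposed repairs closes it.

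Suppose the main thread performs the fork for the $k$-th request only when the $k$-th call event is simulated. Then that wrapper thread is appended behind every child thread spawned so far. The MGC pool therefore ends up with operation threads and child threads interleaved in creation order. The lemma, however, asserts the exact pool $\phytr::\ltrans\vv\pi'\rtrans\dplus\vv\expr'$, with all translated operation threads in one block in front of $\vv\expr'$.

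Re-indexing $\ltrans\cdot\rtrans$ by the interleaving cannot express an interleaved pool, because $\vv\expr'$ is appended verbatim after $\ltrans\vv\pi'\rtrans$. Nor is there a permutation-invariance property of $\tpsteps$ that takes you from a reachable pool to a permutation of it. The run starts from a single thread and new threads are always appended, so the final order is fixed by the order of the fork steps. For instance, a program that forks $e_1$ and then $e_2$ reaches $[\TT,e_1,e_2]$ but never $[\TT,e_2,e_1]$. The true fact only says that permuting the initial pool permutes the final one, which is vacuous for the singleton start $[\textit{MGC}(\phytr)(\function{op},\textit{obj})]$. So your argument yields at best an ``up to permutation'' version of the lemma.

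The missing idea is to change the schedule rather than the invariant. $\textit{MGC}(\phytr)$ knows all requests at the meta level, and $\textit{concCalls}$ is a straight-line sequence of forks that emits no events and touches no heap cell. So let the main thread perform all of its forks, and the pure sequencing steps, before simulating any adversarial step; it then sits at $!\loc_{\textit{tr}}$. The wrapper threads now occupy a contiguous block in call order, which is exactly the order in which \ruleref{rule:adv-call} appends to $\vv\pi$. Every child forked later lands behind them.

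The simulation invariant becomes: the pool is $(!\loc_{\textit{tr}})::\ltrans\vv\pi\rtrans\dplus W\dplus\vv\expr$ over the heap $\textit{compose}(\pstate_0,\phytr_0)$, for the current adversarial state $\pstate_0$ and trace prefix $\phytr_0$. Here $W$ lists the idle threads $\textit{wrapcr}(\function{op},\textit{obj},r);\TT$ for the call events not yet simulated. Under this invariant:
\begin{itemize}
\item \ruleref{rule:adv-call} creates no thread. It activates the head of $W$ (your $\function{fresh}$ step with tag $\tau$, then the $\beta$-step), which then becomes the last element of $\ltrans\vv\pi\rtrans$.
\item Your other cases go through unchanged, since forks by operation or child threads land after $W\dplus\vv\expr$.
\item At the end $W=\nil$, and one load gives the main thread the value $\phytr$.
\end{itemize}

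The rest of your plan (the translation, the choice of tag, and the freshness argument for $\loc_{\textit{tr}}$) is fine. One small correction: HeapLang does have a free operation, but it leaves a tombstone in the heap, so the domain is still monotone and your allocation argument stands.
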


Symmetrically, every value returned by the MGC must be a trace generated by some adversarial steps.
\begin{lemma}[MGC Soundness]\label{lem:adv-match-mgc}
Given location $\loc_{\textit{tr}}$, if $([\textit{MGC}(\vv*t0)(\function{op},\function{init}())],\textit{compose}(\pstate_0,\nil))\tpsteps(\val::\_,\_)$ ($\val\in\Val$), $\loc_{\textit{tr}}\notin\dom(\pstate_0)$, and the technical conditions below hold, then $\val\in\Sigma^*$ and there exists $\textit{obj}\in\Val$ and $\rho'$ such that
$([\function{init}()],\pstate_0)\tpsteps([\textit{obj}],\pstate)$ and $(\nil,\nil,\pstate)\advsteps[\textit{obj}.\function{op}/\textit{vr}]{\val}\rho'$.
The technical conditions are
\begin{itemize}
\item $\All \pstate.\safe{\function{init}()}{\pstate}\land\nofork{\function{init}()}{\pstate}$,
\item $\All \pstate_0,\textit{obj},\pstate.([\function{init}()],\pstate_0)\tpsteps([\textit{obj}],\pstate)\Ra \cfgsafe{\nil}{\nil}{\pstate}$, and
\item the parameter of every call event in $\vv*t0$ satisfies $\textit{vr}$.
\end{itemize}
\end{lemma}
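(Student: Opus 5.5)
The plan is to run the MGC execution backwards into an adversarial execution by a simulation argument over the thread-pool reduction. First, use the no-fork condition to split the initial segment. $\textit{MGC}(\vv*t0)(\function{op},\function{init}())$ evaluates $\function{init}()$ before entering $\textit{concCalls}$ (right-to-left evaluation evaluates the argument first). Because $\nofork{\function{init}()}{\pstate}$ holds, every step in that segment is a step of the single main thread. So the execution decomposes into a run $([\function{init}()],\textit{compose}(\pstate_0,\nil))\tpsteps([\textit{obj}],\textit{compose}(\pstate,\nil))$ followed by a run of $\textit{concCalls}(\ldots,\function{op},\textit{obj})$. If $\function{init}$ never finishes, the main thread cannot be a value, so this case does not arise. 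We must also show that $\function{init}$ never touches $\loc_{\textit{tr}}$. We get this from $\safe{\function{init}()}{\pstate}$ for \emph{every} $\pstate$: running from $\pstate_0$ (which lacks $\loc_{\textit{tr}}$) gives the same steps as running from $\textit{compose}(\pstate_0,\nil)$, by a framing/monotonicity argument on \heaplang{} steps. The first step is to establish this frame lemma: if the run from the larger heap agrees with a safe run from the smaller heap, then the steps transfer and the extra location is untouched.

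Second, define a simulation relation $R$ between MGC configurations $(\vv\expr_{\textit{mgc}},\textit{compose}(\pstate',\vv t))$ and adversarial configurations $\rho'=(\vv\pi',\vv\expr',\pstate')$, essentially the inverse of the translation $\ltrans\cdot\rtrans$ from \Cref{lem:mgc-match-adv}. It has three components. The main thread is some residual $\textit{concCalls}(\vv{\textit{req}},\ldots)$ or the final value $\vv t$. Each forked $\textit{wrapcr}$ thread is in one of these phases: before $\function{fresh}$ (no tag yet); inside $\function{op}$ with tag $\tau$ and a context $K[e]$, matching $(\tau,e)\in\vv\pi'$; after $\function{op}$ but before $\function{emit}$, matching $(\tau,v)$; or finished, matching $(\tau,\bot)$. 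All remaining threads match $\vv\expr'$. The relation also records that the trace stored at $\loc_{\textit{tr}}$ equals the trace labelling $(\nil,\nil,\pstate)\advsteps{\vv t}\rho'$. The footnoted technicalities (the mutex at $\loc_{\textit{tr}}+1$ and the tag-generating child thread) are added as extra thread/heap components that are ignored in the adversarial configuration.

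Third, prove by induction on the length of $\tpsteps$ that every MGC step preserves $R$, with the adversarial side taking zero or one $\advstep{}$ steps, and then conclude. The cases are as follows. Steps of the main thread or of wrapper bookkeeping correspond to zero steps. A $\function{fresh}$ that commits corresponds to \ruleref{rule:adv-call}. Here the new tag is fresh with respect to the stored trace, hence with respect to $\vv\pi'.1$. The request satisfies $\textit{vr}$ by the third technical condition, since requests come from $\vv*t0$. A step inside $\function{op}$ is \ruleref{rule:adv-interm}, obtained by peeling off the evaluation context; forked threads go to $\vv\expr'$. Steps of forked descendants are \ruleref{rule:adv-child}. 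An $\function{emit}$ of $\Inr(y)$ corresponds to \ruleref{rule:adv-ret}. At the end, when the main thread is the value $\val$, it has just dereferenced $\loc_{\textit{tr}}$, so $\val$ is the stored trace. It is therefore in $\Sigma^*$ and is exactly the label of the simulated adversarial run. The second technical condition, $\cfgsafe{\nil}{\nil}{\pstate}$, seems to be needed to keep adversarial threads non-stuck, but it is not needed for this lemma's conclusion except possibly to exclude pathological thread interference.

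The main obstacle is the second step. We need to show that $\function{op}$ and its descendant threads never access $\loc_{\textit{tr}}$ (or the mutex), so that the projected heap $\pstate'$ evolves exactly by $\primstep$ and the stored trace is changed only by $\function{fresh}$/$\function{emit}$. Since $\loc_{\textit{tr}}$ is allocated at the meta level and is never given to $\function{init}$ or $\function{op}$, the plan is to prove this by an invariant that $\loc_{\textit{tr}}$ does not occur syntactically in any object-side expression or in any value stored in $\pstate'$. Allocation never returns a location already in the domain, so $\loc_{\textit{tr}}$ cannot be forged. This syntactic-freshness invariant, carried through every reduction rule, is where most of the effort will go.
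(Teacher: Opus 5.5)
There is a genuine gap in the step you call the main obstacle. You plan to show that $\function{op}$ and its descendants never touch $\ell_{\textit{tr}}$ via a syntactic-freshness invariant. That invariant is neither implied by the hypotheses nor true in \heaplang{}, for two reasons:
\begin{itemize}
\item Locations are first-class values. So $\function{op}$, $\function{init}$, or values stored in $\pstate_0$ may mention $\ell_{\textit{tr}}$ literally. The statement only assumes $\ell_{\textit{tr}}\notin\dom(\pstate_0)$, so dangling pointers to it are allowed.
\item \heaplang{} has pointer arithmetic (the offset operator). Even code that never mentions $\ell_{\textit{tr}}$ can forge it, e.g.\ by computing $r +_{\ell} x$ from a freshly allocated $r$ and an integer request $x$, since allocation may nondeterministically return $\ell_{\textit{tr}}-x$. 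So ``allocation never returns a location already in the domain'' does not make $\ell_{\textit{tr}}$ unforgeable.
\end{itemize}
Moreover, without the second technical condition the statement is false. Take a trivial $\function{init}$, $\textit{vr}$ always true, and an $\function{op}$ that stores a non-trace value into $\ell_{\textit{tr}}$. The first and third conditions hold, yet the MGC's final read can return something outside $\Sigma^*$.

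The missing idea is that $\cfgsafe{\nil}{\nil}{\pstate}$, which you set aside as irrelevant to the conclusion, is exactly what rules this out. It does for $\function{op}$ what $\forall\pstate.\safe{\function{init}()}{\pstate}$ does in your (correct) treatment of $\function{init}$.
\begin{itemize}
\item Keep in your relation $R$ that $\ell_{\textit{tr}}$ (and the mutex cell) is not in $\dom(\pstate')$. Note that the mirrored adversarial configuration is, by construction, reachable from $(\nil,\nil,\pstate)$.
\item By the second condition, every $\function{op}$-thread and child thread is then a value or reducible in the small heap $\pstate'$.
\item Redex decomposition in \heaplang{} is unique, so reducibility in $\pstate'$ means the redex only accesses locations in $\dom(\pstate')$.
\item Hence the step the MGC took in $\textit{compose}(\pstate',\vv t)$ is also a step from $\pstate'$ with the same result. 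A location fresh in the big heap is also fresh in $\pstate'$, and the cell at $\ell_{\textit{tr}}$ is untouched.
\end{itemize}
This replaces the syntactic invariant entirely.

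A smaller imprecision concerns the end of the run. Once the main thread becomes $\val$, other threads may keep emitting, so ``it has just dereferenced $\ell_{\textit{tr}}$'' does not hold at the final configuration. You need the stored trace to only grow, which follows from the argument above, so that $\val$ is a prefix of it. Then truncate the simulated adversarial run at the point where its label is $\val$.
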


Finally, we turn to prove \Cref{thm:bdgjst-free}.
\begin{proof}[Proof of \Cref{thm:bdgjst-free}]
Conclusion (1) is proved by the soundness direction of \Cref{thm:sound-complete}.
For conclusion (2), assume $([\function{init}()],\varnothing)\tpsteps([\textit{obj}],\pstate)$ and $(\nil,\nil,\pstate)\advsteps[\textit{obj}.\function{op}/\textit{vr}]{\phytr}(\vv{\pi}, \vv{e}, \pstate')$.
By \Cref{lem:mgc-match-adv}, there exists an execution such that $\textit{MGC}(\phytr)(\function{op},\textit{obj})$ returns $\phytr$.
Further, by \Cref{lem:mgc-spec} and the soundness direction of \Cref{thm:sound-complete}, every result of $\textit{MGC}(\phytr)(\function{op},\textit{obj})$ satisfies $I$.
Therefore, $I(\phytr)$.

For the non-stuckness, also because $\textit{MGC}(\phytr)(\function{op},\textit{obj})$ is safe, every expression in $\ltrans\vv\pi\rtrans\dplus\vv\expr$ is not stuck.
This concludes the non-stuckness of $\vv\expr$, and for $\ltrans\vv\pi\rtrans$, it is not hard to show that the reverse translation preserves non-stuckness.
\end{proof}

\end{document}